\definecolor{refkey}{gray}{.05}
\definecolor{labelkey}{gray}{.75}
\DeclareMathAlphabet{\mathpzc}{OT1}{pzc}{m}{it}
\newtheorem{theorem}{Theorem}
\newtheorem{corollary}[theorem]{Corollary}
\newtheorem{lemma}[theorem]{Lemma}
\newtheorem{proposition}[theorem]{Proposition}
\theoremstyle{definition}
\newtheorem{definition}[theorem]{Definition}
\newtheorem{remark}[theorem]{Remark}
\newcommand{\F}{\mathbb{F}}
\newcommand{\N}{\mathbb{N}}
\newcommand{\R}{\mathbb{R}}
\newcommand{\PP}{\mathcal{P}}
\newcommand{\II}{\mathcal{I}}
\newcommand{\Gr}[3]{\mathfrak{G}_{#1}(#2,#3)}
\newcommand{\Cvs}{\mathcal{C}}
\newcommand{\Svs}{\mathsf{S}}
\newcommand{\Rvs}{\mathcal{R}}
\newcommand{\Vvs}{\mathcal{V}}
\newcommand{\Uvs}{\mathcal{U}}
\newcommand{\mat}[1]{\left(\begin{matrix}#1\end{matrix} \right)}
\newcommand{\rs}{\mathrm{rowsp}}
\newcommand{\rank}{\mathrm{rank}}
\newcommand{\Min}[3]{[{#1}, {#2}]_{#3}}
\newcommand{\subm}[3]{\big(#1;#2\big)_{#3}}
\newcommand{\Al}{\alpha}
\newcommand{\Gl}{\gamma}
\newcommand{\Ll}{\lambda}
\title{An Algebraic Approach for Decoding Spread Codes}
\author[1]{E. Gorla \thanks{The author was supported by the Swiss
    National Science Foundation under grant no. 123393.}}
\author[2]{F. Manganiello \thanks{The author was partially supported by
    the Swiss National Science Foundation under grants no.\ 126948 and
    no.\ 135934.}}
\author[3]{J. Rosenthal \thanks{The author was partially supported by
    the Swiss National Science Foundation under grant no.\ 126948.}}
\affil[1]{Mathematics Institute\\University of Basel}
\affil[2]{Department of Electrical and Computer Engineering\\
  University of Toronto}
\affil[3]{Mathematics Institute\\University of Z\"urich} 
\date{}
\begin{document}
\maketitle







\begin{abstract}
  In this paper we study spread codes: a family of constant-dimension
  codes for random linear network coding. In other words, the
  codewords are full-rank matrices of size $k\times n$ with entries in
  a finite field $\F_q$. Spread codes are a family of optimal codes with
  maximal minimum distance. We give a minimum-distance decoding
  algorithm which requires $\mathcal{O}((n-k)k^3)$ operations over an extension
  field $\F_{q^k}$. Our algorithm is more efficient than the previous
  ones in the literature, when the dimension $k$ of the codewords is
  small with respect to $n$. The decoding algorithm takes advantage of
  the algebraic structure of the code, and it uses original results on
  minors of a matrix and on the factorization of polynomials over
  finite fields.
\end{abstract}

\section{Introduction}

Network coding is a branch of coding theory that arose in 2000 in the
work by Ahlswede, Cai, Li and Yeung \cite{ah00}. While classical
coding theory focuses on point-to-point communication, network coding
focuses on multicast communication, i.e., a source communicating with
a set of sinks. The source transmits messages to the sinks over a
network, which is modeled as a directed multigraph. Some examples of
multicast communication come from Internet protocol applications of
streaming media, digital television, and peer-to-peer networking.
 
The goal in multicast communication is achieving maximal information
rate. Informally, this corresponds to maximizing the amount of
messages per transmission, i.e., per single use of the network. Li,
Cai and Yeung in \cite{li03} prove that maximal information rate can
be achieved in multicast communication using linear network coding,
provided that the size of the base field is large enough.

The algebraic aspect of network coding emerged with the work by
K\"otter and Kschischang \cite{ko08}. The authors introduced a new
setting for random linear network coding: Given the linearity of the
combinations, the authors suggest to employ subspaces of a given
vector space as codewords. Indeed, subspaces are invariant under
taking linear combinations of their elements. Let $\PP(\F_q^n)$ be the
set of all $\F_q$-linear subspaces of $\F_q^n$. They show that
$\PP(\F_q^n)$ is a metric space, with distance
\[d(\Uvs,\Vvs)=\dim(\Uvs+\Vvs)-\dim(\Uvs\cap \Vvs) \mbox{ for all }
\Uvs,\Vvs\in \PP(\F_q^n).\] K\"otter and Kschischang define network
codes to be subsets of $\PP(\F_q^n)$. In particular, they define
constant-dimension codes as subsets, whose elements have all the
same dimension. Notions of errors and erasures compatible with the
new transmission model are introduced in~\cite{ko08}. In addition,
upper and lower bounds for the cardinality of network codes are
established in~\cite{ko08,et08p}.
\medskip 

We review here some of the constructions of constant-dimension codes
present in the literature. The first one is introduced by
K\"otter and Kschischang in~\cite{ko08}. The construction uses
evaluation of linearized polynomials over a subspace. The codes that
one obtains are called Reed-Solomon-like codes, because of the
similarities with Reed-Solomon codes in classical coding theory. Due
to their connection with the rank-metric codes introduced
in~\cite{ga85a}, these codes are also called lifted rank-metric
codes. K\"otter and Kschischang devise a list-$1$
minimum-distance decoding algorithm for their codes. Spread codes,
which are the subject of this paper, were first introduced by the
authors in~\cite{ma08p}. Spread codes contain the codes with maximal
minimum distance in~\cite{ko08}. Another family
of network codes, based on $q$-analogs of designs, appears
in~\cite{ko08p}. Aided by computer search, the authors find
constant-dimension codes based on designs with big cardinality.
Another family of codes is constructed in~\cite{et08u}. The
construction builds on that of Reed-Solomon-like codes, and the
codes that the authors obtain contain them. The construction is also
based on binary constant-weight codes, Ferrer diagrams, and
rank-metric codes. The proposed decoding algorithm operates on two 
levels: First one decodes a constant-weight code, then one applies a
decoding algorithm for rank-metric codes. In~\cite{sk10} Skachek
introduces a family of codes, that is a sub-family of the one
in~\cite{et08u}. In~\cite{ma10p} the authors introduce another family
of codes, which they obtain by evaluating pairs of linearized
polynomials. The codes obtained can be decoded via a list decoding
algorithm, which is introduced in the same work.
\medskip 

This work focuses on \emph{spread codes} which are a family of
constant-dimension codes first introduced in \cite{ma08p}. Spreads of
$\F_q^n$ are a collection of subspaces of $\F_q^n$, all of the same
dimension, which partition the ambient space. Such a family of
subspaces of $\F_q^n$ exists if and only if the dimension of the
subspaces divides $n$. The construction of spread codes is based
on the $\F_q$-algebra $\F_q[P]$ where $P\in GL_k(\F_q)$ is the
companion matrix of a monic irreducible polynomial of degree
$k$. Concretely, we define spread codes as 
\[
\Svs_r=\left\{\rs\mat{A_1 & \cdots & A_r} \in \Gr{\F_q}{k}{n} \mid
  A_i\in \F_q[P]\ \mbox{ for all } i\in \{1,\dots,r\} \right\}
\]
where $\Gr{\F_q}{k}{n}$ is the Grassmannian of all subspaces of
$\F_q^n$ of dimension $k$.

Since spreads partition the ambient space, spread codes are
optimal. More precisely, they have maximum possible minimum distance
$2k$, and the largest possible number of codewords for a code with
minimum distance $2k$. Indeed, they achieve the anticode bound
from~\cite{et08p}. This family is closely related to the family of
Reed-Solomon-like codes introduced in~\cite{ko08}. We discuss the
relation in detail in Section~\ref{ss:RSL}. In Lemma~\ref{n=rk_RSL},
we show how to extend to spread codes the existing decoding algorithms
for Reed-Solomon-like codes and rank-metric codes.

The structure of the spreads that we use in our construction helps us
devise a minimum-distance decoding algorithm, which can correct up to
half the minimum distance of $\Svs_r$. In Lemma~\ref{l:n=rk} we reduce
the decoding algorithm for a spread code $\Svs_r$ to at most $r-1$
instances of the de\-co\-ding algorithm for the special case
$r=2$. Therefore, we focus 
on the design of a decoding algorithm for the spread code 
\[\Svs=\Svs_2=\left\{ \rs\mat{A_1 & A_2}\in \Gr{\F_q}{k}{2k}\mid
  A_1,A_2\in\F_{q}[P]\right\}.\]
\medskip

The paper is structured as follows. In Section~\ref{s:spread_cons} we
give the construction of spread codes, discuss their main
properties. In Subsection~\ref{ss:notation} we introduce the main
notations. In Subsection~\ref{ss:RSL} we discuss the relation between
spread codes and Reed-Solomon-like codes, which is given explicitly in
Proposition~\ref{p:spread_lift}. Proposition~\ref{p:compl} shows how
to apply a minimum-distance decoding algorithm for Reed-Solomon-like
codes to spread codes, and estimates the complexity of decoding a
spread code using such an algorithm.

The main results of the paper are contained in
Section~\ref{s:dec_alg}. In Subsection~\ref{ss:mat} we prove some 
results on matrices, which will be needed for our decoding
algorithm. Our main result is a new minimum-distance decoding 
algorithm for spread codes, which is given in pseudocode as
Algorithm~\ref{a:dec_alg}. The decoding algorithm is based on
Theorem~\ref{t:suitable_poly}, where we explicitly construct the
output of the decoder. Our algorithm can be made more efficient when
the first $k$ columns of the received word are linearly
independent. Proposition~\ref{p:unique_non_sing} and
Corollary~\ref{c:non_sing} contain the theoretical results behind this
simplification, and the algorithm in pseudocode is given in
Algorithm~\ref{a:non_sing}.
Finally, in Section~\ref{sss:complexity} we
compute the complexity of our algorithm. Using the results from
Subsection~\ref{ss:RSL}, we compare it with the complexity of the
algorithms in the literature. It turns out that our algorithm is more
efficient than the all the known ones, provided that $k\ll n$.


\section{Preliminaries and notations}\label{s:spread_cons}


\begin{definition}[{\cite[Section~4.1]{hi98}}]
  A subset $S\subset \Gr{\F_q}{k}{n}$ is a spread if it satisfies
  \begin{itemize}
  \item $\mathcal{U}\cap \mathcal{V} = \{0\}$ for all
    $\mathcal{U},\mathcal{V}\in S$ distinct, and
  \item $\F_q^n=\bigcup_{\mathcal{U}\in S}\mathcal{U}.$
  \end{itemize}
\end{definition}


\begin{theorem}[{\cite[Theorem~4.1]{hi98}}]
  A spread exists if and only if $k\mid n$.
\end{theorem}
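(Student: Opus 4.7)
The plan is to prove the two directions separately. For the ``only if'' direction, I would argue combinatorially: if $S$ is a spread of $\F_q^n$ into $k$-dimensional subspaces, then the sets $\Uvs\setminus\{0\}$ for $\Uvs\in S$ partition $\F_q^n\setminus\{0\}$, giving $|S|(q^k-1)=q^n-1$. Thus $(q^k-1)\mid(q^n-1)$, and by the identity $\gcd(q^k-1,q^n-1)=q^{\gcd(k,n)}-1$ this forces $k\mid n$.

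For the ``if'' direction, I would construct an explicit spread using the $\F_q$-algebra $\F_q[P]$ that underlies the definition of spread codes. Writing $n=rk$, I choose a monic irreducible polynomial of degree $k$ over $\F_q$ and let $P\in GL_k(\F_q)$ be its companion matrix, so that $\F_q[P]\subset \Mat{k}{k}{\F_q}$ is a field isomorphic to $\F_{q^k}$. The candidate spread is
\[
S := \bigl\{ \rs\mat{A_1 & \cdots & A_r} : (A_1,\ldots,A_r)\in\F_q[P]^r\setminus\{0\} \bigr\}.
\]

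Three items must be checked: (i) every element of $S$ is $k$-dimensional, (ii) distinct elements intersect in $\{0\}$, and (iii) the union of $S$ covers $\F_q^n$. Part (i) is immediate: if $(A_1,\ldots,A_r)\neq 0$ then some $A_i$ is invertible in the field $\F_q[P]$, so the block matrix has full row rank $k$. Part (iii) follows because the $\F_q$-linear map $\F_q[P]\to\F_q^k$ sending a matrix to its first row is a bijection, so any $v=(v_1,\ldots,v_r)\in\F_q^n$ is the first row of some $\mat{A_1 & \cdots & A_r}$ with $A_i\in\F_q[P]$.

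The main obstacle is (ii), where the key step is to pass from the $\F_q$-structure on these rowspaces to the richer $\F_q[P]$-structure. Using the commutativity of the field $\F_q[P]$ and the fact that $\F_q^k$ is $1$-dimensional as a right $\F_q[P]$-module (any nonzero vector generates it, since $\F_q^k$ and $\F_q[P]$ have the same cardinality $q^k$), one shows
\[
\rs\mat{A_1 & \cdots & A_r} = \bigl\{(e_1 A_1 C,\ldots,e_1 A_r C): C\in\F_q[P]\bigr\},
\]
which is a $1$-dimensional $\F_q[P]$-subspace of the free module $\F_q[P]^r\cong \F_q^n$. The partition property then reduces to the standard fact that $1$-dimensional subspaces of a vector space over a field pairwise intersect in $\{0\}$ and together cover all nonzero vectors.
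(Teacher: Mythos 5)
Your proof is correct. Note that the paper itself does not prove this statement --- it is quoted from Hirschfeld \cite{hi98} as background --- so there is no in-paper argument to compare against; but your ``if'' direction essentially reconstructs the paper's own spread-code construction (Theorem~\ref{t:def_spread} and the injectivity lemma for $\tilde\varphi$): identifying $\F_q^n$ with $\F_q[P]^r\cong\F_{q^k}^r$ via the ``first row'' bijection $A\mapsto e_1A$ turns the rowspaces $\rs\mat{A_1 & \cdots & A_r}$ into exactly the $\F_{q^k}$-lines of $\F_{q^k}^r$, i.e., the points of $\mathbb{P}^{r-1}(\F_{q^k})$, and the partition property is then immediate. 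The one step worth spelling out is the commutativity trick in (ii): the rowspace is a priori the set of \emph{left} multiples $\{(xA_1,\dots,xA_r):x\in\F_q^k\}$, and you need $x=e_1B$ together with $BA_i=A_iB$ to rewrite it as the set of \emph{right} $\F_q[P]$-multiples of $(e_1A_1,\dots,e_1A_r)$; you say this, and it is exactly right. The ``only if'' direction via $|S|(q^k-1)=q^n-1$ and $\gcd(q^k-1,q^n-1)=q^{\gcd(k,n)}-1$ is the standard counting argument and is also correct (one can equally write $n=ak+b$ and reduce $q^n-1$ modulo $q^k-1$ to force $b=0$, avoiding the gcd identity). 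So the proposal is complete; it simply supplies a proof where the paper gives only a citation.
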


In~\cite{ma08p} we give a construction of spreads suitable for use in
Random Linear Network Coding (RLNC). Our construction is based on
companion matrices.

\begin{definition}
  Let $\F_q$ be a finite field and $p=\sum_{i=0}^k p_ix^i\in \F_q[x]$
  a monic polynomial. The companion matrix of $p$
  \[P=\mat{0 & 1 & 0 & \cdots & 0 \\ 0 & 0 & 1 & & 0 \\
    \vdots & & & \ddots & \vdots \\ 0 & 0 & 0 &  & 1\\
    -p_0 & -p_1 & -p_2 & \cdots & -p_{k-1} }\in \F_q^{k\times k}.\]
\end{definition}
\medskip
Let $n=rk$ with $r>1$, $p\in \F_q[x]$ a monic irreducible polynomial
of degree $k$ and $P\in \F_q^{k\times k}$ its companion
matrix.

\begin{lemma}
  The $\F_q$-algebra $\F_q[P]$ is a finite field, i.e., $\F_q[P]\cong
  \F_{q^k}$.
\end{lemma}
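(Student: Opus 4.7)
The plan is to identify $\F_q[P]$ as a quotient of the polynomial ring $\F_q[x]$ by the ideal generated by $p$, and then to exploit irreducibility of $p$ to conclude that this quotient is a field of the stated order.

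First I would consider the evaluation homomorphism of $\F_q$-algebras
\[
\phi\colon \F_q[x]\longrightarrow \F_q^{k\times k},\qquad f\longmapsto f(P),
\]
whose image is by definition $\F_q[P]$. The key observation is that $p$ is simultaneously the characteristic polynomial and the minimal polynomial of its companion matrix $P$; the first equality is immediate from expanding the determinant of $xI-P$ along the first column (or by induction on $k$), and the second follows because the vectors $e_1,Pe_1,P^2e_1,\dots,P^{k-1}e_1$ are the standard basis vectors and hence linearly independent, so no polynomial of degree $<k$ annihilates $P$. Consequently $\ker\phi = (p)$.

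Next I would invoke irreducibility: since $p$ is a monic irreducible polynomial in the principal ideal domain $\F_q[x]$, the ideal $(p)$ is maximal, so the quotient $\F_q[x]/(p)$ is a field. It has $q^k$ elements because its elements are uniquely represented by polynomials of degree less than $k$ with coefficients in $\F_q$, and any finite field of cardinality $q^k$ is isomorphic to $\F_{q^k}$. Applying the first isomorphism theorem to $\phi$ yields
\[
\F_q[P] \;=\; \mathrm{Im}(\phi) \;\cong\; \F_q[x]/(p) \;\cong\; \F_{q^k},
\]
which is exactly the claim.

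The main obstacle, if any, is the identification of the minimal polynomial of $P$ with $p$ itself; once this is in hand, the rest is formal. I would therefore state this fact as the central step of the argument and justify it briefly by the linear independence of $e_1,Pe_1,\dots,P^{k-1}e_1$, rather than deriving it from Cayley--Hamilton together with an \emph{a priori} degree count.
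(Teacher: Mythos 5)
Your proof is correct, and it is precisely the standard argument that the paper itself does not reproduce — the paper simply records the lemma as well known and points to \cite[page~64]{li94}: the evaluation map $\F_q[x]\to\F_q[P]$, $f\mapsto f(P)$, has kernel $(p)$ because the companion matrix is non-derogatory, and irreducibility of $p$ makes the quotient $\F_q[x]/(p)$ a field with $q^k$ elements. One cosmetic point: with the paper's convention (the coefficients of $p$ placed in the \emph{last row} of $P$), the cyclic-vector justification should be phrased with row vectors, $e_1P^{j}=e_{j+1}$ for $0\le j\le k-1$, rather than with the columns $P^{j}e_1$, which are not the standard basis vectors for this form of the companion matrix; either way the conclusion that no nonzero polynomial of degree less than $k$ annihilates $P$ stands.
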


This is a well-known fact (see \cite[page 64]{li94}).

\begin{lemma}
  Let $\varphi:\F_{q^k}\rightarrow \F_q[P]$ be a ring
  isomorphism. Denote by
  \[\mathbb{P}^{r-1}(\F_{q^k}):=(\F_{q^k}^r\setminus \{0\})/\sim\] the
  projective space, where $\sim$ is the following equivalence
  relation
  \[v\sim w \iff \exists \lambda\in \F_{q^k}^* \mbox{ such that }
  v=\lambda w,\] where $v,w\in \F_{q^k}^r\setminus \{0\}$. Then the
  map
  \[
  \begin{array}{cccc}
    \tilde\varphi: & \mathbb{P}^{r-1}(\F_{q^k}) & \rightarrow &
    \Gr{\F_q}{k}{n} \\
    & [v_1:\dots:v_r] & \mapsto & \rs \mat{\varphi(v_1)& \cdots &
      \varphi(v_r)}.
  \end{array}
  \]
is injective.
\end{lemma}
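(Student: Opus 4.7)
The plan is to first verify that $\tilde\varphi$ is well defined (in particular, that its image really lies in $\Gr{\F_q}{k}{n}$), and then to extract the scalar $\lambda\in\F_{q^k}^*$ relating two preimages from a change-of-basis matrix between two generating matrices of the same row space. The key point throughout is that $\F_q[P]\cong\F_{q^k}$ is a field, so every nonzero element of $\F_q[P]$ is an invertible $k\times k$ matrix.

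First I would check well-definedness. Given $[v_1:\dots:v_r]\in\mathbb{P}^{r-1}(\F_{q^k})$, some coordinate $v_j$ is nonzero, so $\varphi(v_j)\in\F_q[P]$ is a nonzero element of a field, hence an invertible $k\times k$ matrix. Therefore the block matrix $\mat{\varphi(v_1)&\cdots&\varphi(v_r)}$ contains an invertible $k\times k$ submatrix and has rank $k$, so its row space is a point of $\Gr{\F_q}{k}{n}$. Moreover, if $[v_1:\dots:v_r]=[w_1:\dots:w_r]$, there is $\Ll\in\F_{q^k}^*$ with $w_i=\Ll v_i$ for every $i$; then
\[
\mat{\varphi(w_1)&\cdots&\varphi(w_r)}=\varphi(\Ll)\mat{\varphi(v_1)&\cdots&\varphi(v_r)},
\]
and since $\varphi(\Ll)\in GL_k(\F_q)$ the two block matrices have the same row space, so $\tilde\varphi$ does not depend on the chosen representative.

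For injectivity, suppose $\tilde\varphi([v_1:\dots:v_r])=\tilde\varphi([w_1:\dots:w_r])$. Then the two $k\times n$ block matrices have the same row space, so there exists $M\in GL_k(\F_q)$ with
\[
\mat{\varphi(w_1)&\cdots&\varphi(w_r)}=M\mat{\varphi(v_1)&\cdots&\varphi(v_r)}.
\]
After reindexing, we may assume $v_1\neq 0$, so $\varphi(v_1)\in\F_q[P]$ is invertible. Comparing the first blocks gives $M=\varphi(w_1)\varphi(v_1)^{-1}$, and since $\F_q[P]$ is a field this product lies in $\F_q[P]$ and is invertible. Hence $M=\varphi(\mu)$ for some $\mu\in\F_{q^k}^*$. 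Comparing the $i$-th block then yields $\varphi(w_i)=\varphi(\mu)\varphi(v_i)=\varphi(\mu v_i)$, and injectivity of $\varphi$ forces $w_i=\mu v_i$ for every $i$, i.e.\ $[v_1:\dots:v_r]=[w_1:\dots:w_r]$.

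The only mildly delicate step is identifying the linear-algebra change-of-basis matrix $M$ with an element of the subfield $\F_q[P]\subset\F_q^{k\times k}$; this is the main obstacle and is settled precisely because $\F_q[P]$ is closed under inverses, which in turn uses irreducibility of $p$. All other steps are routine once the isomorphism $\F_q[P]\cong\F_{q^k}$ is in hand.
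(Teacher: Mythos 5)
Your proof is correct and follows essentially the same route as the paper's: both arguments start from the change-of-basis matrix $M\in GL_k(\F_q)$ relating the two generating matrices and exploit that $\F_q[P]$ is a field, the only cosmetic difference being that the paper normalizes the first nonzero coordinate of each representative to $1$ to force $M=I$, whereas you identify $M=\varphi(w_1v_1^{-1})$ directly and read off the scalar $\mu$. Your additional check that $\tilde\varphi$ is well defined and lands in $\Gr{\F_q}{k}{n}$ is a welcome extra that the paper leaves implicit.
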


\begin{proof}
  Let $v=[v_1:\dots:v_r],w=[w_1:\dots:w_r]\in
  \mathbb{P}^{r-1}(\F_{q^k})$. If $\tilde\varphi(v)=\tilde\varphi(w)$
  there exists an $M\in GL_k(\F_q)$ such that 
  \begin{eqnarray}
    \mat{\varphi(v_1)& \cdots & \varphi(v_r)} &=&
    M\mat{\varphi(w_1)& \cdots & \varphi(w_r)} \nonumber\\ 
      & =& \mat{M\varphi(w_1)& \cdots &
      M\varphi(w_r)} \label{e:proj}
  \end{eqnarray} 
  Let $i_v,i_w\in \{1,\dots,r\}$ be the least indices such that
  $\varphi(v_{i_v})\neq 0$ and $\varphi(w_{i_w})\neq 0$. From \eqref{e:proj} it
  follows that $i_v=i_w$. Since, without loss of generality, we can
  consider $v_{i_v}=w_{i_w}=1$, it follows that
  $\varphi(v_{i_v})=\varphi(w_{i_w})=I$ and consequently
  $M=I$. Then, \eqref{e:proj} becomes  
  \[\mat{\varphi(v_1)& \cdots & \varphi(v_r)}=
  \mat{\varphi(w_1)& \cdots & \varphi(w_r)}\]
  leading to $v=w$.
\end{proof}

\begin{theorem}[{\cite[Theorem~1]{ma08p}}]\label{t:def_spread}
$\Svs_r:=\tilde\varphi(\mathbb{P}^{r-1}(\F_{q^k}))$ is a spread of
$\F_q^n$ for $n=rk$. 
\end{theorem}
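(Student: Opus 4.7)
The plan is to verify the two defining properties of a spread: pairwise trivial intersection and covering of $\F_q^n$. Throughout I will exploit the key algebraic fact that $\F_q[P]$ is a field, so every nonzero element is an invertible $k\times k$ matrix, and (crucially) any two elements commute.

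First, I would check that $\tilde\varphi$ actually lands in $\Gr{\F_q}{k}{n}$, i.e., that each $\rs\mat{\varphi(v_1) & \cdots & \varphi(v_r)}$ has dimension exactly $k$. Since $v\neq 0$ in $\mathbb{P}^{r-1}(\F_{q^k})$, at least one $\varphi(v_i)$ is nonzero, hence invertible, so the $k\times n$ matrix has full row rank $k$.

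Next, for pairwise trivial intersection, let $v=[v_1:\dots:v_r]$ and $w=[w_1:\dots:w_r]$ be two projective points and suppose a nonzero vector $u\in\tilde\varphi(v)\cap\tilde\varphi(w)$. Then there exist row vectors $x,y\in\F_q^k$, not both zero, with $x\,\varphi(v_i)=y\,\varphi(w_i)$ for all $i$. Normalize so the first nonzero coordinates of $v$ and $w$ are $1$; say they occur at positions $i_v$ and $i_w$. If $i_v<i_w$, then $\varphi(w_{i_v})=0$ forces $x\,\varphi(v_{i_v})=x=0$, and then $y\,\varphi(w_{i_w})=y=0$, contradicting $u\neq 0$. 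The symmetric case rules out $i_w<i_v$. If $i_v=i_w$ the $i_v$-th block gives $x=y$, and then for each $l$ we obtain $x\,\varphi(v_l-w_l)=0$; if $v_l\neq w_l$ the element $\varphi(v_l-w_l)$ is invertible and forces $x=0$, a contradiction. Hence $v=w$, so distinct projective points give subspaces meeting only in $\{0\}$.

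Finally, to see $\F_q^n=\bigcup_{\Uvs\in\Svs_r}\Uvs$, I would simply count. By the previous lemma $\tilde\varphi$ is injective, so $|\Svs_r|=|\mathbb{P}^{r-1}(\F_{q^k})|=(q^{kr}-1)/(q^k-1)$. Each element of $\Svs_r$ contributes $q^k-1$ nonzero vectors, and by the trivial-intersection step these contributions are disjoint. The total number of nonzero vectors covered is therefore
\[
\frac{q^{kr}-1}{q^k-1}\cdot(q^k-1)=q^{kr}-1=q^n-1,
\]
which is exactly $|\F_q^n\setminus\{0\}|$. Together with $0\in\Uvs$ for every $\Uvs$, this gives the covering property.

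The main obstacle is the intersection step; it is the only place where the structure of $\F_q[P]$ (commutativity and the fact that nonzero elements are invertible matrices) is genuinely used, and the case analysis on the normalizing indices $i_v,i_w$ has to be handled carefully so as not to lose injectivity in the "same leading index" subcase. Once that is in place, the counting argument for coverage is essentially automatic.
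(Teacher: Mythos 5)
Your proof is correct. The paper itself does not prove this theorem (it cites \cite[Theorem~1]{ma08p}), but your argument --- well-definedness and trivial pairwise intersection via the normalization on the first nonzero coordinate and the fact that nonzero elements of $\F_q[P]$ are invertible, followed by the counting argument $\frac{q^{rk}-1}{q^k-1}\cdot(q^k-1)=q^n-1$ for coverage --- is the standard proof and matches the normalization technique the paper uses in its injectivity lemma for $\tilde\varphi$. One small remark: you flag commutativity of $\F_q[P]$ as crucial, but your argument never actually uses it; only the invertibility of nonzero elements and the additivity of $\varphi$ are needed.
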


\begin{definition}[{\cite[Definition~2]{ma08p}}]
  We call \emph{spread codes} of $\Gr{\F_q}{k}{n}$ the subsets
  $\Svs_r\subset\Gr{\F_q}{k}{n}$ from Theorem \ref{t:def_spread}.
\end{definition}

\begin{remark}
Notice that
  \begin{eqnarray*}
    \Svs_r=\left\{\rs\mat{A_1 & \cdots & A_r} \in \Gr{\F_q}{k}{n} \mid
      A_i\in \F_q[P]\ \mbox{ for all }
      i\in \{1,\dots,r\} \right\} 
  \end{eqnarray*}
In order to have a unique representative for the elements of $\Svs_r$,
we bring the matrices $\mat{A_1 & \cdots & A_r}$ in row reduced
echelon form.
\end{remark}

\begin{lemma}[{\cite[Theorem 1]{ma08p}}]
  Let $\Svs_r\subset\Gr{\F_q}{k}{n}$ be a spread code. Then
  \begin{enumerate}
  \item $d(\mathcal{U},\mathcal{V})=d_{\min}(\Svs_r)=2k$, for all
    $\mathcal{U},\mathcal{V}\in \Svs_n$ distinct, i.e., the code has
    maximal minimum distance, and
  \item $|\Svs_r|=\frac{q^n-1}{q^k-1}$, i.e., the code has maximal
    cardinality with respect to the given minimum distance.
  \end{enumerate}
\end{lemma}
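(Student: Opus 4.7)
The plan is to derive both statements directly from the spread property established in Theorem~\ref{t:def_spread}, with no heavy machinery required.

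For the distance claim, I would fix two distinct codewords $\Uvs, \Vvs \in \Svs_r$. By the spread property, $\Uvs \cap \Vvs = \{0\}$, so $\dim(\Uvs \cap \Vvs) = 0$, and from the Grassmann formula $\dim(\Uvs + \Vvs) = \dim \Uvs + \dim \Vvs - \dim(\Uvs \cap \Vvs) = k + k - 0 = 2k$. Plugging into the definition of the subspace distance yields
\[
d(\Uvs, \Vvs) = \dim(\Uvs + \Vvs) - \dim(\Uvs \cap \Vvs) = 2k,
\]
so every pair of distinct codewords achieves distance exactly $2k$, hence $d_{\min}(\Svs_r) = 2k$. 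To justify that $2k$ is indeed the maximum possible minimum distance for a constant-dimension code in $\Gr{\F_q}{k}{n}$, one observes that for any two $k$-dimensional subspaces $\Uvs, \Vvs$ one has $d(\Uvs,\Vvs) = 2(k - \dim(\Uvs \cap \Vvs)) \le 2k$, with equality iff $\Uvs \cap \Vvs = \{0\}$.

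For the cardinality claim, I would use a direct counting argument based on the partition property of the spread. Each $\Uvs \in \Svs_r$ has $|\Uvs| = q^k$ elements, exactly one of which is the zero vector, so it contributes $q^k - 1$ nonzero vectors. Distinct elements of the spread share only $0$, so the nonzero vectors they contribute are disjoint; and since $\bigcup_{\Uvs \in \Svs_r} \Uvs = \F_q^n$, every nonzero vector of $\F_q^n$ is accounted for exactly once. Therefore
\[
|\Svs_r| \cdot (q^k - 1) = q^n - 1,
\]
which gives $|\Svs_r| = (q^n - 1)/(q^k - 1)$. Alternatively, one could simply appeal to the injection $\tilde\varphi$ from the previous lemma and compute $|\mathbb{P}^{r-1}(\F_{q^k})| = (q^{rk} - 1)/(q^k - 1)$.

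There is no real obstacle here: both parts are immediate from the definition of a spread combined with the Grassmann formula. The only small thing to be careful about is the sanity-check that $2k$ is actually the maximum possible value of the subspace distance on $\Gr{\F_q}{k}{n}$, which justifies calling the minimum distance \emph{maximal}; and the verification that the spread partition is a disjoint union of sets all containing~$0$, so that the counting must be performed on nonzero vectors.
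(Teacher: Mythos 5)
Your argument is correct; the paper itself gives no proof of this lemma (it is quoted from [MGR08]), and what you wrote is exactly the standard derivation from the spread property: trivial pairwise intersections give $d(\Uvs,\Vvs)=2k$ for all distinct pairs, and counting nonzero vectors in the partition gives $|\Svs_r|(q^k-1)=q^n-1$. One small point you leave implicit is the ``maximal cardinality'' gloss in item~2: the same counting argument also proves it, since any constant-dimension code in $\Gr{\F_q}{k}{n}$ with minimum distance $2k$ has pairwise trivially intersecting codewords, hence at most $(q^n-1)/(q^k-1)$ of them --- so no appeal to the anticode bound is actually needed.
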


\begin{remark}
In~\cite{tr10p} the authors show that spread codes are an example of
\emph{orbit codes}. Moreover, in~\cite{tr11} it is shown that some
spread codes are cyclic orbit codes under the action of the cyclic
group generated by the companion matrix of a primitive polynomial. 
\end{remark}

\begin{definition}
A vector space $\Rvs\in \Gr{\F_q}{\tilde{k}}{rk}$ is \emph{uniquely
  decodable} by the spread code $\Svs_r\subset\Gr{\F_q}{k}{n}$ if
\begin{eqnarray*}
\mbox{there exists a } \Cvs\in \Svs_r \mbox{ such that } d(\Rvs,\Cvs)<
\frac{d_{\min}(\Svs_r)}2=k.
\end{eqnarray*} 
\end{definition}

In Section \ref{s:dec_alg} we devise a minimum-distance decoding algorithm
for uniquely decodable received spaces. 

\subsection{Further notations}\label{ss:notation}

We introduce in this subsection the notation we use in the paper.

\begin{definition}
  Let $s\in \N$ with $s<k$ and denote by
  $L_{\F_{q^n}}^s\subset \F_{q^n}[x]$ the set of linearized
  polynomials of degree less than $q^s$. Equivalently, $f\in
  L_{\F_{q^n}}^s$  if and only if $f=\sum_{i=0}^{s-1}f_ix^{q^i}$ for
  some $f_i\in\F_{q^n}$.
\end{definition}

In the rest of the work we denote $q$-th power exponents such as
$x^{q^i}$ with $x^{[i]}$. 

\medskip

Let $\F_q$ be a finite field with $q$ elements, and
  let $p\in \F_q[x]$ be a monic irreducible polynomial of degree
  $k>1$. $P\in GL_k(\F_q)$ denotes the companion matrix of $p$, and
  $S\in GL_k(\F_{q^k})$ is a matrix which diagonalizes $P$.

\medskip

We denote by $\Delta(x):=\mathrm{diag}(x,x^{[1]},\dots,x^{[k-1]})\in
  \F_q[x]^{k\times k}$ a diagonal matrix, whose entry in position
  $(i+1,i+1)$ is $x^{[i]}$ for $i=0,\ldots,k-1$.

\medskip

  Let $M$ be a matrix of size $k\times k$ and let $J=(j_1,\dots,j_s)$,
  $L=(l_1,\dots,l_s)\in \{1,\dots,k\}^s$. $[J;L]_M$ denotes the
  minor of the matrix $M$ corresponding to the submatrix
  $\subm{J}{L}{M}$ with row indices $j_1,\dots,j_s$ and column indices
  $l_1,\dots,l_s$. We skip the suffix $M$ when the matrix is clear
  from the context.

\medskip

We introduce some operations on tuples. Let $K=(i_1,\dots,i_s)\in \{1,\dots,k\}^s$. 
\begin{itemize}
\item $i\in K$ means that $i\in \{i_1,\dots,i_s\}$.
\item $L\subset K$ means that $L=(i_{l_1},\dots, i_{l_k})$ for $1\leq
  l_1<\dots<l_k\leq s$.
\item $|K|:=s$ is the length of the tuple.
\item $K\cap J$ denotes the $L\subset K,J$ such that $|L|$ is maximal.
\item If $J=(j_1,\dots,j_r)$ then $I\cup
  J:=(i_1,\dots,i_s,j_1,\dots,j_r)$, i.e., $\cup$ denotes the
  concatenation of tuples.
\item If $J\subset K$ then $K\setminus J$ denotes the $L\subset K$
  with $|L|$ maximal such that $J\cap L =\emptyset$ where
  $\emptyset$ is the empty tuple.
\item $\min K=\min\{i\mid i\in K\}$, with the convention that $\min
  \emptyset > \min K$ for any $K$.
\end{itemize}

\medskip

We define the \emph{non diagonal rank} of a matrix as follows.

\begin{definition}
  Let $N\in \F_{q}^{k\times k}$. We define the \emph{non diagonal rank} of
  $N$ as 
  \[\mathrm{ndrank}(N):=\min\{t\in \N \mid [J,L]_N=0 \ \mbox{ for all } J,L\in
  \{1,\dots,k\}^t, \ J\cap L=\emptyset\}-1.\]
\end{definition}

\medskip

At last, algorithms' complexities are expressed as $\mathcal{O}(\F;p(n,k))$,
which corresponds to performing $\mathcal{O}(p(n,k))$ operations over
a field $\F$, where $n,k$ are given parameters.

\subsection{Relation with Reed-Solomon-like codes}\label{ss:RSL}

Reed-Solomon-like codes, also called lifted rank-metric codes, are a
class of constant-dimension codes introduced in \cite{ko08}. They are
strictly related to maximal rank
distance codes as introduced in \cite{ga85a}. We give here an
equivalent definition of these codes. 

\begin{definition}
  Let $\F_q\subset \F_{q^n}$ be finite fields. Fix some
  $\F_q$-linearly independent elements $\Al_1,\dots,\Al_k\in
  \F_{q^n}$.  Let $\psi:\F_{q^n}\rightarrow \F_q^n$ be an
  isomorphism of $\F_q$-vector spaces. A \emph{Reed-Solomon-like (RSL) code}
  is defined as
  \[RSL_{\F_{q^n}}^s:=\left\{ \rs \left. \left(I\ 
      \begin{array}{c}
        \psi(f(\Al_1)) \\ \vdots \\ \psi(f(\Al_k))
      \end{array}
\right)\; \right| \; f\in L_{\F_{q^n}}^s
\right\}\subseteq \Gr{\F_q}{k}{k+n}.\]
\end{definition}

The following proposition establishes a relation between spread codes and
RSL codes. The proof is easy, but rather technical, hence we omit it.

\begin{proposition}\label{p:spread_lift}
  Let $n=rk$, $\F_q\subseteq \F_{q^k}\subseteq \F_{q^n}$ finite fields,
  and $P\in GL_k(\F_q)$ the companion matrix of a monic irreducible
  polynomial $p\in \F_q[x]$ of degree $k>0$. Let $\Ll\in \F_{q^k}$ be
  a root of $p$, $\mu_1,\dots,\mu_r\in \F_{q^n}$ a basis of $\F_{q^n}$
  over $\F_{q^k}$. Moreover, let $\psi:\F_{q^n}\rightarrow \F_q^n$ be
  the isomorphism of $\F_q$-vector spaces which maps the basis
  $\displaystyle (\Ll^i\mu_j)_{\substack{0\leq j\leq k-1\\ 1\leq i\leq
      r}}$ to the standard basis of $\F_{q^n}$ over $\F_q$.  Then for
  every choice of $A_0,\dots,A_{r-1}\in \F_q[P]$ there exists a unique
  linearized polynomial of the form $f=ax$ with $a\in \F_{q^n}$ such
  that
  \[(A_0 \ \cdots \  A_{r-1})=\left(
    \begin{array}{c}
      \psi(f(1))\\ \psi(f(\Ll)) \\ \vdots \\ 
      \psi(f(\Ll^{k-1}))
    \end{array}
  \right).\] The constant $a$ is $a=\psi^{-1}(v)$ where
  $v\in \F_q^n$ is the first row of $(A_0 \ \cdots \
  A_{r-1})$.
\end{proposition}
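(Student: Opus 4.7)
The plan is as follows. Uniqueness is immediate: if some $f(x)=ax$ with $a\in\F_{q^n}$ realizes the block $(A_0\ \cdots\ A_{r-1})$ as described, then reading off the first row ($j=0$) forces $\psi(f(1))=\psi(a)=v$, whence $a=\psi^{-1}(v)$. This already establishes the explicit formula at the end of the statement, so what really needs proof is existence for this specific choice of $a$.

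For existence, I would exploit the structural isomorphism $\F_q[x]/(p(x))\xrightarrow{\sim}\F_q[P]$, $g\mapsto g(P)$, together with its ``row'' incarnation. Concretely, the right action $u\mapsto uP$ on $\F_q^k$ corresponds, under the identification of $\F_q^k$ with $\F_q[\Ll]$ via the basis $1,\Ll,\dots,\Ll^{k-1}$, to multiplication by $\Ll$. Consequently, for every $g\in\F_q[x]$ of degree less than $k$, the first row of $g(P)$ is the coordinate vector of $g(\Ll)$, and the $(j+1)$-th row of $g(P)$ is the coordinate vector of $g(\Ll)\cdot\Ll^j$. This is the key technical ingredient, and I would state and verify it as a short preliminary computation using the definition of the companion matrix.

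With that in hand, I would write each block $A_i=g_i(P)$ with $g_i\in\F_q[x]_{<k}$, and set $a_i:=g_i(\Ll)\in\F_q[\Ll]=\F_{q^k}$. Since $\mu_1,\dots,\mu_r$ is an $\F_{q^k}$-basis of $\F_{q^n}$, define
\[
a:=\sum_{i=0}^{r-1}a_i\,\mu_{i+1}\in\F_{q^n},\qquad f(x):=ax.
\]
For each $j\in\{0,\dots,k-1\}$ compute
\[
f(\Ll^j)=a\,\Ll^j=\sum_{i=0}^{r-1}(a_i\Ll^j)\,\mu_{i+1}.
\]
Expanding each $a_i\Ll^j\in\F_{q^k}$ in the $\F_q$-basis $1,\Ll,\dots,\Ll^{k-1}$ yields, by the companion-matrix fact above, exactly the $(j+1)$-th row of $A_i$. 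Because $\psi$ was defined to send $\Ll^{\ell}\mu_{i+1}$ to the corresponding standard basis vector of $\F_q^n$, it follows that $\psi(f(\Ll^j))$ is the concatenation of the $(j+1)$-th rows of $A_0,\dots,A_{r-1}$, i.e., the $(j+1)$-th row of $(A_0\ \cdots\ A_{r-1})$. Taking $j=0$ confirms $a=\psi^{-1}(v)$, matching the uniqueness formula.

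The only genuine obstacle is bookkeeping: one must align the ordering convention used in the definition of $\psi$ (which maps the mixed basis $\Ll^{\ell}\mu_{i+1}$ to the standard basis of $\F_q^n$) with the block decomposition of $(A_0\ \cdots\ A_{r-1})$ into $r$ blocks of $k$ columns, so that the $i$-th block records the $\mu_{i+1}$-component. Once this indexing is fixed, the proof is a direct verification, which is presumably why the authors chose to omit it.
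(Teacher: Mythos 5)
Your argument is correct. The paper explicitly omits the proof of Proposition~\ref{p:spread_lift} (``easy, but rather technical''), so there is no official argument to compare against, but your sketch supplies exactly the verification the authors had in mind. The one ingredient you defer --- that the $(j+1)$-th row of $g(P)$ is the coordinate vector of $\Ll^{j}g(\Ll)$ in the basis $1,\Ll,\dots,\Ll^{k-1}$ --- does check out: with the paper's convention for the companion matrix, right multiplication by $P$ sends the standard row vector $e_{j+1}$ to $e_{j+2}$ for $j<k-1$ and sends $e_k$ to $(-p_0,\dots,-p_{k-1})$, which under the identification $e_{j+1}\leftrightarrow \Ll^{j}$ is exactly multiplication by $\Ll$ (since $\Ll^{k}=-\sum_{j=0}^{k-1}p_j\Ll^{j}$); iterating gives the claim for every $g\in\F_q[x]$, hence for each $A_i=g_i(P)$. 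With that, your choice $a=\sum_{i}g_i(\Ll)\mu_{i+1}$ reproduces row by row the matrix $(A_0\ \cdots\ A_{r-1})$, and uniqueness together with the formula $a=\psi^{-1}(v)$ follows from the $j=0$ row exactly as you say. The only caveat is the one you already flag: the basis $(\Ll^{\ell}\mu_{i+1})$ in the definition of $\psi$ (whose index placement in the statement is in fact garbled --- it should read $\Ll^{j}\mu_i$, as in Corollary~\ref{c:connection_RSL}) must be ordered lexicographically in $(i,\ell)$ so that it matches the block structure of $(A_0\ \cdots\ A_{r-1})$; once that convention is fixed, the proof is complete.
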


The proposition allows us to relate our spread codes to some RSL
codes. The following corollary makes the connection explicit. We use
the notation of Proposition~\ref{p:spread_lift}. 

\begin{corollary}\label{c:connection_RSL}
For each $1\leq i\leq r-1$, let $\mu_{1,i},\ldots,\mu_{r-i,i}$ be a basis of
$\F_{q^{(r-i)k}}$ over $\F_{q^k}$. Let
$\psi_i:\F_{q^{(r-i)k}}\rightarrow \F_q^{(r-i)k}$ denote the 
isomorphism of vector spaces that maps the basis 
$\displaystyle
(\Ll^j\mu_{l,i})_{\substack{0\leq j\leq k-1\\ 1\leq l\leq 
      r-i}}$ to the standard basis of $\F_q^{(r-i)k}$. 
Then
\begin{multline*}
\Svs_r=\bigcup_{i=1}^{r-1}\left\{\left.\rs\left(\underbrace{0 \ 
        \cdots \  0}_{i-1 \mbox{ times}}\  I\ 
      \begin{array}{c} \psi_i(f(1)) \\ \vdots \\
        \psi_i(f(\Ll^{k-1})) \end{array}\right)\;\right| \; f=ax, \;
  a\in \F_{q^{(r-i)k}}\right\}\\
\bigcup\left\{ \left(\underbrace{0 \ 
        \cdots \  0}_{r-1 \mbox{ times}}\  I\right)\right\}.
\end{multline*}
\end{corollary}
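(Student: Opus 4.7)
The plan is to reduce an arbitrary element of $\Svs_r$ to its row-reduced echelon form (RREF) and then recognize the trailing blocks as the evaluation table described in Proposition~\ref{p:spread_lift}. The starting observation is that $\F_q[P]\cong \F_{q^k}$ is a field, so any $A\in \F_q[P]$ is either zero or invertible. Hence, given any representative $\mat{A_1 & \cdots & A_r}$ of a codeword in $\Svs_r$, I would let $i=i_0\in\{1,\dots,r\}$ be the smallest index for which $A_{i_0}\neq 0$ and left-multiply the matrix by $A_{i_0}^{-1}\in \F_q[P]$. Since $\F_q[P]$ is closed under multiplication, the resulting matrix is of the shape $\mat{0 & \cdots & 0 & I & B_{i_0+1} & \cdots & B_r}$ with $B_j\in\F_q[P]$, and a direct inspection shows that this is already in RREF (the $k$ pivots occupy consecutive columns of the identity block, the columns to the left are zero, and the columns above each pivot vanish). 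By uniqueness of the RREF, every element of $\Svs_r$ has a representative of exactly this shape for a unique $i_0\in\{1,\dots,r\}$, and the extreme case $i_0=r$ gives the isolated singleton at the end of the claimed union.

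Next I would apply Proposition~\ref{p:spread_lift} with $r$ replaced by $r-i_0$, the field $\F_{q^n}$ replaced by $\F_{q^{(r-i_0)k}}$, and the basis $\mu_1,\dots,\mu_r$ replaced by $\mu_{1,i_0},\dots,\mu_{r-i_0,i_0}$ (which by hypothesis is a basis of $\F_{q^{(r-i_0)k}}$ over $\F_{q^k}$). The proposition, applied to the tuple $(B_{i_0+1},\dots,B_r)\in (\F_q[P])^{r-i_0}$, produces a unique $a\in\F_{q^{(r-i_0)k}}$ such that setting $f=ax$ yields
\[
\mat{B_{i_0+1} & \cdots & B_r}=\left(\begin{array}{c}\psi_{i_0}(f(1))\\ \vdots \\ \psi_{i_0}(f(\Ll^{k-1}))\end{array}\right).
\]
This identification is exactly what appears inside the $i=i_0$ summand of the right-hand side, so taking the union over $i\in\{1,\dots,r-1\}$ together with the singleton for $i=r$ shows that $\Svs_r$ is contained in the right-hand side.

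For the reverse inclusion, a matrix of the form $\mat{0 & \cdots & 0 & I & \psi_i(f(1))^\top & \cdots}$ built from $f=ax$ with $a\in\F_{q^{(r-i)k}}$ lies in $\Svs_r$ again by Proposition~\ref{p:spread_lift} applied to $r-i$ blocks: the evaluation rows group into blocks in $\F_q[P]$. A short cardinality check (both $\F_{q^{(r-i)k}}$ and $(\F_q[P])^{r-i}$ have size $q^{(r-i)k}$) confirms that the correspondence of the proposition is bijective, so the parameterization is exhaustive. The main obstacle I expect is purely bookkeeping: carefully matching the indexing $(\Ll^j\mu_{l,i})$ in Proposition~\ref{p:spread_lift} against its restriction to $r-i$ blocks, and making sure that after the normalization by $A_{i_0}^{-1}$ the resulting trailing blocks are the ones to which the proposition applies. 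The algebraic content—that RREF forces a leading identity preceded by zero blocks, and that the tail is captured by a single evaluation map $f=ax$—is already packaged in Proposition~\ref{p:spread_lift}, so no new computation is required beyond this.
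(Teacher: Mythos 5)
Your argument is correct and is exactly the route the paper intends: the paper itself gives no details, stating only that the corollary ``readily follows from Proposition~\ref{p:spread_lift},'' and your normalization by $A_{i_0}^{-1}$ to reach the row reduced echelon form followed by an application of the proposition to the trailing $r-i_0$ blocks (with the cardinality count giving the reverse inclusion) is the natural way to fill in those details. No gaps.
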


Corollary~\ref{c:connection_RSL} readily follows from
Proposition~\ref{p:spread_lift}. 

The connection that we have established with RSL codes allows us to
extend any minimum-distance decoding algorithm for RSL codes to a
minimum-distance decoding algorithm for spread codes. We start with a
key lemma.

\begin{lemma}\label{n=rk_RSL}
  Let $\Svs_r$ be a spread code, and $\Rvs=\rs\mat{R_1& \cdots & R_r}\in
  \Gr{\F_q}{\tilde{k}}{n}$ for some $\tilde{k}\leq k$. Assume there
  exists a $\Cvs=\rs\mat{C_1 & \cdots & C_r} \in \Svs_r$ such that
  $d(\Rvs,\Cvs)<k$. Let 
\[i:=\min\left\{j\in \{1,\dots, r\}\mid 
  \rank(R_j)>\frac{\tilde{k}-1}2\right\}.\] It holds that:
  \begin{itemize}
  \item $C_j=0$ for $1\leq j<i$,
  \item $C_i=I$, and
  \item $d(\rs\mat{R_i & R_{i+1} & \cdots & R_r},\rs\mat{I & C_{i+1} &\cdots
    & C_r})< k$.
  \end{itemize}
\end{lemma}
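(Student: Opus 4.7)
The plan is to convert the distance bound $d(\Rvs,\Cvs)<k$ into a dimensional bound on $\Rvs\cap\Cvs$, and then extract rank information about the blocks $R_j$ by means of block projections. Concretely, the formula $d(\Rvs,\Cvs)=\dim\Rvs+\dim\Cvs-2\dim(\Rvs\cap\Cvs)$ together with $\dim\Rvs=\tilde{k}$ and $\dim\Cvs=k$ immediately gives $\dim(\Rvs\cap\Cvs)>\tilde{k}/2$.

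First I would use Theorem~\ref{t:def_spread} to put $\Cvs$ in reduced row echelon form: since $\Cvs\in\Svs_r$, its echelon form is $\mat{0 & \cdots & 0 & I & C_{j_0+1} & \cdots & C_r}$ for some index $j_0\in\{1,\dots,r\}$ with $C_{j_0+1},\dots,C_r\in\F_q[P]$, as can be seen by scaling a projective representative $[v_1:\cdots:v_r]$ so that its first nonzero entry maps to $I$ under $\varphi$. The first two bullets of the lemma thus reduce to showing $j_0=i$.

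To this end, consider the block projections $\pi_j:\F_q^n\to\F_q^k$ onto the $j$-th block of $k$ coordinates. Since $C_{j_0}=I$, the restriction $\pi_{j_0}|_{\Cvs}$ is an isomorphism onto $\F_q^k$, hence injective on $\Rvs\cap\Cvs\subseteq\Cvs$; this gives $\rank(R_{j_0})\ge\dim(\Rvs\cap\Cvs)>\tilde{k}/2>(\tilde{k}-1)/2$. Conversely, for $j<j_0$ every vector of $\Cvs$ has zero $j$-th block, so $\Rvs\cap\Cvs\subseteq\ker(\pi_j|_{\Rvs})$, yielding $\rank(R_j)=\tilde{k}-\dim\ker(\pi_j|_{\Rvs})\le\tilde{k}-\dim(\Rvs\cap\Cvs)<\tilde{k}/2$. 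Because $\rank(R_j)$ is an integer, this sharpens to $\rank(R_j)\le(\tilde{k}-1)/2$. Combining the two estimates identifies $j_0$ as the smallest $j$ with $\rank(R_j)>(\tilde{k}-1)/2$, i.e.\ $j_0=i$.

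For the last bullet, I would let $\pi:\F_q^n\to\F_q^{(r-i+1)k}$ be the projection onto the last $r-i+1$ blocks. Then $\pi(\Rvs)=\rs\mat{R_i & \cdots & R_r}=:\Rvs'$, and the vanishing of the first $i-1$ blocks of $\Cvs$ ensures that $\pi|_{\Cvs}$ is injective and maps $\Cvs$ isomorphically onto $\rs\mat{I & C_{i+1} & \cdots & C_r}=:\Cvs'$. Applying this injectivity to $\Rvs\cap\Cvs$ yields $\dim(\Rvs'\cap\Cvs')\ge\dim(\Rvs\cap\Cvs)>\tilde{k}/2\ge\dim(\Rvs')/2$, and the subspace distance formula then gives $d(\Rvs',\Cvs')<k$. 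I do not foresee a real obstacle; the only mildly delicate step is the integer-rounding passage $\rank(R_j)<\tilde{k}/2\Rightarrow\rank(R_j)\le(\tilde{k}-1)/2$, which is exactly what lets the argument interface cleanly with the definition of $i$.
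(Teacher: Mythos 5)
Your proof is correct and takes essentially the same route as the paper: both arguments hinge on converting $d(\Rvs,\Cvs)<k$ into $\dim(\Rvs\cap\Cvs)>\tilde k/2$ and then reading off the block ranks, the paper via Lemma~\ref{l:n=rk} together with the stated inequality $d(\Cvs,\Rvs)\ge d(\rs\mat{C_i&\cdots&C_r},\rs\mat{R_i&\cdots&R_r})$, you via the coordinate projections $\pi_j$. Your version is in fact slightly more self-contained, since it supplies the justification (injectivity of the projection on $\Cvs$, hence on $\Rvs\cap\Cvs$) for that distance inequality, which the paper leaves as an unproved observation, and it makes explicit the integer-rounding step $\rank(R_j)<\tilde k/2\Rightarrow\rank(R_j)\le(\tilde k-1)/2$ needed to match the definition of $i$.
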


\begin{proof}
The result follows from Lemma~\ref{l:n=rk} and the observation
that $$d(\Cvs,\Rvs)\geq
d(\rs\mat{C_i&\cdots&C_r},\rs\mat{R_i&\cdots&R_r}).$$ 
\end{proof}

In the next proposition, we use Corollary~\ref{c:connection_RSL} and
Lemma~\ref{n=rk_RSL} to adapt to spread codes any decoding algorithm
for RSL codes. In particular, we apply our results to the algorithms
contained in~\cite{ko08} and~\cite{si08a}, and we give the complexity
of the resulting algorithms for spread codes. 

\begin{proposition}\label{p:compl}
  Any minimum-distance decoding algorithm for RSL codes may be
  extended to a minimum-distance decoding algorithm for spread
  codes. In particular, the algorithms described in~\cite{ko08}
  and~\cite{si08a} can be extended to minimum-distance decoding
  algorithms for spread codes, with complexities
  $\mathcal{O}(\F_{q^{n-k}};n^2)$ for the former and
  $\mathcal{O}(\F_{q^{n-k}};k(n-k))$ for the latter.
\end{proposition}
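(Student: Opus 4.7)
The strategy is to reduce decoding in $\Svs_r$ to a single invocation of a decoder for one of the RSL codes identified in Corollary~\ref{c:connection_RSL}. Given a uniquely decodable received space $\Rvs = \rs\mat{R_1 & \cdots & R_r}$ in $\Gr{\F_q}{\tilde k}{n}$ with $\tilde k \leq k$, I first compute the ranks of the blocks $R_1, \dots, R_r$ over $\F_q$ and determine the index
\[
i := \min\left\{ j \in \{1,\dots,r\} \;\middle|\; \rank(R_j) > \tfrac{\tilde k - 1}{2} \right\}.
\]
Lemma~\ref{n=rk_RSL} pins down the first $i$ blocks of any nearest codeword $\Cvs$ to $C_1 = \cdots = C_{i-1} = 0$ and $C_i = I$, and guarantees that $\rs\mat{R_i & \cdots & R_r}$ lies at distance less than $k$ from $\rs\mat{I & C_{i+1} & \cdots & C_r}$ in $\Gr{\F_q}{\tilde k}{(r-i+1)k}$.

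By the $i$-th stratum in the decomposition of Corollary~\ref{c:connection_RSL}, $\rs\mat{I & C_{i+1} & \cdots & C_r}$ is precisely a codeword of the RSL code obtained by evaluating linearized polynomials $f = ax$ with $a \in \F_{q^{(r-i)k}}$ at $1, \Ll, \dots, \Ll^{k-1}$ and then applying $\psi_i$. Feeding $\rs\mat{R_i & \cdots & R_r}$ to any minimum-distance decoder for this RSL code returns the unique $f = ax$ realizing $\Cvs$, from which the blocks $C_{i+1}, \dots, C_r$ are read off as $\psi_i(f(\Ll^j))$ for $j = 0, \dots, k-1$. Assembling $\Cvs = \rs\mat{0 \ \cdots \ 0 \ I \ C_{i+1} \ \cdots \ C_r}$ yields the desired codeword, and correctness is immediate from Lemma~\ref{n=rk_RSL} together with the explicit bijection of Corollary~\ref{c:connection_RSL}.

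For the complexity, the preprocessing step (computing the $r$ block ranks and identifying $i$) costs $\mathcal{O}(\F_q; nk^2)$ and is negligible. The dominant cost is the single call to the RSL decoder, whose parameters are dimension $k$ and length $k + (r-i)k$ over the extension $\F_{q^{(r-i)k}}$. The worst case occurs at $i = 1$, where the extension field becomes $\F_{q^{(r-1)k}} = \F_{q^{n-k}}$ and the length is $n$. Substituting these parameters into the complexity bounds reported in~\cite{ko08} and~\cite{si08a} yields $\mathcal{O}(\F_{q^{n-k}}; n^2)$ and $\mathcal{O}(\F_{q^{n-k}}; k(n-k))$, respectively. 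The main obstacle will be verifying that the cited RSL decoders do attain these bounds under our parametrization; this is a careful but routine translation rather than new mathematical content, since the algorithms in~\cite{ko08,si08a} are stated for generic RSL parameters and our $(k,(r-i)k)$ instance is a direct specialization.
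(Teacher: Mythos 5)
Your proposal follows the paper's own argument exactly: use Lemma~\ref{n=rk_RSL} to locate the identity block via block ranks (cost $\mathcal{O}(\F_q;nk^2)$, negligible), then invoke the RSL decoder from Corollary~\ref{c:connection_RSL} on the truncated word, with the decoder call dominating the complexity. Your added remark that the worst case $i=1$ gives the extension field $\F_{q^{n-k}}$ and length $n$ is a correct and slightly more explicit justification of the stated bounds than the paper provides.
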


\begin{proof}
Suppose we are given a minimum-distance decoding algorithm for RSL
codes. We construct a minimum-distance decoding algorithm for spread
codes as follows: Let $\Rvs=\rs\mat{R_1& \cdots & R_r}\in\Gr{\F_q}{k}{n}$ be the
received word, and assume that there exists a $\Cvs=\rs\mat{C_1 &
  \cdots & C_r} \in \Svs_r$ such that $d(\Rvs,\Cvs)<k$.
First, one computes the rank of $R_1,R_2,\ldots$ until one finds an
$i$ such that $\rank(R_i)>(k-1)/2$, $\rank(R_j)\leq (k-1)/2$ for
$j<i$. Thanks to Lemma~\ref{n=rk_RSL}, one knows that $C_j=0$ for
$j<i$ and $C_i=I$. Moreover, one has 
$$
d(\rs\mat{R_i & R_{i+1} & \cdots & R_r},\rs\mat{I & C_{i+1} &\cdots
    & C_r})< k
$$ 
Therefore, one can apply the minimum-distance decoding algorithm for
RSL codes to the received word $\rs\mat{R_i & R_{i+1} & \cdots & R_r}$
in order to compute $C_{i+1},\ldots,C_r$.

Assume now that one uses as minimum-distance decoder for RSL codes
either the decoding algorithm from \cite{ko08}, or the one
from~\cite{si08a}. The complexity of computing the rank of
$R_1,\ldots, R_i$ by computing row reduced echelon forms is
$\mathcal{O}(\F_q;nk^2)$. The complexity of the decoding algorithm for RSL
codes is $\mathcal{O}(\F_{q^{n-k}};n^2)$ for the one in~\cite{ko08} and
$\mathcal{O}(\F_{q^{n-k}};k(n-k))$ for the one in~\cite{si08a}. The complexity
of the decoding algorithm is the dominant term in the complexity
estimate. 
\end{proof}
\medskip

It is well known that RSL codes are strictly related to the
rank-metric codes introduced in~\cite{ga85a}. Although the rank metric
on rank-metric codes is equivalent to the subspace distance on RSL
codes, the minimum-distance decoding problem in the former is
not equivalent to the one in the latter. In~\cite{si08a} the 
authors introduced the \emph{Generalized Decoding Problem for
  Rank-Metric Codes}, which is equivalent to the minimum-distance
decoding problem of RSL codes. Decoding algorithms for rank-metric
codes such as the ones contained in~\cite{ga85a,lo06,ri04} must be
generalized in order to be effective for the \emph{Generalized
  Decoding Problem for Rank-Metric Codes}, and consequently, to be
applicable to RSL codes. 
\medskip

Another interesting application of Lemma~\ref{n=rk_RSL} allows us to
improve the efficiency of the decoding algorithm for the codes
proposed in~\cite{sk10}. For the relevant definitions, we refer the
interested reader to the original article. 

\begin{corollary}\label{c:sk_dec}
There is an algorithm which decodes the codes from~\cite{sk10} and
has complexity $\mathcal{O}(\F_{q^{n-k}};k(n-k))$. 
\end{corollary}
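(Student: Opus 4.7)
The plan is to mimic the proof of Proposition~\ref{p:compl}, taking advantage of the fact that the codes of~\cite{sk10} share with spread codes the same ``block-plus-identity'' structure. First I would recall from~\cite{sk10} that every codeword admits a unique row-reduced echelon form
\[
\mat{\underbrace{0 \ \cdots \ 0}_{i-1}\ I \ B_{i+1} \ \cdots \ B_r}
\]
for some index $i$, where the tail $(B_{i+1},\dots,B_r)$ lies in an RSL-type code (a Ferrer-diagram rank-metric code lifted as in~\cite{et08u}). Since these codes form a sub-family of the lifted rank-metric codes of~\cite{et08u}, which in turn are built on RSL codes, the tail admits the fast minimum-distance decoder of~\cite{si08a} with complexity $\mathcal{O}(\F_{q^{n-k}};k(n-k))$.

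Next I would verify that Lemma~\ref{n=rk_RSL} applies verbatim to this setting: the only features of $\Svs_r$ used in the proof of that lemma are that codewords have shape $\mat{0 \cdots 0 \ I \ C_{i+1}\cdots C_r}$ and that the minimum distance of the code equals $2k$. Both hold for the codes of~\cite{sk10}. So given a uniquely decodable received word $\Rvs=\rs\mat{R_1 \ \cdots \ R_r}$, I can locate the correct block index $i$ by computing row-reduced echelon forms of $R_1,R_2,\dots$ until the first index with $\rank(R_i)>(k-1)/2$ is found; this costs $\mathcal{O}(\F_q;nk^2)$ operations. Lemma~\ref{n=rk_RSL} then guarantees that $C_j=0$ for $j<i$, $C_i=I$, and that the truncated word $\rs\mat{R_i \ R_{i+1} \ \cdots \ R_r}$ is still within distance $k$ of the truncated codeword $\rs\mat{I \ C_{i+1} \ \cdots \ C_r}$.

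With the index $i$ fixed, the remaining task is a minimum-distance decoding instance for the appropriate RSL-type subcode of~\cite{sk10} that lives in $\Gr{\F_q}{k}{(r-i+1)k}$. I would then invoke the generalized decoder of~\cite{si08a}, which yields the unknown blocks $C_{i+1},\dots,C_r$ at a cost of $\mathcal{O}(\F_{q^{n-k}};k(n-k))$ field operations. Since $\mathcal{O}(\F_q;nk^2)$ operations in $\F_q$ are absorbed by $\mathcal{O}(\F_{q^{n-k}};k(n-k))$ operations in the extension field, the total complexity is as claimed.

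The only real obstacle I foresee is bookkeeping: one must check that the decoding problem obtained after stripping the leading zero blocks and the identity block really is of the form to which the~\cite{si08a} algorithm applies, i.e.\ that the rank-metric code underlying the tail of a codeword of~\cite{sk10} is a Gabidulin code (or sub-code thereof) in the sense required for the generalized decoding problem. Once this compatibility is verified---essentially a direct unwinding of the construction in~\cite{sk10}---the rest of the argument is a straightforward transcription of the proof of Proposition~\ref{p:compl}.
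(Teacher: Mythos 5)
Your proposal follows the paper's argument exactly: locate the identity block via Lemma~\ref{n=rk_RSL} (at negligible cost), then hand the truncated word to the decoder of~\cite{si08a}, whose complexity $\mathcal{O}(\F_{q^{n-k}};k(n-k))$ dominates. The compatibility check you flag at the end is a reasonable point of care, but the paper treats it as immediate from the construction in~\cite{sk10}, so your write-up is if anything more thorough than the original.
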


The algorithm is a combination of Lemma~\ref{n=rk_RSL} and the
decoding algorithm contained in~\cite{si08a}. First, by
Lemma~\ref{n=rk_RSL}, one finds the position of the identity
matrix. This reduces the minimum-distance decoding problem to decoding
a RSL code, so one can use the algorithm from~\cite{si08a}.

\section{The Minimum-Distance Decoding Algorithm}\label{s:dec_alg}

In this section we devise a new minimum-distance decoding algorithm for
spread codes. In the next section, we show that our algorithm is more
efficient than the ones present in the literature, when $n\gg k$.

We start by proving some results on matrices, which we will be used to
design and prove the correctness of the decoding algorithm.

\subsection{Preliminary results on matrices}\label{ss:mat}

Let $\F$ be a field and let $m\in \F[y_1,\dots,y_s]$ be a polynomial
of the form $m=\sum_{U\subseteq (1,\dots,s)}a_Uy_U$ where
$y_U:=\prod_{u\in U}y_u$, $a_{(1,\dots,s)}\neq 0$.

\begin{lemma}\label{l:decomp}
The following are equivalent:
\begin{enumerate}
\item The polynomial $m$ decomposes in linear factors, i.e.,
  \[m=a_{(1,\dots,s)}\prod_{u\in (1,\dots,s)}(y_u+\mu_u)\] where
  $\mu_u=\displaystyle \frac{a_{(1,\dots,s)\setminus
      (u)}}{a_{(1,\dots,s)}}\in \F$.
\item It holds
\begin{equation}\label{e:decomp_cond}
  a_Ua_V=a_{U\cap V}a_{(1,\dots,s)}
\end{equation}
for all $U,V$ such that $|V|=s-1$ and \[\min\left((1,\dots,s)\setminus V\right) <
\min \left((1,\dots,s)\setminus U\right).\] 
\end{enumerate}
\end{lemma}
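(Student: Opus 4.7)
The plan is to prove the two implications separately: $(1)\Rightarrow(2)$ is a direct computation, while $(2)\Rightarrow(1)$ requires a downward induction on $|U|$ and carries the real content.

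For $(1)\Rightarrow(2)$, I would expand $m = a_{(1,\dots,s)} \prod_{u=1}^{s}(y_u+\mu_u)$ to read off the closed form $a_U = a_{(1,\dots,s)}\prod_{u\notin U}\mu_u$ for every $U\subseteq(1,\dots,s)$. Substituting this into~\eqref{e:decomp_cond}, the identity $a_U a_V = a_{U\cap V}a_{(1,\dots,s)}$ becomes equivalent to the combinatorial condition $\bigl((1,\dots,s)\setminus U\bigr)\cap\bigl((1,\dots,s)\setminus V\bigr)=\emptyset$, i.e.\ $U\cup V=(1,\dots,s)$. Under the hypotheses $|V|=s-1$ and $\min((1,\dots,s)\setminus V)<\min((1,\dots,s)\setminus U)$, the complement of $V$ is a singleton $(v_0)$ with $v_0<\min((1,\dots,s)\setminus U)$, which forces $v_0\in U$; hence the two complements are disjoint and the identity holds.

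For $(2)\Rightarrow(1)$, I would set $\mu_u:=a_{(1,\dots,s)\setminus(u)}/a_{(1,\dots,s)}$ and prove by induction on $s-|U|$ that $a_U = a_{(1,\dots,s)}\prod_{u\notin U}\mu_u$ for every $U$. Since the right-hand side is precisely what one obtains by expanding $a_{(1,\dots,s)}\prod_{u=1}^{s}(y_u+\mu_u)$, this yields the desired factorization. The cases $|U|=s$ and $|U|=s-1$ follow immediately from the definition of the $\mu_u$. For the inductive step with $|U|<s-1$, I would take $v_0:=\min((1,\dots,s)\setminus U)$ and apply~\eqref{e:decomp_cond} to the pair $\bigl(U\cup(v_0),\,(1,\dots,s)\setminus(v_0)\bigr)$: the required min-inequality holds because every other element of $(1,\dots,s)\setminus U$ is strictly greater than $v_0$, so deleting $v_0$ from that complement strictly raises its minimum. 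The resulting relation reads $a_{U\cup(v_0)}\,a_{(1,\dots,s)\setminus(v_0)} = a_U\,a_{(1,\dots,s)}$, and combining it with the inductive hypothesis applied to $U\cup(v_0)$ (which has larger cardinality) and the identity $a_{(1,\dots,s)\setminus(v_0)}=a_{(1,\dots,s)}\mu_{v_0}$ delivers exactly the formula for $a_U$.

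The main obstacle I anticipate is recognizing that the asymmetric and apparently sparse set of hypotheses in~(2) — only pairs with $|V|=s-1$ satisfying the min-inequality — is already enough to pin down every coefficient. The crux is the observation that the canonical choice $v_0=\min((1,\dots,s)\setminus U)$ always yields an admissible pair in the induction, because any competing element of the complement is strictly larger than $v_0$. Once this choice is identified, the induction runs without further difficulty; without it one might suspect that condition~(2), stated only for size-$(s-1)$ sets with a directional minimum constraint, is too weak to force factorization.
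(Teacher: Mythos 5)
Your argument is correct, and it takes a genuinely different route from the paper's. The paper proves both implications by induction on $s$, the number of variables: for $(1)\Rightarrow(2)$ it peels off the factor $(y_s+\mu_s)$ and tracks how the coefficients of $m$ relate to those of the cofactor, and for $(2)\Rightarrow(1)$ it explicitly extracts the factor $\bigl(y_1+a_{(2,\dots,s)}/a_{(1,\dots,s)}\bigr)$ using the instances of \eqref{e:decomp_cond} with $V=(2,\dots,s)$, then invokes the induction hypothesis on the cofactor $\sum_{1\in U}a_U y_{U\setminus(1)}$ — implicitly relying on the fact that this cofactor's coefficients again satisfy condition \eqref{e:decomp_cond} in the variables $y_2,\dots,y_s$, a verification the paper leaves to the reader. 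You instead work with the closed form $a_U=a_{(1,\dots,s)}\prod_{u\notin U}\mu_u$ throughout: the forward direction reduces to the observation that the complements of $U$ and $V$ are disjoint under the stated min-inequality, and the converse is a downward induction on $|U|$ using only the instances of \eqref{e:decomp_cond} with $V=(1,\dots,s)\setminus(v_0)$ for $v_0=\min\bigl((1,\dots,s)\setminus U\bigr)$. Your version has the advantage of sidestepping the inheritance check entirely and of making explicit exactly which (very few) instances of \eqref{e:decomp_cond} are actually needed; the paper's version has the advantage of producing the factorization one linear factor at a time, which is closer to how the lemma is used in the proof of Theorem~\ref{t:suitable_poly}. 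Both arguments are complete and of comparable length.
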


\begin{proof}

We proceed by induction on $s$.
\begin{description}
\item[\fbox {$\Rightarrow$}] If $s=1$, $m$ is a linear polynomial.  Let
  us now suppose  the thesis is true for $s-1$. Then
  \[a_{(1,\dots,s)}\prod_{u\in (1,\dots,s)}(y_u+\mu_u) =
  a_{(1,\dots,s)}(y_s+\mu_s)\left(\sum_{U\subseteq
    (1,\dots,s-1)}\tilde{a}_Uy_U\right)\] where $\tilde{a}_{(1,\dots,s-1)}=1$
  and the coefficients $\tilde{a}_U$ with $U\subseteq (1,\dots,s-1)$
  satisfy by hypothesis condition \eqref{e:decomp_cond}. The
  coefficients of $m$ are $a_U=\tilde{a}_{U\setminus (s)}$ if $s\in
  U$, and $a_U=\mu_s\tilde{a}_U$ otherwise. Therefore we only need to
  prove that \eqref{e:decomp_cond} holds for $U\in (1,\dots,s-1)$. The
  equality is $a_{(1,\dots,s)}a_U=a_Ua_{(1,\dots,s)}$ hence it is
  trivial.
 
\item[\fbox{$\Leftarrow$}] The thesis is trivial for $s=1$. Let us
  assume that the thesis holds for $s-1$. We explicitly show the
  extraction of a linear factor of the polynomial.

  \begin{align*}
    m & = \displaystyle \sum_{U\subseteq (1,\dots,s)}a_Uy_U =
    \sum_{\substack{ U\subseteq (1,\dots,s) \\ 1\in U}} \left(a_U y_U+
      a_{U\setminus (1)} y_{U\setminus (1)}\right) =\\ & =
    \displaystyle \sum_{\substack{ U\subseteq (1,\dots,s) \\
        1\in U}} \left(a_U y_1y_{U\setminus (1)}+
      \frac{a_Ua_{(2,\dots,s)}}{a_{(1,\dots,s)}} y_{U\setminus
        (1)}\right)= \\ & =
    \left(y_1+\frac{a_{(2,\dots, s)}}{a_{(1,\dots,s)}}\right)\cdot 
    \left(\sum_{\substack{ U\subseteq (1,\dots,s) \\
          1\in U}} a_Uy_{U\setminus (1)}\right).
  \end{align*} 
   The thesis is true by induction. 
\end{description}
\end{proof}

\medskip

Let $\F[x_{i,j}]_{1\leq i,j\leq k}$ be a polynomial ring with
coefficients in a field $\F$. Consider the generic matrix of size
$k\times k$ 
\[M:=\mat{x_{1,1} & \cdots & x_{1,k} \\ \vdots & & \vdots \\
  x_{k,1} & \cdots & x_{k,k}}.\]
Denote by $\II_{s+1}\subset \F[x_{i,j}]_{1\leq i,j\leq n}$ the ideal
generated by all minors of size $s+1$ of $M$, which do not involve
entries on the diagonal, i.e.,
\[\II_{s+1}:=([J,L]\mid J,L\in\{1,\dots,k\}^{s+1}, \ J\cap L=
\emptyset).\]
We establish some relations on the minors of $M$, modulo the
ideal $\II_{s+1}$.

\begin{lemma}\label{l:minors}
  Let $J=(j_1,\dots,j_k),L=(l_1,\dots,l_k)\in \{1,\dots,k\}^k$,
  $J_s=(j_1,\dots,j_s)$, and $L_s=(l_1,\dots,l_s)$. Then
  \[[J_s;L_s][J;L]=\sum_{t=s+1}^k (-1)^{t+s+1}[J_s\cup (j_t);L_s\cup
  (l_{s+1})][J\setminus (j_t) ;L\setminus (l_{s+1})].\]
\end{lemma}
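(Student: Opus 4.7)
The plan is to start from the right-hand side and, via two successive Laplace expansions, reduce it to the left-hand side. The key observation used twice along the way is that a determinant with a repeated row or column vanishes.

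First I will expand each $(s+1)\times(s+1)$ minor $[J_s\cup(j_t);L_s\cup(l_{s+1})]$ by Laplace expansion along its last row (the row $j_t$), obtaining
\[
[J_s\cup(j_t);L_s\cup(l_{s+1})] \;=\; \sum_{c=1}^{s+1}(-1)^{s+1+c}\, x_{j_t,l_c}\, D_c,
\]
where $D_c$ denotes the $s\times s$ minor on rows $J_s$ and on the columns $(l_1,\dots,l_{s+1})$ with the $c$-th entry omitted; in particular $D_{s+1}=[J_s;L_s]$. Before substituting, I observe that the outer summation in the right-hand side of the lemma can be extended at no cost from $t\in\{s+1,\dots,k\}$ to $t\in\{1,\dots,k\}$: for $t\leq s$, the row $j_t$ is repeated in the tuple $J_s\cup(j_t)$, so the $(s+1)\times(s+1)$ minor vanishes.

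Next I will substitute the Laplace expansion into the extended right-hand side, swap the order of summation, and pull the $t$-independent factor $D_c$ out of the inner sum. The inner sum then takes the form $\sum_{t=1}^{k}(-1)^{t+c}\, x_{j_t,l_c}\,[J\setminus(j_t);L\setminus(l_{s+1})]$, which up to a sign of $(-1)^{c+s+1}$ is the Laplace expansion along the $(s+1)$-st column of the $k\times k$ determinant $[J;L^{(c)}]$, where $L^{(c)}$ denotes the tuple obtained from $L$ by replacing $l_{s+1}$ with $l_c$. For $c\leq s$ the column $l_c$ appears twice in $L^{(c)}$, so $[J;L^{(c)}]=0$; for $c=s+1$ we have $L^{(c)}=L$ and the inner sum equals $[J;L]$. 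Thus only the term $c=s+1$ survives, and its coefficient is $D_{s+1}=[J_s;L_s]$, so the right-hand side collapses to $[J_s;L_s]\cdot[J;L]$, as desired.

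The main obstacle is purely sign bookkeeping: I will need to verify that the outer sign $(-1)^{t+s+1}$, the Laplace sign $(-1)^{s+1+c}$ from the first expansion, and the sign required to recognize the inner sum as the second Laplace expansion all combine so that the surviving $c=s+1$ term appears with coefficient $+1$ and no spurious sign. I do not expect any deeper conceptual difficulty.
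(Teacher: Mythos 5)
Your argument is correct, and it is genuinely different from the one in the paper. The paper proves Lemma~\ref{l:minors} by induction on $s$: it first expands $[J;L]$ along the column $l_{s+1}$, then rewrites each factor $x_{j_r,l_{s+1}}[J_s;L_s]$ as a Laplace expansion of $[J_s\cup(j_r);L_s\cup(l_{s+1})]$ along its last column, and after substitution invokes the induction hypothesis together with the vanishing of a determinant with a repeated column to kill the leftover terms. You instead give a direct, induction-free computation: extend the outer sum to all $t$ (the added terms vanish by a repeated \emph{row}), Laplace-expand the small minor along the row $j_t$, swap the summations, and recognize each inner sum as the column-$(s{+}1)$ Laplace expansion of $[J;L^{(c)}]$, which vanishes for $c\le s$ by a repeated column and gives $[J;L]$ for $c=s+1$. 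Your sign bookkeeping checks out: the inner sum carries exactly the sign $(-1)^{t+s+1}$ of that Laplace expansion, and the surviving term has coefficient $(-1)^{2(s+1)}D_{s+1}=[J_s;L_s]$. What your route buys is a shorter, more transparent proof in which the two uses of ``repeated row/column implies zero'' are cleanly separated, and no induction hypothesis needs to be tracked; the paper's inductive version, on the other hand, mirrors more closely the recursive structure it later exploits (Proposition~\ref{p:rel_ideal} peels off one index at a time), but at the cost of a messier bookkeeping step. Either proof is acceptable; yours is arguably the cleaner one.
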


\begin{proof}
  Notice that if we consider as convention that
  $[\emptyset;\emptyset]=1$, i.e., when $s=0$, we get the determinant
  formula.

  We proceed by induction on $s$. Let us consider the case when $s=1$,
  i.e., $[J_1;L_1]=\mat{x_{j_1,l_1}}$. Then,
\begin{eqnarray*}
  \mat{x_{j_1,l_1}}[J;L] & = &
  \sum_{t=1}^k(-1)^{t+2}x_{j_1,l_1}x_{j_t,l_2}[J\setminus
  (j_t);L\setminus (l_2)]\\
  & = & -x_{j_1,l_1}x_{j_1,l_2}[J\setminus  (j_1);L\setminus (l_2)] \\
  & & +
  \sum_{t=2}^k (-1)^{t+2}\left([(j_1,j_t);(l_1,l_2)]+x_{j_t,l_1}x_{j_1,l_2}\right)
  [J\setminus (j_t);L\setminus (l_2)]\\  
  & = & \sum_{t=2}^k(-1)^{t+2}[(j_1,j_t);(l_1,l_2)] [J\setminus
  (j_t);L\setminus (l_2)] \\
  & & + 
  x_{j_1,l_2}[J;(l_1,l_1,l_3,\dots,l_k)]. 
\end{eqnarray*}
For $s=1$ the thesis is true because $[J;(l_1,l_1,l_3,\dots,l_k)]=0$
since column $l_1$ appears twice.

Assume that the thesis is true for $s-1$. 
\begin{align*}
  [J_s;L_s][J;L]  = 
  \sum_{t=1}^k(-1)^{t+s+1}x_{j_t,l_{s+1}}[J_s;L_s][J\setminus
   (j_t);L\setminus  (l_{s+1})].
\end{align*}
Let us now focus on the factor $x_{j_r,l_{s+1}}[J_s;L_s]$ for $r\geq s+1$,
we get
\begin{align*}
  x_{j_r,l_{s+1}}[J_s;L_s]& = [J_s\cup  (j_r);L_s\cup  (l_{s+1})]\!+
  \!\!\sum_{t=1}^{s}(-1)^{t+s}x_{j_t,l_{s+1}}[J_s\setminus  (j_t)\cup  (j_r) ;L_s]. 
\end{align*}
By substitution it follows that 
\begin{eqnarray*}
  [J_s;L_s][J;L] & = &\sum_{t=s+1}^k (-1)^{t+s+1}[J_s\cup
  (j_t);L_s\cup (l_{s+1})]
  [J\setminus (j_t);L\setminus  (l_{s+1})]+\\
  & & + \sum_{t=1}^s (-1)^{t+s+1}x_{j_t,l_{s+1}}\left(
    [J_s;L_s][J\setminus
    (j_t);L\setminus  (l_{s+1})]+\phantom{\sum_{j=s+1}^k(-1)^{j+s}}\right.\\
  & &\left. +\sum_{r=s+1}^k(-1)^{r+s} [J_s\setminus (j_t)\cup
    (j_r);L_s][L\setminus
    (j_r);L\setminus  (l_{s+1})]\right) \\
  & = & \sum_{t=s+1}^k (-1)^{t+s+1}[J_s\cup (j_t);L_s\cup (l_{s+1})]
  [J\setminus (j_t);L\setminus  (l_{s+1})]+\\
  & &+ \sum_{t=1}^s (-1)^{t+s+1}x_{j_t,l_{s+1}} \left( 
    [J_s\setminus (j_t);L_s\setminus (l_s)][J;\bar{L}]\right)
\end{eqnarray*}
where $\bar{L}=(l_1,\dots,l_s,l_s,l_{s+2},\dots,l_k)$. The
repetition of column $l_s$ twice in $\bar{L}$ implies that
$[J;\bar{L}]=0$. The last equality follows from the induction
hypothesis.
\end{proof}

The following is an easy consequence of Lemma~\ref{l:minors}.

\begin{proposition}\label{p:rel_ideal}
  Let $J,L\subset K=(1,\dots,k)$ such that $J\cap L=\emptyset$. Then
  \[[J,L][K,K]-[J\cup (i);L\cup (i)][K\setminus (i);K\setminus
  (i)]=\sum_{l\in K\setminus (J\cup (i))}h_l\Min{J\cup (i)}{L\cup
    (l)}{}\in \II_{s+1},\] with $h_l\in \F[x_{i,j}]_{1\leq i,j\leq
    k}$ for any $l\in K\setminus (J\cup (i))$.
\end{proposition}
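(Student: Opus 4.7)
The plan is to derive Proposition~\ref{p:rel_ideal} as a direct consequence of Lemma~\ref{l:minors}. Set $s := |J| = |L|$. The first step is to extend $J$ and $L$ to $k$-tuples that are permutations of $K$, placing $i$ in the $(s{+}1)$-th coordinate of each: let $\tilde J = (j_1,\dots,j_s,i,q_1,\dots,q_{k-s-1})$ with $q_1,\dots,q_{k-s-1}$ any ordering of $K\setminus(J\cup(i))$, and analogously $\tilde L = (l_1,\dots,l_s,i,r_1,\dots,r_{k-s-1})$. Then $\Min{\tilde J}{\tilde L}{} = \epsilon\,\Min{K}{K}{}$ and $\Min{\tilde J\setminus(i)}{\tilde L\setminus(i)}{} = \epsilon'\,\Min{K\setminus(i)}{K\setminus(i)}{}$ for signs $\epsilon,\epsilon'$ determined by the chosen orderings.

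Next, I would apply Lemma~\ref{l:minors} in its transposed form, obtained by passing to $M^T$ and using the identity $[A;B]_M = [B;A]_{M^T}$. With $\tilde J$ and $\tilde L$ as above this yields
\[
\Min{J}{L}{}\cdot\Min{\tilde J}{\tilde L}{}
  \;=\; \sum_{t=s+1}^{k}(-1)^{t+s+1}\,\Min{J\cup(i)}{L\cup(\tilde l_t)}{}\cdot\Min{\tilde J\setminus(i)}{\tilde L\setminus(\tilde l_t)}{}.
\]
The $t=s{+}1$ summand equals $\pm\,\Min{J\cup(i)}{L\cup(i)}{}\cdot\Min{K\setminus(i)}{K\setminus(i)}{}$; moving it across the equality re-expresses the left-hand side of the proposition as a polynomial linear combination $\sum_{l}h_l\,\Min{J\cup(i)}{L\cup(l)}{}$, where $h_l = \pm\Min{\tilde J\setminus(i)}{\tilde L\setminus(l)}{}\in\F[x_{i,j}]_{1\leq i,j\leq k}$.

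The final step, which I expect to be the main obstacle, is to confirm both the summation range claimed in the statement and the membership in $\II_{s+1}$. Each summand $\Min{J\cup(i)}{L\cup(l)}{}$ with $l\in L$ vanishes because a column is repeated and can thus be discarded, while each summand with $l\notin J\cup L\cup(i)$ has disjoint row and column indices and is hence a generator of $\II_{s+1}$. The delicate case is $l\in J$, where the minor is neither zero nor automatically a generator: I would handle such contributions either by choosing the orderings of the extensions $\tilde J$ and $\tilde L$ so that the problematic terms are pushed into summands whose coefficients $h_l$ vanish, or by expanding the minor along the row and column that share the index $l$ and re-absorbing the resulting products via a further application of Lemma~\ref{l:minors}. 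A careful accounting of the signs coming from the permutations $\tilde J,\tilde L$ and of the bijection between the indices $\tilde l_t$ and $l$ then yields the identity in the form stated, with all nonzero summands belonging to $\II_{s+1}$.
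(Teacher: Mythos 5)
Your route---extending $J$ and $L$ to orderings $\tilde J,\tilde L$ of $K$ with $i$ in position $s+1$ and applying Lemma~\ref{l:minors} to $M^T$---is the natural reading of the paper's one-line justification (the paper itself gives no proof beyond ``easy consequence of Lemma~\ref{l:minors}''), and your bookkeeping up to the isolation of the $t=s+1$ summand is correct; the signs do match up, since $i$ is removed from the same position of both tuples. You have also put your finger on exactly the right difficulty: the lemma forces the added column index to range over all of $K\setminus L$, so after moving the $\tilde l_t=i$ term across, the remaining sum runs over $l\in K\setminus(L\cup(i))$, not over $K\setminus(J\cup(i))$ as the proposition asserts, and the terms with $l\in J$ are the problem.

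However, neither of your two proposed escapes can work, because the identity as printed is false. The coefficients produced by the lemma are $h_l=\pm\Min{K\setminus (i)}{K\setminus (l)}{}$; they are intrinsic minors of $M$, independent of how you order the tails of $\tilde J$ and $\tilde L$, so no choice of ordering ``pushes the problematic terms into summands whose coefficients vanish,'' and a further expansion cannot create cancellation that is not there. Concretely, take $k=3$, $J=(2)$, $L=(3)$, $i=1$. Then $K\setminus(J\cup(i))=(3)$, the only summand on the right, $h_3\Min{(2,1)}{(3,3)}{}$, has a repeated column and vanishes identically, whereas the left-hand side equals
\[
\Min{(2)}{(3)}{}\Min{K}{K}{}-\Min{(2,1)}{(3,1)}{}\Min{(2,3)}{(2,3)}{}=(x_{13}x_{22}-x_{12}x_{23})(x_{21}x_{33}-x_{23}x_{31})=\Min{(1,2)}{(3,2)}{}\cdot\Min{(2,3)}{(1,3)}{}\neq 0 ;
\]
this is precisely the $l=2\in J$ term that the stated summation range omits, and since $\II_{2}=(0)$ for $k=3$ it does not even lie in the ideal. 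The correct output of your computation is a sum over $l\in K\setminus(L\cup(i))$ with $h_l=\pm\Min{K\setminus (i)}{K\setminus (l)}{}$, in which the summands with $l\in J$ involve the diagonal entry $x_{l,l}$ and are not generators of $\II_{s+1}$. So the gap is not a sign or ordering issue in your argument but a defect in the statement itself: it needs either the extra $l\in J$ terms on the right-hand side, or an additional hypothesis (as in the setting of Theorem~\ref{t:suitable_poly}, where one must check separately that the offending mixed minors do not contribute). Your instinct to flag this step as the main obstacle was exactly right.
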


\medskip

We now study the minors of a matrix of the form $S^{-1}NS$ where
$N\in\F_q^{k\times k}$ and $S$ has a special form, which we describe
in the next lemma.

\begin{lemma}\label{l:S} 
  Let $P\in GL_k(\F_q)$ to be the companion matrix of a monic
  irreducible polynomial $p\in \F_q$ of degree $k>0$, and let $\Ll\in
  \F_{q^k}$ be a root of $p$. Then the matrix
  \begin{equation}\label{e:transf_mat}
    S:=\mat{1 & 1 & 1 & \cdots & 1 \\
      \Ll & \Ll^{[1]} & \Ll^{[2]} & \cdots & \Ll^{[k-1]} \\
      \Ll^2 & \Ll^{2\cdot [1]} & \Ll^{2\cdot [2]} & \cdots &
      \Ll^{2\cdot [k-1]} \\
      \vdots & \vdots & \vdots & & \vdots \\
      \Ll^{k-1} & \Ll^{(k-1)\cdot [1]} & \Ll^{(k-1)\cdot [2]} & \cdots &
      \Ll^{(k-1)\cdot [k-1]}}. 
  \end{equation}
  diagonalizes $P$.
\end{lemma}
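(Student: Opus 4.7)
The plan is to verify the diagonalization by exhibiting the eigenvalues and a complete set of eigenvectors of $P$ explicitly, and then identifying them with the diagonal entries of $\Delta(\Ll)$ and the columns of $S$.

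First I would recall that, since $p$ is irreducible of degree $k$ over $\F_q$ and $\Ll \in \F_{q^k}$ is a root, the Frobenius orbit $\Ll, \Ll^{[1]}, \ldots, \Ll^{[k-1]}$ is the complete set of roots of $p$ in $\F_{q^k}$, and these $k$ elements are pairwise distinct. This is the only arithmetic input needed; everything that follows is linear algebra.

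Next I would compute the eigenvectors of $P$ by reading off $Pv = \mu v$ from the displayed form of the companion matrix. The first $k-1$ rows force $v_{i+1} = \mu v_i$ for $i = 1, \ldots, k-1$, so normalizing $v_1 = 1$ gives $v = (1,\mu,\mu^2,\ldots,\mu^{k-1})^T$, and the last row of the eigenvalue equation then collapses to $p(\mu) = 0$. Hence each root of $p$ contributes a one-dimensional eigenspace spanned by $(1,\mu,\ldots,\mu^{k-1})^T$. Specializing $\mu = \Ll^{[j-1]}$ for $j = 1, \ldots, k$ reproduces exactly the $j$-th column of $S$ as given in \eqref{e:transf_mat}, so the columns of $S$ form an eigenbasis and $PS = S\,\Delta(\Ll)$.

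It remains to check that $S$ is invertible, which gives $S^{-1} P S = \Delta(\Ll)$. But $S$ is a Vandermonde matrix in the parameters $\Ll^{[0]}, \ldots, \Ll^{[k-1]}$, and its determinant equals $\prod_{0 \leq i < j \leq k-1}(\Ll^{[j]} - \Ll^{[i]})$, which is nonzero by the distinctness recalled above. There is no substantive obstacle here; the only point requiring care is that the paper's convention for the companion matrix produces the recursion $v_{i+1} = \mu v_i$ (read top-down) rather than its transpose, but this is immediate from the explicit shape of $P$.
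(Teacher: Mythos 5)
Your proof is correct and follows essentially the same route as the paper: identify the eigenvalues of $P$ with the conjugate roots $\Ll^{[0]},\dots,\Ll^{[k-1]}$ of $p$ and verify that the columns of $S$ are the corresponding eigenvectors. The only difference is cosmetic — you derive the eigenvector shape by solving $Pv=\mu v$ and you make the invertibility of $S$ explicit via the Vandermonde determinant, whereas the paper verifies the eigenvector equation directly and leaves the linear independence implicit.
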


\begin{proof}
  The eigenvalues of the matrix $P$ correspond to the roots of the
  irreducible polynomial $p\in \F_q[x]$. If $\Ll\in \F_{q^k}$ is an
  element such that $p(\Ll)=0$, then
  $p=\prod_{i=0}^{k-1}(x-\Ll^{[i]})$ by \cite[Theorem~2.4]{li94}. It
  is enough to show that the columns of $S$ correspond to the
  eigenvectors of $P$. Let $i\in \{0,\dots,k-1\}$, then
  \begin{align*}
    P\mat{1 \\ \Ll^{[i]} \\ \vdots \\ \Ll^{(k-1)\cdot [i]}}& =
    \mat{\Ll^{[i]} \\ \Ll^{2\cdot [i]} \\ \vdots \\
      -\sum_{j=0}^{k-1}p_j\Ll^{j\cdot [i]}}=\mat{\Ll^{[i]}
      \\ \Ll^{2\cdot [i]} \\ \vdots \\
      \left(-\sum_{j=0}^{k-1}p_j\Ll^{j}\right)^{[i]}}\\
    & = \mat{\Ll^{[i]}
      \\ \Ll^{2\cdot [i]} \\ \vdots \\
      \Ll^{k \cdot [i]}} = \Ll^{[i]}\mat{1 \\ \Ll^{[i]} \\ \vdots \\
      \Ll^{(k-1)\cdot [i]}}.
  \end{align*}
\end{proof}

We now establish some properties of $S$.

\begin{lemma}\label{l:SS-1}
  The matrices $S$ and $S^{-1}$ defined by \eqref{e:transf_mat}
  satisfy the following properties:
  \begin{enumerate}
  \item the entries of the first column of $S$ (respectively, the
    first row of $S^{-1}$) form a basis of $\F_{q^k}$ over $\F_q$, and
  \item the entries of the $(i+1)$-th column of $S$ (respectively,
    row of $S^{-1}$) are the $q$-th power of the ones of the $i$-th
    column (respectively, row) for $i=1,\dots,k-1$.
  \end{enumerate}
\end{lemma}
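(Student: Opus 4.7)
My plan is first to establish the statements about $S$ directly from its explicit definition, and then to deduce the corresponding statements for $S^{-1}$ by exploiting the Frobenius automorphism $\phi\colon x\mapsto x^q$ of $\F_{q^k}$.

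For $S$ itself, part~1 is the standard fact that if $\Ll\in\F_{q^k}$ is a root of the irreducible polynomial $p\in\F_q[x]$ of degree $k$, then $1,\Ll,\ldots,\Ll^{k-1}$ is an $\F_q$-basis of $\F_{q^k}$ (see e.g.\ \cite{li94}). Part~2 is a direct inspection: the $(j+1,i+1)$-entry $\Ll^{j\cdot[i]}=\Ll^{jq^i}$ is the $q$-th power of $\Ll^{jq^{i-1}}$, which is the $(j+1,i)$-entry.

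For $S^{-1}$, the key observation is that $\phi$ permutes the columns of $S$ cyclically. Raising the $(j+1,i+1)$-entry $\Ll^{jq^i}$ to the $q$-th power gives $\Ll^{jq^{i+1}}$, which for $i<k-1$ is the $(j+1,i+2)$-entry and for $i=k-1$ is $\Ll^{j}$ (the $(j+1,1)$-entry), using $\Ll^{q^k}=\Ll$. Writing $\phi(S)$ for the entrywise image, this says $\phi(S)=SC$, where $C$ is the cyclic permutation matrix with $Ce_i=e_{i+1}$ for $i<k$ and $Ce_k=e_1$. Inverting yields $\phi(S^{-1})=C^{-1}S^{-1}$, and a short bookkeeping of indices shows that left-multiplication by $C^{-1}$ permutes rows in the opposite direction: the $l$-th row of $C^{-1}S^{-1}$ is the $(l+1)$-th row of $S^{-1}$ (indices mod $k$). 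Equivalently, the entrywise $q$-th power of the $l$-th row of $S^{-1}$ equals the $(l+1)$-th row for $l=1,\ldots,k-1$, proving part~2 for $S^{-1}$.

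Part~1 for $S^{-1}$ will then follow from the Moore matrix criterion. Let $\beta_1,\ldots,\beta_k$ be the entries of the first row of $S^{-1}$; by the part just established, the $(i,j)$-entry of $S^{-1}$ is $\beta_j^{q^{i-1}}$, so $S^{-1}$ is a Moore matrix. Since $S$ is invertible (it diagonalizes $P$ by Lemma~\ref{l:S}), so is $S^{-1}$, and the classical result on Moore determinants (\cite{li94}) forces $\beta_1,\ldots,\beta_k$ to be $\F_q$-linearly independent in $\F_{q^k}$; being $k$ in number, they form an $\F_q$-basis. The only real subtlety in the proof is translating the column-shift identity $\phi(S)=SC$ into the desired row-shift identity for $S^{-1}$; once this Frobenius-equivariance of $S$ is identified, everything else is routine linear algebra over finite fields.
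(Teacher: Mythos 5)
Your proof is correct, but it takes a genuinely different route from the paper's. The paper proves the claims for $S^{-1}$ by writing that matrix down explicitly: it invokes the existence of the trace-dual basis $\{\Gl_0,\dots,\Gl_{k-1}\}$ of $\{1,\Ll,\dots,\Ll^{k-1}\}$ (i.e., $\mathrm{Tr}_{\F_{q^k}/\F_q}(\Ll^i\Gl_j)=\delta_{ij}$, \cite[Definition~2.30]{li94}) and observes that the matrix whose $(i+1,j+1)$-entry is $\Gl_j^{[i]}$ is an inverse of $S$, after which both properties are read off directly. You instead never exhibit $S^{-1}$: you extract the Frobenius-equivariance $\phi(S)=SC$ with $C$ a cyclic column shift, invert it to get $\phi(S^{-1})=C^{-1}S^{-1}$ (legitimate, since applying the automorphism $\phi$ entrywise commutes with matrix multiplication), deduce the row-shift property, and then get the basis property from the Moore determinant criterion together with the invertibility of $S^{-1}$. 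Both arguments are sound and lean on a standard fact from \cite{li94} — existence of the dual basis in the paper's case, the Moore determinant criterion in yours. The paper's version buys an explicit closed form for $S^{-1}$ (its first row is the trace-dual basis), which is of some independent computational interest; your version is more structural and shows that the two stated properties of $S$ by themselves force the same properties for $S^{-1}$, with no need to identify which basis appears in its first row.
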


\begin{proof}
  The two properties for the matrix $S$ come directly from its
  definition. By \cite[Definition~2.30]{li94} we know that there
  exists a unique basis $\{\Gl_0,\dots, \Gl_{k-1}\}$ of $\F_{q^k}$
  over $\F_q$ such that
  \[\mathrm{Tr}_{\F_{q^k}/\F_q}(\Ll^i\Gl_j)=\left\{ \begin{array}{r}
      1 \quad i= j\\ 0 \quad i\neq j \end{array}\right.,\]
  where $\mathrm{Tr}_{\F_{q^k}/\F_q}(\Al):= 1+\Al^{[1]}+\dots +\Al^{[k-1]}$
  for $\Al\in \F_{q^k}$. 
  We have
  \[ S^{-1}=\mat{\Gl_0 & \Gl_1 & \cdots & \Gl_{k-1} \\
    \Gl_0^{[1]} & \Gl_1^{[1]} & \cdots & \Gl_{k-1}^{[1]} \\
    \vdots & \vdots & & \vdots \\ 
    \Gl_0^{[k-1]} & \Gl_1^{[k-1]} & \cdots & \Gl_{k-1}^{[k-1]}}.\]
\end{proof}

The next theorem and corollary will be used in
Subsection~\ref{ss:non_sing} to devise a simplified minimum-distance
decoding algorithm, under the assumption that the first $k$ columns of
the received vector space are linearly independent.

\begin{theorem}\label{t:reg_mat}
  Let $t\leq k$ and let $N\in \F_q^{t\times k}$ and $S\in \F_{q^k}^{k\times
    t}$ be two matrices satisfying the following properties:
\begin{itemize}
\item $N$ has full rank,
\item the entries of the first column of $S$ form a basis of
  $\F_{q^k}$ over $\F_q$, and
\item the entries of the $(i+1)$-th column of $S$ are the $q$-th
  power of the ones of the $i$-th column, for $i=1,\dots,t-1$. 
\end{itemize}
Then $NS\in GL_t(\F_{q^k})$.
\end{theorem}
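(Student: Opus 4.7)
The plan is to exhibit $NS$ as a Moore matrix whose defining entries are $\F_q$-linearly independent, and then invoke the classical non-vanishing of the Moore determinant.

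First I would compute the entries of $NS$ explicitly. Let $\Al_1,\dots,\Al_k$ denote the entries of the first column of $S$, so that the entry of $S$ in position $(l,j+1)$ equals $\Al_l^{[j]}$ for $0\le j\le t-1$. Since every entry $N_{il}$ of $N$ lies in $\F_q$, one has $N_{il}=N_{il}^{[j]}$ for any $j$, and therefore the $(i,j+1)$-entry of $NS$ equals
\[
\sum_{l=1}^k N_{il}\Al_l^{[j]}=\Bigl(\sum_{l=1}^k N_{il}\Al_l\Bigr)^{[j]}=\Bl_i^{[j]},\qquad \text{where }\Bl_i:=\sum_{l=1}^k N_{il}\Al_l.
\]
Hence $NS$ has the Moore form whose $i$-th row is $(\Bl_i,\Bl_i^{[1]},\dots,\Bl_i^{[t-1]})$.

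Next I would check that $\Bl_1,\dots,\Bl_t$ are $\F_q$-linearly independent. Suppose $c_1\Bl_1+\cdots+c_t\Bl_t=0$ for some $c_i\in\F_q$. Rewriting this in terms of the $\Al_l$'s gives $\sum_{l=1}^k\bigl(\sum_{i=1}^t c_iN_{il}\bigr)\Al_l=0$, and since $\Al_1,\dots,\Al_k$ form a basis of $\F_{q^k}$ over $\F_q$, each coefficient $\sum_{i=1}^t c_iN_{il}$ must vanish. This says $(c_1,\dots,c_t)N=0$, and the full rank hypothesis on $N$ (whose rows lie in $\F_q^k$ and are $\F_q$-linearly independent) forces $c_1=\cdots=c_t=0$.

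Finally I would invoke the standard fact that the determinant of a $t\times t$ Moore matrix $(\Bl_i^{[j]})_{1\le i\le t,\,0\le j\le t-1}$ is nonzero precisely when $\Bl_1,\dots,\Bl_t$ are $\F_q$-linearly independent (see, e.g., \cite[Lemma~3.51]{li94}), from which $NS\in GL_t(\F_{q^k})$. There is no real obstacle here: the only subtlety is the initial rewriting using $N_{il}\in\F_q$ to push the Frobenius inside the sum, which is what converts $NS$ into a Moore matrix and reduces the question to the well-known criterion.
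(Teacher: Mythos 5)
Your proof is correct and follows essentially the same route as the paper: both rewrite $NS$ as a Moore matrix $\bigl(\Bl_i^{[j]}\bigr)$ and show that the defining elements $\Bl_i=\sum_l N_{il}\Al_l$ are $\F_q$-linearly independent using the full rank of $N$ together with the basis property of the first column of $S$. The only divergence is the final step, where the paper reproves the Moore-determinant nonvanishing from scratch (a kernel vector of $NS$ would yield a linearized polynomial of degree $q^{t-1}$ vanishing on the $t$-dimensional $\F_q$-span of the $\Bl_i$), whereas you cite the standard criterion from Lidl--Niederreiter; both are valid.
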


\begin{proof}
  Let 
  \[N:=\left(n_{ij}\right)_{\substack{1\leq i \leq t
      \\ 1\leq j \leq k}}\mbox{ and
  }NS=\left(t_{ij}\right)_{\substack{1\leq i \leq t \\
      1\leq j \leq t}}.\] 
  Let $\displaystyle S:=\left(s_{ij}\right)_{\substack{1\leq
      i \leq k \\ 1\leq j \leq
      t}}=\left(s_i^{[j-1]}\right)_{\substack{1\leq i \leq k \\
      1\leq j \leq t}}$ where $s_1,\dots,s_k\in \F_{q^k}$ form a basis
  of $\F_{q^k}$ over $\F_q$. Then:
  \[t_{ij}:=\sum_{l=1}^{k}n_{il}s_{lj}=\sum_{l=1}^{k}n_{il}s_l^{[j-1]}=
  \left(\sum_{l=1}^{k}n_{il}s_l\right)^{[j-1]},\] since the entries
  of $N$ are in $\F_q$. Let $\tau_i:=\sum_{l=1}^kn_{il}s_l\in
  \F_{q^k}$, then 
  \[NS=\mat{\tau_1 & \tau_1^{[1]} &\dots & \tau_2^{[t-1]} \\
    \tau_2 & \tau_1^{[1]} &\dots & \tau_2^{[t-1]} \\
    \vdots & \vdots & & \vdots \\
    \tau_r & \tau_r^{[1]} &\dots & \tau_r^{[t-1]} }.\] The elements
  $\tau_1,\dots,\tau_t\in \F_{q^k}$ are linearly independent over
  $\F_q$. Indeed, the linear combination
  \begin{eqnarray*}\sum_{i=1}^k\Al_i \tau_i & = &
    \sum_{i=1}^t\Al_i \sum_{l=1}^k n_{il}s_l =  \sum_{l=1}^k
    \left(\sum_{i=1}^k\Al_i n_{il}\right)s_l
  \end{eqnarray*}
  is zero only when $\sum_{i=1}^t\Al_i n_{il}=0$ for
  $l=1,\dots,t$. Since $N$ has full rank it follows that $\Al_1,\dots
  ,\Al_t$ must all be zero, leading to the linear independence of
  $\tau_1,\dots,\tau_t$.

  Now let $a_0,\dots, a_{t-1} \in \F_{q^k}$ be such that
  \[NS\mat{a_0 \\ \vdots \\ a_{t-1}} = 0,\]
  and consider the linearized polynomial
  $f=\sum_{i=0}^{t-1}a_ix^{[t-i]}$. The elements
  $\tau_1,\dots,\tau_t$ are by assumption roots of $f$. Since $f$
  is a linear map, the kernel of $f$ contains the subspace $\langle
  \tau_1,\dots,\tau_t\rangle\subset \F_{q^k}$. Therefore $f$ is a
  polynomial of degree $q^{t-1}$ with $q^t$ different roots, then
  $a_0=\cdots=a_{t-1}=0.$
\end{proof}

\begin{corollary}\label{c:consecutive}
  Let $S\in GL_k(\F_{q^k})$ be the matrix specified in
  \eqref{e:transf_mat} and $N\in \F_q^{k\times k}$. Then for any
  $J,L\subset(1,\dots,k)$ tuples of consecutive indices with
  $|J|=|L|=\rank(N)$, one has $[J;L]_{S^{-1}NS}\neq 0.$
\end{corollary}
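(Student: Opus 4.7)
The plan is to reduce the non-vanishing of the minor to two applications of Theorem~\ref{t:reg_mat} via a rank factorization of $N$. Set $t:=\rank(N)$ and write $N = AB$ with $A \in \F_q^{k\times t}$ and $B \in \F_q^{t\times k}$ both of rank $t$ — an ordinary rank factorization, valid since $N$ has its entries in the field $\F_q$. Denoting by $(S^{-1})_J$ the submatrix of $S^{-1}$ on the rows indexed by $J$ and by $S_L$ the submatrix of $S$ on the columns indexed by $L$, one then has
\[
\subm{J}{L}{S^{-1}NS} \;=\; \bigl((S^{-1})_J\cdot A\bigr)\cdot\bigl(B\cdot S_L\bigr),
\]
and both factors are $t\times t$, so it will be enough to show that each of them lies in $GL_t(\F_{q^k})$.

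For the right-hand factor $B\cdot S_L$, I would invoke Theorem~\ref{t:reg_mat} directly: $B\in\F_q^{t\times k}$ has full rank, and because $L=(l,l+1,\ldots,l+t-1)$ is a tuple of consecutive indices, the first column of $S_L$ is column $l$ of $S$, whose entries $1,\Ll^{[l-1]},\ldots,\Ll^{(k-1)\cdot[l-1]}$ are the image under the Frobenius $\sigma^{l-1}$ of the $\F_q$-basis $1,\Ll,\ldots,\Ll^{k-1}$ of $\F_{q^k}$, and hence form a basis themselves; by construction of $S$ each subsequent column of $S_L$ is the entrywise $q$-th power of the previous one. Theorem~\ref{t:reg_mat} then gives $B\cdot S_L \in GL_t(\F_{q^k})$.

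For the left-hand factor $(S^{-1})_J\cdot A$, I would transpose and apply Theorem~\ref{t:reg_mat} a second time to $A^T\cdot((S^{-1})_J)^T$. The required ``row-Moore'' structure of $S^{-1}$ is exactly what is provided by Lemma~\ref{l:SS-1}: the first row of $S^{-1}$ has entries $\Gl_0,\ldots,\Gl_{k-1}$ forming an $\F_q$-basis of $\F_{q^k}$, and each subsequent row of $S^{-1}$ is its entrywise $q$-th power. Since $J$ is consecutive, the columns of $((S^{-1})_J)^T$ are therefore a tower of Frobenius iterates starting from the $\F_q$-basis $\Gl_0^{[j-1]},\ldots,\Gl_{k-1}^{[j-1]}$ (still a basis because the Frobenius is an $\F_q$-linear bijection of $\F_{q^k}$). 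Combined with the full rank of $A^T\in\F_q^{t\times k}$, Theorem~\ref{t:reg_mat} yields $A^T\cdot((S^{-1})_J)^T \in GL_t(\F_{q^k})$, hence $(S^{-1})_J\cdot A$ is invertible. The main conceptual step — and the only step that requires genuine care — is realizing that Theorem~\ref{t:reg_mat} must be invoked twice, once on each side of $S^{-1}NS$, which is what forces the rank factorization $N=AB$ into the argument in the first place; the consecutive-index hypothesis and Lemma~\ref{l:SS-1} together guarantee the Moore-type form demanded by the theorem on the two sides.
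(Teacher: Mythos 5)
Your proof is correct and takes essentially the same route as the paper: a rank factorization $N=N_1N_2$, the splitting of the minor $[J;L]_{S^{-1}NS}$ into the product $\det\bigl((S^{-1})_JN_1\bigr)\cdot\det\bigl(N_2S_L\bigr)$, and two applications of Theorem~\ref{t:reg_mat} (one of them after transposing), using Lemma~\ref{l:SS-1} to supply the Moore-type structure on each side. You even make explicit a small point the paper leaves implicit, namely that for a consecutive tuple $L$ not starting at $1$ the entries of the first selected column of $S$ still form a basis because they are Frobenius images of the basis $1,\Ll,\dots,\Ll^{k-1}$.
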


\begin{proof}
  Let $t=\rank (N)$ and $J,L\subset (1,\dots,k)$ with $|J|=|L|=t$, let
  $H=(1,\dots,t)$. Let $N_1\in \F_q^{k\times t}$ and $N_2\in 
  \F_q^{t\times k}$ be matrices with full rank such that
  $N=N_1N_2$. One has
\[\Min{J}{L}{S^{-1}NS}=\Min{J}{L}{S^{-1}N_1\cdot N_2S}=
\Min{J}{H}{S^{-1}N_1}\Min{H}{L}{N_2S}.\] 

We can now focus on the characterization of the maximal minors of the
matrix $N_2S$. The following considerations will also work 
for the matrix $S^{-1}N_1$ considering its transpose.

The minor $\Min{H}{L}{N_2S}$ is the determinant of a square matrix
obtained by multiplying $N_2$ with the submatrix consisting of the
columns of $S$ indexed by $L$. Let $L$ contain consecutive indices. By
Lemma \ref{l:SS-1}, the submatrix of $S$ that we obtain together with
$N_2$ satisfy the conditions of Theorem \ref{t:reg_mat}. It follows
that $\Min{H}{L}{N_2S}\neq 0$.

As a consequence we have that $\Min{J}{L}{S^{-1}NS}\neq 0$ when
both $J$ and $L$ are tuples of consecutive indices.
\end{proof}

The following is a reformulation of Corollary \ref{c:consecutive} for
small rank matrices.

\begin{corollary}\label{t:ndrank}
Let $N\in \F_{q}^{k\times k}$ be a matrix such that $\rank(N)\leq
\frac{k-1}2$ and $S\in GL_k(\F_{q^k})$ defined as in
\eqref{e:transf_mat}. 
Then for any choice $J,L\subset (1,\dots,k)$ of consecutive indices
with $|J|=|L|=\rank(N)$,
\[\Min{J}{L}{S^{-1}NS}\neq 0.\]
In particular, $$\mathrm{ndrank}(S^{-1}NS)=\rank(N).$$
\end{corollary}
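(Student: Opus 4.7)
The first assertion is essentially immediate from Corollary~\ref{c:consecutive}: once $J$ and $L$ are tuples of consecutive indices of common length $\rank(N)$, the conclusion $[J;L]_{S^{-1}NS}\neq 0$ is exactly the statement of that corollary, with no further work required. So the plan is to focus on the identity $\mathrm{ndrank}(S^{-1}NS)=\rank(N)$, which by the definition of $\mathrm{ndrank}$ splits into two parts.

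First I would handle the vanishing half. Setting $t=\rank(N)+1$, every $t\times t$ minor of $S^{-1}NS$ vanishes because $\rank(S^{-1}NS)=\rank(N)<t$, independently of whether the row and column index tuples are disjoint. This shows that the set over which the minimum in the definition of $\mathrm{ndrank}$ is taken contains $t=\rank(N)+1$, hence $\mathrm{ndrank}(S^{-1}NS)\leq \rank(N)$.

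Next I would produce a nonvanishing disjoint minor of size $\rank(N)$ in order to establish the reverse inequality. Here the hypothesis $\rank(N)\leq (k-1)/2$ is the key combinatorial input: it gives $2\rank(N)\leq k-1<k$, so there is room inside $(1,\dots,k)$ for two disjoint blocks of consecutive indices of length $\rank(N)$. Concretely, taking $J=(1,\dots,\rank(N))$ and $L=(\rank(N)+1,\dots,2\rank(N))$, both tuples are consecutive and disjoint, and Corollary~\ref{c:consecutive} yields $[J;L]_{S^{-1}NS}\neq 0$. Therefore the minimum defining $\mathrm{ndrank}(S^{-1}NS)$ is strictly greater than $\rank(N)$, so $\mathrm{ndrank}(S^{-1}NS)\geq \rank(N)$, which combined with the previous inequality gives the claimed equality.

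There is no real obstacle: the corollary is purely a packaging of Corollary~\ref{c:consecutive} together with the elementary observation that the rank bound $\rank(N)\leq (k-1)/2$ leaves just enough room in $\{1,\dots,k\}$ for two disjoint consecutive windows of the appropriate size. The only thing to be careful about is matching the convention in the definition of $\mathrm{ndrank}$ (the $-1$ shift), which is handled by separating the ``all minors of size $\rank(N)+1$ vanish'' part from the ``some minor of size $\rank(N)$ is nonzero'' part as above.
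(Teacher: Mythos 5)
Your proof is correct and matches the paper's intent: the paper states this corollary without proof, describing it as a reformulation of Corollary~\ref{c:consecutive}, and your argument is exactly the straightforward unpacking of that (the upper bound from $\rank(S^{-1}NS)=\rank(N)$, plus the existence of two disjoint consecutive windows of length $\rank(N)$ inside $(1,\dots,k)$ guaranteed by $2\rank(N)\leq k-1$). The only point left tacit is why the minimum in the definition of $\mathrm{ndrank}$ cannot be attained at some $t\leq\rank(N)$; this is settled by noting that Laplace expansion of a disjoint $(t+1)$-minor involves only disjoint $t$-minors, so vanishing of all disjoint minors of one size propagates upward, and your single nonzero disjoint minor of size $\rank(N)$ therefore excludes every $t\leq\rank(N)$ from the set being minimized.
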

  
\subsection{The Decoding Algorithm}

In this subsection we devise an efficient minimum-distance decoding
algorithm for spread codes, and establish some closely related
mathematical results. 

We start by reducing the minimum-distance decoding algorithm for
$\Svs_r$ to at most $r-1$ instances of the minimum-distance
decoding algorithm for $\Svs_2$. Notice that the minimum-distance
decoders for the case $r=2$ can be run in parallel.

Let $\Rvs=\rs\mat{R_1 & \cdots & R_r}$ be a received
  space. We assume that $$1\leq \tilde{k}=\rank(\Rvs)\leq k.$$

Algorithm~\ref{a:dec_alg_n=rk} on page~\pageref{a:dec_alg_n=rk} is
based on the following lemma. 

\begin{lemma}\label{l:n=rk}
  Let $\Svs_r$ be a spread code, and $\Rvs=\rs\mat{R_1& \cdots &
    R_r}\in \Gr{\F_q}{\tilde{k}}{rk}$. Assume there exists a
  $\Cvs=\rs\mat{C_1 & \cdots & C_r}\in \Svs$ such that $d(\Rvs,\Cvs)
  <k$. It holds
  \[C_i=0 \iff \rank(R_i)\leq \frac{\tilde{k}-1}2.\]
\end{lemma}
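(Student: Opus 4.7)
The plan is to translate the hypothesis $d(\Rvs,\Cvs)<k$ into a lower bound on $\dim(\Rvs\cap\Cvs)$, and then read off the rank of $R_i$ by projecting to the $i$-th block of $k$ coordinates. Using the identity $d(\Rvs,\Cvs)=\dim\Rvs+\dim\Cvs-2\dim(\Rvs\cap\Cvs)$, the hypothesis becomes $\tilde k+k-2\dim(\Rvs\cap\Cvs)<k$, i.e.,
\[
\dim(\Rvs\cap\Cvs)>\tilde k/2.
\]

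Next I would introduce, for each $i\in\{1,\dots,r\}$, the projection $\pi_i\colon \F_q^n\to \F_q^k$ onto the $i$-th block of $k$ consecutive coordinates. Then $\pi_i(\Rvs)=\rs(R_i)$ and $\pi_i(\Cvs)=\rs(C_i)$. The essential structural input is that $C_i\in\F_q[P]$ and $\F_q[P]\cong\F_{q^k}$ is a \emph{field}: hence $C_i$ is either zero, or invertible. Consequently $\pi_i(\Cvs)$ is either $0$ or all of $\F_q^k$, and in the latter case $\pi_i|_\Cvs$ is an isomorphism of $k$-dimensional spaces.

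It then remains to split into two cases. If $C_i=0$, every vector of $\Cvs$ has zero $i$-th block, so $\Rvs\cap\Cvs\subseteq\ker(\pi_i|_\Rvs)$, giving $\dim(\Rvs\cap\Cvs)\le\tilde k-\rank(R_i)$; combining with $\dim(\Rvs\cap\Cvs)>\tilde k/2$ yields $\rank(R_i)<\tilde k/2$, hence $\rank(R_i)\le(\tilde k-1)/2$. Conversely, if $C_i\neq0$, the isomorphism $\pi_i|_\Cvs$ restricts to an injection on $\Rvs\cap\Cvs$, so $\dim(\Rvs\cap\Cvs)\le\dim\pi_i(\Rvs\cap\Cvs)\le\rank(R_i)$, forcing $\rank(R_i)>\tilde k/2>(\tilde k-1)/2$. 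The contrapositive of this second case, together with the first, gives the claimed equivalence.

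The only subtle point is the first case, where one should be careful that $\pi_i(\Cvs)$ really is either $0$ or $\F_q^k$ (no intermediate possibilities), but this follows at once from the field property of $\F_q[P]$. No computation on the special shape of $\Cvs$ (reduced row echelon form, position of the identity block, etc.) is required here, which makes the argument short and clean.
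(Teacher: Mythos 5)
Your proof is correct and follows essentially the same route as the paper's: both translate $d(\Rvs,\Cvs)<k$ into $\dim(\Rvs\cap\Cvs)>\tilde k/2$ and then exploit the dichotomy that $C_i\in\F_q[P]$ is either zero or invertible, bounding $\dim(\Rvs\cap\Cvs)$ above by $\tilde k-\rank(R_i)$ in the first case and by $\rank(R_i)$ in the second. The only cosmetic difference is that the paper phrases the forward direction via the rank of the stacked matrix $\left(\begin{smallmatrix}0&I\\R_i&R_j\end{smallmatrix}\right)$ (using the identity block $C_j=I$), whereas your kernel argument avoids invoking that block.
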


\begin{proof}
  \begin{description}
\item[\fbox {$\Rightarrow$}] Let $i\in \{1,\dots,r\}$ be an index
  such that $C_i=0$. By the construction of a spread code there
  exists a $j\in \{1,\dots,r\}$ with $C_j=I$. We claim that $\dim
  (\Cvs\cap\Rvs)>\frac{\tilde{k}}2$. In fact, $$k>\dim\Cvs+\dim\Rvs-2\dim
  (\Cvs\cap\Rvs)=k+\tilde{k}-2\dim(\Cvs\cap\Rvs).$$ From the claim it
  follows that
  \[\rank \mat{0 &I \\ R_i& R_j}\leq \rank \mat{
      C_1 & \cdots & C_r \\ R_1 & \cdots &
      R_r}=k+\tilde{k}-\dim(\Cvs\cap\Rvs) < k+\frac{\tilde{k}}2.\]
 This proves that \[\rank(R_i) < \frac{\tilde{k}}2.\] 

\item[\fbox {$\Leftarrow$}] Let $i\in \{1,\dots,r\}$ be such that
  $\rank(R_i)\leq \frac{\tilde{k}-1}2$ and assume by contradiction that
  $C_i\in \F_q[P]^*$. It follows that
  \[\dim (\Cvs\cap\Rvs)\leq\dim (\rs(C_i)\cap\rs(R_i))=\dim (\rs(R_i))\leq
  \frac{\tilde{k}-1}2 \] 
  which contradicts the assumption that $d(\Cvs,\Rvs)=k+\tilde{k}-2\dim(\Cvs
  \cap \Rvs)<k$.
\end{description}
\end{proof}

\begin{algorithm}\label{a:dec_alg_n=rk}
  \caption{Minimum-distance decoding algorithm: $n=rk$, $r>2$}
  \Input{$\Rvs=\rs\mat{R_1 & \cdots & R_r}\in \Gr{\F_q}{\tilde{k}}{rk}$,
    $r>2$, \\
    $P\in GL_k(\F_q)$ the companion matrix of $p\in \F_q[x]$ and \\
    $S\in GL_k(\F_{q^k})$ its diagonalizing matrix.}  

  \Output{$\Cvs \in\Svs_r \subset \Gr{\F_q}{k}{rk}$ such that
    $d(\Rvs,\Cvs)<k$, if such a $\Cvs$ exists.}
  

  Let $r_i=\rank(R_i)$ for $i=1,\dots,r$\; 
  

  \If{$r_i\leq \frac{\tilde{k}-1}2$ for all $i\in\{1,\dots,r\}$}{\Return there exists
    no $\Cvs\in \Svs_r$ such that $d(\Rvs,\Cvs)<k$ }

  
  Let $j=\min\left\{i\in \{1,\dots r\} \mid r_i>\frac{\tilde{k}-1}2\right\}$\;

  
  \For{$i\in \{1,\dots,r\}$ and $r_i\leq\frac{\tilde{k}-1}2$}{$C_i=0\in
    \F_q^{k\times k}$\;}

  
  \For{$j<i\leq r$ and $r_i>\frac{k-1}2$}{
    Run a minimum-distance decoding algorithm for $r=2$ with input
    $\Rvs=\rs\mat{R_j & R_i}$, $P$ and $S$\;
    \If{minimum-distance decoding algorithm returns no
      $\Cvs\in\Svs_2$}{\Return there exists no $\Cvs\in \Svs_r$ such
      that $d(\Rvs,\Cvs)<k$\; 
    \lElse{let $C_i\in \F_q[P]$ such that $\Cvs=\rs\mat{I & C_i}$\;}}
  }


  \Return $\Cvs=\rs\mat{C_1 & \cdots &C_r}$.
\end{algorithm}

\medskip

Because of Lemma~\ref{l:n=rk}, we may now focus on designing a
minimum-distance decoding algorithm for the case where $n=2k$. For the
remainder of this subsection, we consider the spread code
\begin{equation*}
\Svs=\Svs_2=\left\{ \rs\mat{I & A}\mid A\in
  \F_{q}[P]\right\} \cup \left\{\rs\mat{0 & I} \right\}
\end{equation*}
where $I$ and $0$ are respectively the identity and the zero
matrix of size $k \times k$.

Since a minimum-distance decoding algorithm decodes uniquely up to half the
minimum distance, we are interested in writing an algorithm with the
following specifications:
\begin{description}
\item[Input]  $\Rvs=\rs\mat{R_1 & R_2}\in \Gr{\F_q}{\tilde{k}}{2k}$, \\
    $P\in GL_k(\F_q)$ the companion matrix of $p\in \F_q[x]$ and \\
    $S\in GL_k(\F_{q^k})$ its diagonalizing matrix.
\item[Output] $\Cvs \in\Svs \subset \Gr{\F_q}{k}{2k}$ such that
  $d(\Rvs,\Cvs)<\frac{d(\Svs)}2=k$, if such a $\Cvs$ exists.
\end{description}

We first give a membership criterion for spread codes. We follow the
notation given at the beginning of this section.

\begin{proposition}[{\cite[Lemma~5 and Corollary~6]{ma08p}}]\label{diag}
  Let $A\in GL_k(\F_q)\cup \{0\}$. The following are equivalent:
  \begin{enumerate}
  \item $A\in \F_q[P]$.
  \item $S^{-1}AS$ is a diagonal matrix.
  \item $AP=PA$.
  \end{enumerate}
  If this is the case, then $S^{-1}AS=\Delta(\Ll)$ for some $\Ll\in
  \F_{q^k}$.
\end{proposition}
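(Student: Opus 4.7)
The plan is to prove the cyclic chain $(1) \Rightarrow (2) \Rightarrow (3) \Rightarrow (1)$, and to extract the closing claim (that $S^{-1}AS$ has the form $\Delta(\Ll)$) as a byproduct of $(1) \Rightarrow (2)$.

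For $(1) \Rightarrow (2)$ together with the final statement, I would write $A = \sum_{i=0}^{k-1} a_i P^i$ with $a_i \in \F_q$. By Lemma~\ref{l:S}, $S^{-1}PS = \Delta(\Ll)$, so $S^{-1}AS = \sum_{i=0}^{k-1} a_i \Delta(\Ll)^i$. The key observation is that $\Delta(\Ll)^i = \Delta(\Ll^i)$, because each Frobenius map $x \mapsto x^{[j]}$ is a ring endomorphism of $\F_{q^k}$. Then $\F_q$-linearity of Frobenius yields $\sum_i a_i \Delta(\Ll^i) = \Delta\bigl(\sum_i a_i \Ll^i\bigr)$, so $S^{-1}AS = \Delta(\Ll')$ with $\Ll' := \sum a_i \Ll^i \in \F_{q^k}$. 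This proves (2) and the concluding form of the proposition simultaneously.

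For $(2) \Rightarrow (3)$, diagonal matrices commute: $S^{-1}AS$ and $S^{-1}PS = \Delta(\Ll)$ commute, and conjugating back by $S$ gives $AP = PA$.

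The main obstacle is $(3) \Rightarrow (1)$: assuming $AP = PA$, deduce $A \in \F_q[P]$. My approach is to exploit the fact that, since $p \in \F_q[x]$ is irreducible of degree $k$, it is simultaneously the characteristic and the minimal polynomial of its companion matrix $P$; in particular, $P$ is non-derogatory (cyclic). A standard result from linear algebra then identifies the centralizer of a non-derogatory matrix with the polynomial algebra it generates: $\F_q[P]$ already has $\F_q$-dimension $k$ (the degree of the minimal polynomial of $P$), while the centralizer of any $k \times k$ matrix has dimension at least $k$, with equality precisely when the matrix is non-derogatory. A dimension count thus forces $\{M \in \Mat{k}{k}{\F_q} : MP = PM\} = \F_q[P]$, and so $A \in \F_q[P]$. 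The hypothesis $A \in GL_k(\F_q) \cup \{0\}$ is consistent, since $\F_q[P] \cong \F_{q^k}$ is a field.
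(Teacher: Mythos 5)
Your proof is correct. Note that the paper itself gives no argument for this proposition --- it is quoted verbatim from \cite[Lemma~5 and Corollary~6]{ma08p} --- so there is no internal proof to compare against; your write-up simply supplies the missing details, and it does so soundly. The cycle $(1)\Rightarrow(2)\Rightarrow(3)\Rightarrow(1)$ is well chosen: the computation $S^{-1}P^iS=\Delta(\Ll)^i=\Delta(\Ll^i)$ together with $a_i^{[j]}=a_i$ for $a_i\in\F_q$ correctly yields $S^{-1}AS=\Delta\bigl(\sum_i a_i\Ll^i\bigr)$, which gives both $(2)$ and the closing claim; $(2)\Rightarrow(3)$ is immediate since $S^{-1}PS=\Delta(\Ll)$ is itself diagonal; and for $(3)\Rightarrow(1)$ the identification of the centralizer of a non-derogatory matrix with $\F_q[P]$ is a legitimate classical fact. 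One small remark: for that last step you could avoid invoking the general lower bound $\dim\{M: MP=PM\}\geq k$ by using a cyclic vector $v$ for $P$ directly --- write $Av=f(P)v$ in the basis $v,Pv,\dots,P^{k-1}v$ and propagate via $AP^iv=P^iAv$ to conclude $A=f(P)$ --- but this is a matter of taste, not a gap. You are also right that the hypothesis $A\in GL_k(\F_q)\cup\{0\}$ plays no active role beyond consistency, since $\F_q[P]\cong\F_{q^k}$ is a field.
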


From the proposition, we get an efficient algorithm to test whether a
received vector space is error-free. 

\begin{corollary}[{\cite[Corollary~6]{ma08p}}, Membership Test]\label{membership}
  Let $\Rvs=\rs\mat{R_1 & R_2}\in \Gr{\F_q}{k}{2k}$. Then $\Rvs\in \Svs$ if and only
  if either $R_1\in GL_k(\F_q)$ and $S^{-1}R_1^{-1}R_2S$ is diagonal
  or  $R_1=0$ and $R_2\in GL_k(\F_q)$.
\end{corollary}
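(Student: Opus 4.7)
The plan is to unpack the membership condition $\Rvs \in \Svs$ by enumerating the two shapes an element of $\Svs$ can take, and then reduce each shape to a concrete test on $R_1$ and $R_2$.

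First, I would treat the easy case $\Rvs = \rs\mat{0 & I}$. Viewed as a subspace, this equals $\{0\}^k \times \F_q^k$, so $\rs\mat{R_1 & R_2}$ equals it if and only if every row of $R_1$ is zero and the rows of $R_2$ span $\F_q^k$; that is, $R_1 = 0$ and $R_2 \in GL_k(\F_q)$.

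Next, I would handle the generic shape $\Rvs = \rs\mat{I & A}$ with $A \in \F_q[P]$. The key structural remark is that in this case the projection of $\Rvs$ onto the first $k$ coordinates is all of $\F_q^k$, which forces $R_1$ to have rank $k$, i.e., $R_1 \in GL_k(\F_q)$. Conversely, whenever $R_1 \in GL_k(\F_q)$, left multiplication by $R_1^{-1}$ is a change of basis of the row space, so $\rs\mat{R_1 & R_2} = \rs\mat{I & R_1^{-1}R_2}$, and membership in $\Svs$ of this form reduces exactly to the requirement $R_1^{-1}R_2 \in \F_q[P]$. Since $\F_q[P] \cong \F_{q^k}$ is a field, every nonzero element is invertible, hence $\F_q[P] \subseteq GL_k(\F_q) \cup \{0\}$, and Proposition~\ref{diag} applies to $A := R_1^{-1}R_2$ and yields $A \in \F_q[P]$ if and only if $S^{-1}AS$ is diagonal.

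Finally, I would combine the two cases. Note that the two shapes are mutually exclusive: in the first, $R_1$ is invertible, in the second, $R_1 = 0$. So $\Rvs \in \Svs$ holds if and only if exactly one of the two conditions in the statement is satisfied. The whole proof is essentially bookkeeping once Proposition~\ref{diag} is invoked, so there is no real obstacle; the only point worth stating carefully is why $R_1 \in GL_k(\F_q)$ (rather than merely $R_1 \neq 0$) is forced in the first case, which follows from $\dim \Rvs = k$ together with the full-rank projection onto the first $k$ columns.
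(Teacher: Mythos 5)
Your overall decomposition is the right one, and it is essentially what the paper intends: the corollary is presented as an immediate consequence of Proposition~\ref{diag}, and no separate proof is given. Your treatment of the case $\rs\mat{0 & I}$, and your observation that $\rs\mat{R_1 & R_2}=\rs\mat{I & A}$ forces $R_1\in GL_k(\F_q)$ because the projection onto the first $k$ coordinates must be all of $\F_q^k$, are both correct.

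There is, however, one genuine gap, in the ``if'' direction of the generic case. Proposition~\ref{diag} carries the hypothesis $A\in GL_k(\F_q)\cup\{0\}$, and when you apply it to $A:=R_1^{-1}R_2$ you only know that $A$ is a matrix over $\F_q$ with $S^{-1}AS$ diagonal; you do not yet know that $A$ is invertible or zero. Your remark that $\F_q[P]\subseteq GL_k(\F_q)\cup\{0\}$ points the wrong way: it shows that elements of $\F_q[P]$ satisfy the hypothesis of the proposition, not that $R_1^{-1}R_2$ does. A priori a nonzero singular $A$ could be diagonalized by $S$ (say with $S^{-1}AS=\mathrm{diag}(1,0,\dots,0)$), in which case the test ``$S^{-1}R_1^{-1}R_2S$ is diagonal'' would accept a word that is not in $\Svs$. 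To close the gap you must rule this out for matrices with entries in $\F_q$. One way: since the $(i,j)$ entry of $S$ is $\Ll^{(i-1)\cdot[j-1]}$, the entrywise $q$-th power of $S$ is $S$ with its columns cyclically shifted; writing $A=SDS^{-1}$ with $D=\mathrm{diag}(d_1,\dots,d_k)$, the condition that $A$ equal its own entrywise $q$-th power forces $d_{i+1}=d_i^{q}$ cyclically, hence $D=\Delta(\mu)$ for some $\mu\in\F_{q^k}$ and $A=S\Delta(\mu)S^{-1}\in\F_q[P]$. Alternatively, any matrix diagonalized by $S$ commutes with $P=S\Delta(\Ll)S^{-1}$, and the centralizer in $\F_q^{k\times k}$ of the non-derogatory matrix $P$ is exactly $\F_q[P]$, which is a field, so $A$ is zero or invertible after all and Proposition~\ref{diag} then applies. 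Either argument completes your proof; without one of them the converse implication is unsupported.
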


The following is an easy consequence of Lemma~\ref{l:n=rk}. It allows
us to efficiently test whether the sent codeword was $\rs\mat{0 & I}$, or
$\rs\mat{I & 0}$.

\begin{corollary}\label{p:infinity}
Let $\Rvs=\rs\mat{R_1 & R_2}\in \Gr{\F_q}{\tilde{k}}{2k}$ be a
received space, and assume that it is uniquely decodable. The
following are equivalent: 
\begin{itemize}
\item $\rank (R_1)\leq \frac{\tilde{k}-1}2$, and
\item the output of a minimum-distance decoding algorithm is
  $\rs\mat{0 & I}$. 
\end{itemize}
The analogous statement holds for $R_2$.
\end{corollary}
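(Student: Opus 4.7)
The plan is to invoke Lemma~\ref{l:n=rk} directly in the special case $r=2$ and then translate the condition ``$C_1=0$'' into a statement about the identity of the output codeword, using the very restrictive shape of the elements of $\Svs = \Svs_2$.

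First I would note that since $\Rvs$ is uniquely decodable by hypothesis, there exists $\Cvs = \rs\mat{C_1 & C_2} \in \Svs$ with $d(\Rvs,\Cvs) < k$, and this $\Cvs$ is (uniquely) what a minimum-distance decoder returns. Applying Lemma~\ref{l:n=rk} with $r=2$ to this pair $(\Rvs,\Cvs)$ gives the equivalence
\[
C_1 = 0 \iff \rank(R_1)\leq \frac{\tilde{k}-1}{2}.
\]

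Next, I would use the explicit description
\[
\Svs = \left\{\rs\mat{I & A}\mid A\in \F_q[P]\right\}\cup \left\{\rs\mat{0 & I}\right\}
\]
given at the start of the subsection. Every codeword is written in row-reduced echelon form, so $C_1 \in \{I,0\}$; the value $C_1 = 0$ occurs for exactly one codeword, namely $\rs\mat{0 & I}$. Thus the condition $C_1 = 0$ is equivalent to the decoder output being $\rs\mat{0 & I}$, which combined with the equivalence above finishes the proof of the stated equivalence for $R_1$.

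Finally, the analogous statement for $R_2$ is obtained by the symmetric argument: applying Lemma~\ref{l:n=rk} to the index $i=2$ gives $C_2 = 0 \iff \rank(R_2)\leq \frac{\tilde{k}-1}{2}$, and since the unique codeword of $\Svs$ with $C_2 = 0$ is $\rs\mat{I & 0}$ (corresponding to $A=0$), this condition is equivalent to the decoder returning $\rs\mat{I & 0}$. There is no real obstacle here; this corollary is essentially a bookkeeping consequence of Lemma~\ref{l:n=rk} combined with the enumeration of the codewords of $\Svs_2$.
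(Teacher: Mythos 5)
Your proposal is correct and matches the paper's intent exactly: the paper gives no explicit proof but states the corollary as an easy consequence of Lemma~\ref{l:n=rk}, which is precisely the reduction you carry out (combining the lemma's equivalence $C_i=0 \iff \rank(R_i)\leq \frac{\tilde{k}-1}{2}$ with the enumeration of the codewords of $\Svs_2$ and the uniqueness of the decoder output under the unique-decodability hypothesis).
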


Because of Corollary~\ref{p:infinity}, we can restrict our decoding
algorithm to look for codewords of the form $\Cvs = \rs \mat{I & A}$
where $A\in \F_q[P]$.  Since there is an obvious symmetry in the
construction of a spread code, we assume without loss of generality
that 
\[\rank (R_1) \geq \rank (R_2) > \frac{\tilde{k}-1}2.\]

\medskip

With the following theorem we translate the unique decodability
condition into a rank condition, and then into a
greatest common divisor condition.

\begin{theorem}\label{t:unique_dec_sol}
  Let $\Rvs\in \Gr{\F_q}{\tilde{k}}{n}$ be a subspace with \[\rank(R_1)\geq
  \rank(R_2) > \frac{\tilde{k}-1}2.\] The following are equivalent:
  \begin{itemize}  
  \item $\Rvs$ is uniquely decodable.
  \item There exists a unique $\mu\in \F_{q^k}$ such that
    \begin{equation}\label{e:rank_cond}
      \rank(S^{-1}R_1S\Delta(\mu)-S^{-1}R_2S)\leq\frac{\tilde{k}-1}2
    \end{equation}
  \item 
    $x-\mu=\gcd\left(\left\{[J;L]_{S^{-1}R_1S\Delta(x)-S^{-1}R_2S}
        \mid J,L\in \{1,\dots,k\}^{\lfloor\frac{\tilde{k}+1}2\rfloor}
      \right\},x^{[k]}-x\right)$, for some $\mu\in \F_{q^k}$.
  \end{itemize}
\end{theorem}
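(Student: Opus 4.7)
My plan is to establish (i)$\Leftrightarrow$(ii) via a dimension/rank computation, and then (ii)$\Leftrightarrow$(iii) by a standard minor-vanishing and gcd argument. The rank hypothesis on $R_1,R_2$ is used only to discard the exceptional codewords: by Corollary~\ref{p:infinity}, the assumption $\rank(R_1)\geq\rank(R_2)>(\tilde k-1)/2$ rules out both $\rs\mat{0 & I}$ and $\rs\mat{I & 0}$, so any candidate $\Cvs\in\Svs$ with $d(\Rvs,\Cvs)<k$ has the form $\Cvs=\rs\mat{I & A}$ for some $A\in\F_q[P]\setminus\{0\}$. By Proposition~\ref{diag}, these $A$ are in bijection with nonzero $\mu\in\F_{q^k}$ via $S^{-1}AS=\Delta(\mu)$, which reduces the search to identifying the admissible $\mu$.

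For (i)$\Leftrightarrow$(ii), I would translate the distance condition into a rank condition. A block row reduction
\[\begin{pmatrix}R_1 & R_2\\ I & A\end{pmatrix}\sim\begin{pmatrix}0 & R_2-R_1A\\ I & A\end{pmatrix}\]
gives $\dim(\Rvs+\Cvs)=k+\rank(R_2-R_1A)$, and Grassmann then yields $d(\Rvs,\Cvs)=k-\tilde k+2\rank(R_2-R_1A)$. Thus $d(\Rvs,\Cvs)<k$ iff $\rank(R_2-R_1A)\leq(\tilde k-1)/2$. Left-multiplying by $S^{-1}$ and right-multiplying by $S$ (both of which preserve rank) and substituting $S^{-1}AS=\Delta(\mu)$ rewrites this as
\[\rank(S^{-1}R_1S\,\Delta(\mu)-S^{-1}R_2S)\leq(\tilde k-1)/2,\]
the inequality of (ii). Uniqueness of such a $\mu$ corresponds to uniqueness of $\Cvs$, which is automatic from $d_{\min}(\Svs)=2k$ and the triangle inequality.

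For (ii)$\Leftrightarrow$(iii), set $M(x):=S^{-1}R_1S\,\Delta(x)-S^{-1}R_2S\in\F_{q^k}[x]^{k\times k}$. The rank bound $\rank(M(\mu))\leq(\tilde k-1)/2$ is equivalent to the simultaneous vanishing at $\mu$ of all $t\times t$ minors $[J;L]_{M(x)}$ for $t=\lfloor(\tilde k+1)/2\rfloor$, i.e., $\mu$ is a common root of this family of polynomials in $\F_{q^k}[x]$. Requiring in addition that $\mu\in\F_{q^k}$ amounts to requiring $\mu$ to be a root of $x^{[k]}-x$. Since $x^{[k]}-x$ is squarefree and splits completely over $\F_{q^k}$, the polynomial $g(x):=\gcd(\{[J;L]_{M(x)}\},\,x^{[k]}-x)\in\F_{q^k}[x]$ is squarefree and its set of roots is exactly the set of $\mu\in\F_{q^k}$ satisfying the rank condition. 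A unique such $\mu$ therefore exists iff $g(x)=x-\mu$, which is (iii).

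The main care will be needed in the middle step: writing $\dim(\Rvs\cap\Cvs)$ correctly when $\mat{R_1 & R_2}$ need not be of full row rank, and justifying that left-multiplication by $S^{-1}$ is well-defined under the convention that $R_1,R_2$ are treated as $k\times k$ blocks (possibly with some zero rows). Once this is in hand, everything else is essentially formal, relying only on Proposition~\ref{diag} and the squarefreeness of $x^{[k]}-x$.
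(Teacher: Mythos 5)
Your proposal is correct and follows essentially the same route as the paper's proof: the same block row reduction giving $d(\Rvs,\Cvs)=k-\tilde{k}+2\,\rank(R_2-R_1A)$, the same conjugation by $S$ via Proposition~\ref{diag}, and the same passage from the rank bound to the vanishing of all minors of size $\lfloor\frac{\tilde{k}+1}{2}\rfloor$ and thence to the gcd with $x^{[k]}-x$. The extra details you supply (ruling out the exceptional codewords via Corollary~\ref{p:infinity}, uniqueness of $\Cvs$ from $d_{\min}=2k$, and squarefreeness of $x^{[k]}-x$) are all sound and merely make explicit what the paper leaves implicit.
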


\begin{proof}
  $\Rvs$ is uniquely decodable if and only if there exists a unique
  matrix $X\in \F_q[P]$ such that
  \begin{eqnarray*} k-1\geq d(\Rvs,\Cvs) & = & 2 \rank \mat{I&X\\R_1&R_2}
    -(k+\tilde{k}) \\ = 2\rank \mat{I & X\\0 & R_1X-R_2}-(k+\tilde{k})
  & = & 2\rank(R_1X-R_2)+k-\tilde{k}.
  \end{eqnarray*}
  Furthermore we get that $\rank (R_1X-R_2)=\rank
  (S^{-1}R_1S\Delta(x)-S^{-1}R_2S)$, where $S^{-1}XS=\Delta(x)$ is a
  consequence of Lemma \ref{diag}. The existence of a unique solution
  $X\in \F_q[P]$ is then equivalent to the existence of a unique
  $\mu\in \F_{q^k}$ such that
  \[\rank (S^{-1}R_1S\Delta(\mu)-S^{-1}R_2S)\leq \frac{\tilde{k}-1}2.\]

  This is equivalent to the condition that all minors of size
  $\lfloor\frac{\tilde{k}+1}2\rfloor$ of $S^{-1}R_1S\Delta(\mu)-S^{-1}R_2S$
  are zero. This leads to a nonempty system of polynomials in the
  variable $x$ having a unique solution $\mu\in \F_{q^k}$. Therefore
  \[x-\mu \mid \gcd
  \left(\left\{[J;L]_{S^{-1}R_1S\Delta(x)-S^{-1}R_2S} \mid J,L\in
      \{1,\dots,k\}^{\lfloor\frac{\tilde{k}+1}2\rfloor}\right\},
    x^{[k]}-x\right).\] Equality follows from the uniqueness of $\mu$.
\end{proof}

Theorem~\ref{t:unique_dec_sol} has the following immediate
consequence, which constitutes a step forward towards the design of
our decoding algorithm. 

\begin{corollary}\label{mu}
  Assume that the received space $\Rvs\in \Gr{\F_q}{\tilde{k}}{n}$ is uniquely
  decodable. Then it decodes to \[\Cvs=\rs\mat{I &
    S\Delta(\mu)S^{-1}}\in \Svs\] where $\mu\in\F_{q^k}$ is a root of
  all the minors of size $\lfloor\frac{\tilde{k}+1}2\rfloor$ of
  $S^{-1}R_1S\Delta(x)-S^{-1}R_2S$. 
\end{corollary}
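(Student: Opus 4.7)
The plan is to extract Corollary~\ref{mu} as a direct repackaging of Theorem~\ref{t:unique_dec_sol}, making the decoded codeword explicit rather than implicit. The real content has already been established in the theorem; all that remains is to translate the scalar $\mu \in \F_{q^k}$ back into the matrix $X \in \F_q[P]$ that defines the codeword.

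First, I would invoke Theorem~\ref{t:unique_dec_sol}. Since $\Rvs$ is uniquely decodable by assumption, and since $\rank(R_1)\geq \rank(R_2)>\frac{\tilde{k}-1}{2}$ is the hypothesis under which the theorem applies, we obtain a unique $\mu \in \F_{q^k}$ such that
\[
\rank\bigl(S^{-1}R_1S\Delta(\mu)-S^{-1}R_2S\bigr)\leq \frac{\tilde{k}-1}{2}.
\]
The rank condition is equivalent to the vanishing of every minor of size $\lfloor(\tilde{k}+1)/2\rfloor$ of $S^{-1}R_1S\Delta(x)-S^{-1}R_2S$ at $x=\mu$, which is exactly the characterization of $\mu$ stated in the corollary.

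Next, I would recover the codeword. Tracing through the proof of Theorem~\ref{t:unique_dec_sol}, the connection between the decoded matrix $X\in\F_q[P]$ and $\mu\in\F_{q^k}$ is given by Proposition~\ref{diag}: an element $X\in GL_k(\F_q)\cup\{0\}$ lies in $\F_q[P]$ if and only if $S^{-1}XS = \Delta(\lambda)$ for some $\lambda\in\F_{q^k}$. Uniqueness of the decoded $X$ corresponds bijectively to uniqueness of the scalar $\mu$, and solving $S^{-1}XS=\Delta(\mu)$ gives $X = S\Delta(\mu)S^{-1}$. Hence the unique codeword at distance less than $k$ from $\Rvs$ is
\[
\Cvs = \rs\mat{I & X} = \rs\mat{I & S\Delta(\mu)S^{-1}},
\]
as claimed. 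There is no real obstacle here; the only thing to verify is that the $X$ produced from $\mu$ is indeed in $\F_q[P]$, which is immediate from Proposition~\ref{diag} because $S^{-1}XS$ is diagonal by construction.
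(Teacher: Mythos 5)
Your proposal is correct and matches the paper's intent exactly: the paper states Corollary~\ref{mu} as an immediate consequence of Theorem~\ref{t:unique_dec_sol} without a separate proof, and your argument simply makes explicit the translation from the unique scalar $\mu$ to the codeword $X=S\Delta(\mu)S^{-1}$ via Proposition~\ref{diag}, which is precisely the correspondence already established inside the proof of that theorem.
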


Under the unique decodability assumption, decoding a received space
$\Rvs$ corresponds to computing the $\mu$ from
Corollary~\ref{mu}. However, computing the greatest common divisor of
all the minors of $S\Delta(x)S^{-1}$ of the appropriate size does not
constitute an efficient algorithm. 

The following theorem provides a significant computational
simplification of this approach. In the proof, we give a procedure to
construct one minor of $S\Delta(x)S^{-1}$ of the appropriate size,
whose factorization we can explicitly describe. In particular, we
give explicit formulas for its roots. In practice, one
wants to proceed as follows: First, find such a minor and write down
all of its roots, and second, for each root $\mu$ check whether
$\rank(S\Delta(\mu)S^{-1})\leq \lfloor\frac{\tilde{k}-1}2\rfloor.$

\begin{theorem}\label{t:suitable_poly}
  Let $\Rvs=\rs\mat{R_1& R_2}\in \Gr{\F_q}{\tilde{k}}{2k}$ be uniquely
  decodable with $\rank(R_1)\geq \rank(R_2) >
  \frac{\tilde{k}-1}2$, $S\in GL_k(\F_{q^k})$ a matrix diagonalizing $P$ and
  $M\in GL_k(\F_{q^k})$ such that $MS^{-1}\mat{R_1& R_2}S$ is in row
  reduced echelon form. Let
  $R(x):=MS^{-1}R_1S\Delta(x)-MS^{-1}R_2S$. Then, there exist
  $J,L\subset I:=(1,\dots,k)$ with $|J|=|L|=\lfloor
  \frac{\tilde{k}+1}2\rfloor-(\tilde{k}-\rank(R_1))$ such that
  \[[J;L]_{R(x)}=\mu \prod_{i\in K} (x^{[i]}-\mu_i),\] where
  $K=J\cap L$, $\mu=[J\setminus K;L\setminus K]_{R(0)}\in \F_{q^k}^*$
  and $\displaystyle \mu_i=\frac{[J\setminus (i);L\setminus
    (i)]_{R(0)}}{[J\setminus K;L\setminus K]_{R(0)}}\in \F_{q^k}$. In
  particular if $\mu\in \F_{q^k}$ is such that $\rank (R(\mu))\leq
  \frac{\tilde{k}-1}2$, then
  \[\mu\in \left\{\mu_i^{[k-i]}\mid i\in K\right\}.\]
\end{theorem}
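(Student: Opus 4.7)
The plan is as follows. First set $N_1 := MS^{-1}R_1 S$ and $N_2 := MS^{-1}R_2 S$, so that $R(x) = N_1\Delta(x) - N_2$ and $(N_1 \mid N_2)$ is the row reduced echelon form of $MS^{-1}(R_1 \mid R_2)S$. Because $\rank(N_1) = \rank(R_1)$, exactly $\rank(R_1)$ of the $\tilde{k}$ pivots of $(N_1\mid N_2)$ lie in $N_1$ and the remaining $\tilde{k}-\rank(R_1)$ lie in $N_2$. I would then read the tuples $J$ and $L$ directly off this pivot pattern: pick $J$ to collect $s = \lfloor(\tilde k+1)/2\rfloor - (\tilde k - \rank(R_1))$ of the pivot rows, and pick $L$ so that $K = J \cap L$ selects the corresponding pivot columns of $N_1$, arranged so that on $K \times K$ the block $(N_1)_{K,K}$ is the identity.

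The second step is to expand $[J;L]_{R(x)}$ multilinearly in the Frobenius-power variables sitting on the diagonal of $\Delta(x)$. Since each column of $R(x)$ depends linearly on a single such variable, the determinant splits as
\[
[J;L]_{R(x)} \;=\; \sum_{U \subseteq L} \mathbf{a}_U \,\prod_{l \in U} \Delta(x)_{l,l},
\]
where each $\mathbf{a}_U$ is, up to sign, a $|J|\times|J|$ determinant that splices columns of $N_1$ indexed by $U$ with columns of $-N_2$ indexed by $L \setminus U$. The RREF structure is then used to prove that $\mathbf{a}_U = 0$ whenever $U \not\subseteq K$: picking up a column from $N_1$ outside $K$ forces, together with the accompanying $N_2$ columns, a row dependence in the spliced submatrix, because the pivot row of that column is orthogonal to the pivot positions contributed by $N_2$. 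After this reduction $[J;L]_{R(x)}$ becomes a multilinear polynomial of degree $|K|$ in exactly the variables $\{\Delta(x)_{l,l}\}_{l \in K}$.

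To factor this reduced polynomial I would apply Lemma~\ref{l:decomp} with $s = |K|$: the polynomial decomposes into linear factors if and only if the identities~\eqref{e:decomp_cond} hold on the remaining coefficients $\{\mathbf{a}_U\}_{U \subseteq K}$. Using $(N_1)_{K,K} = I$, each such $\mathbf{a}_U$ simplifies to a minor of $R(0) = -N_2$ on rows $J \setminus (K \setminus U)$ and columns $L \setminus (K \setminus U)$. By Proposition~\ref{p:rel_ideal}, the defect of each required identity lies in the ideal $\II_{|K|+1}$ generated by the non-diagonal $(|K|+1)$-minors of $R(0)$, and the unique decodability hypothesis, via Theorem~\ref{t:unique_dec_sol}, guarantees via the rank bound $\rank(R(\mu)) \leq (\tilde k -1)/2$ that these obstructing minors vanish. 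Lemma~\ref{l:decomp} then delivers the factorization, and matching coefficients identifies $\mathbf{a}_K$ with $[J \setminus K; L \setminus K]_{R(0)}$ and the ratios $\mathbf{a}_{K \setminus (i)}/\mathbf{a}_K$ with the asserted $\mu_i$. The ``in particular'' clause is immediate: if $\rank(R(\mu)) \leq (\tilde k -1)/2$ then $[J;L]_{R(\mu)} = 0$, so some factor vanishes, giving $\mu^{[i]} = \mu_i$ and hence $\mu = \mu_i^{[k-i]}$ via $\mu^{[k]} = \mu$.

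The main obstacle I anticipate is the vanishing of the cross coefficients $\mathbf{a}_U$ for $U \not\subseteq K$: it demands a careful case-split on the pivot structure of $(N_1 \mid N_2)$ to pin down exactly which row dependencies appear. A secondary subtlety is the translation of Proposition~\ref{p:rel_ideal} from its generic polynomial-ring setting into the present specialized context, where one must confirm that the residual non-diagonal $(|K|+1)$-minors of $R(0)$ genuinely vanish under the rank hypothesis supplied by unique decodability.
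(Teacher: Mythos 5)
Your overall architecture (multilinear expansion of the minor in the diagonal variables, Lemma~\ref{l:decomp} for the factorization, Proposition~\ref{p:rel_ideal} together with the constancy of off-diagonal minors to verify the coefficient identities) is the same as the paper's, but the construction of $J$ and $L$ --- which is the actual content of the theorem --- is missing, and this is a genuine gap. You choose $J$ and $L$ from the pivot pattern so that $(N_1)_{K,K}=I$, but you never specify $J\setminus K$ and $L\setminus K$, and nothing in your construction guarantees the two properties on which everything else rests: that the leading coefficient $\mu=[J\setminus K;L\setminus K]_{R(0)}$ is \emph{nonzero} (otherwise the asserted factorization is the vacuous identity $0=0$ and the $\mu_i$ are undefined), and that the off-diagonal minors appearing as obstructions in Proposition~\ref{p:rel_ideal} actually vanish. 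The paper secures both at once by first passing to the square submatrix $R'(x)$ on $I'=I\setminus(l_1,\dots,l_{\tilde k-r_1},r_1+1,\dots,k)$ --- where the variables sit only on the diagonal, so off-diagonal minors are constants --- and then taking $J\setminus K=J'$, $L\setminus K=L'$ to be a \emph{maximal} nonvanishing off-diagonal minor, i.e.\ one realizing $\mathrm{ndrank}(R'(x))$ (this is what Algorithm~\ref{a:alg_min} computes). Maximality is exactly what kills the obstruction terms $[J'\cup(j);L'\cup(l)]$. Your appeal to the rank bound $\rank(R(\mu))\le\frac{\tilde k-1}{2}$ cannot replace it: the obstructing minors you name have size $|K|+1$, which can be as small as $2$, while $R(0)=-MS^{-1}R_2S$ has rank $r_2>\frac{\tilde k-1}{2}$, so its small off-diagonal minors certainly do not all vanish. (The rank bound at $\mu$ does kill off-diagonal minors of $R'(x)$ of size exceeding $\rank(R'(\mu))\le\frac{2r_1-\tilde k-1}{2}$, since those are independent of $x$ --- that is how the paper bounds the non-diagonal rank and shows $K\neq\emptyset$ --- but that is a statement about a much larger size than $|K|+1$.)

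A second, smaller gap: your ``in particular'' clause is not immediate when $\rank(R_1)<\tilde k$. The minor $[J;L]_{R(\mu)}$ has size $\lfloor\frac{\tilde k+1}{2}\rfloor-(\tilde k-r_1)<\lfloor\frac{\tilde k+1}{2}\rfloor$, so $\rank(R(\mu))\le\frac{\tilde k-1}{2}$ does not by itself force it to vanish. One needs the identity $[J;L]_{R'(x)}=[J\cup(r_1+1,\dots,\tilde k);\,L\cup(l_1,\dots,l_{\tilde k-r_1})]_{R(x)}$, obtained by expanding along the pivot columns of $MS^{-1}R_2S$, to realize this small minor of the reduced matrix as a minor of $R(\mu)$ of the full size $\lfloor\frac{\tilde k+1}{2}\rfloor$. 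The same reduction to $R'(x)$ also makes your claim that only the variables indexed by $K$ survive (your ``$\mathbf a_U=0$ for $U\not\subseteq K$'') automatic rather than a delicate case analysis, since inside the identity block of $N_1$ the variable $x^{[j-1]}$ occurs in row $j$ only.
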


\begin{proof}
  We first focus on the form of the matrix $R(x)$. Let
  $r_i:=\rank(R_i)$ for $i=1,2$. We deduce by Corollary
  \ref{c:consecutive} that the pivots of the matrix $MS^{-1}\mat{R_1 &
    R_2}S$ are contained in the first $r_1$ columns and, since $\dim
  \Rvs =\tilde{k}$, in a choice of $\tilde{k}-r_1$ of the first $r_2$
  columns of $MS^{-1}R_2S$. Figure \ref{f:f1} and Figure \ref{f:f2} at
  page \pageref{f:f1} depict respectively the matrix $MS^{-1}\mat{R_1
    & R_2}S$ and $R(x)$.

\medskip

\begin{figure}[htbp]
  \begin{center}
    \input{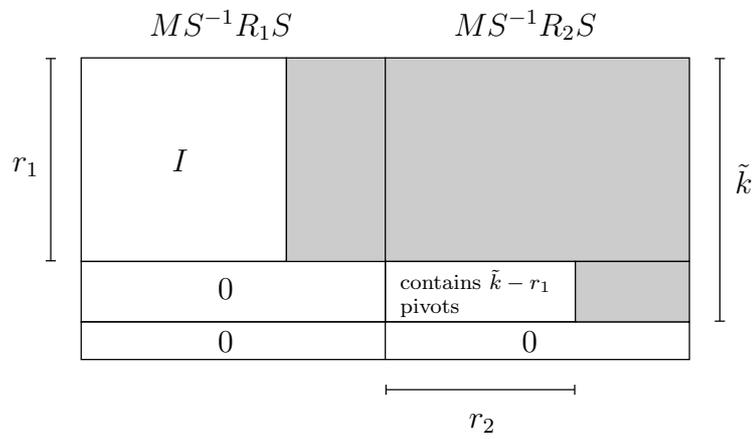}
  \end{center}
\caption{Representation of the matrix $MS^{-1}(R_1 \quad R_2)S$.}\label{f:f1}
\end{figure}

  
\begin{figure}[htbp]
  \begin{center}
    \input{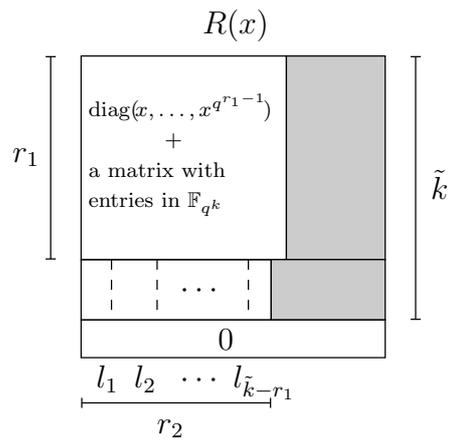}
\caption{Representation of the matrix $R(x)$.}\label{f:f2}
  \end{center}
\end{figure}

  $(l_1,\dots,l_{\tilde{k}-r_1})\subset I$ is the tuple of indices of
  the columns cor\-res\-pon\-ding to the pivots of $MS^{-1}R_2S$. Hence, for
  all $i\in \{1,\dots,\tilde{k}-r_1\}$ the entries of columns $l_i$ of $R(x)$
  are all zero except for the entry $l_i$, which is $x^{[l_i-1]}$,
  and the entry $r_1+i$, which is $1$.

  Now consider the square submatrix $R'(x)$ of $R(x)$ of size $2r_1-\tilde{k}$
  defined by the rows and columns indexed by
  \[I':=I\setminus(l_1,\dots,l_{\tilde{k}-r_1},r_1+1,\dots,k).\] 
  The matrix $R'(x)$ is a matrix containing unknowns only in the diagonal
  entries. 

  Let $\subm{J}{L}{R'(x)}$ be a submatrix of $R'(x)$, then it holds that
  \[\Min{J}{L}{R'(x)}=\Min{J\cup (r_1+1,\dots,\tilde{k})}{L\cup (l_1,\dots,
    l_{\tilde{k}-r_1})}{R(x)}.\]


  Let $\mu\in \F_{q^k}$ be the unique element satisfying condition
  \eqref{e:rank_cond}, by the previous relation it holds that
  \begin{equation}\label{e:2r_1}\rank(R'(\mu))\leq
    \frac{\tilde{k}-1}2-(\tilde{k}-r_1)=\frac{2r_1-\tilde{k}-1}2.\end{equation} 
  This implies that $\mu$ is a root of all $\Min{J}{L}{R'(x)}$ such that
  $|J|=|L|=\lfloor\frac{\tilde{k}+1}2\rfloor-(\tilde{k}-r_1)$.

  Let $J',L'\subset I'$ be tuples of indices such that
  \begin{eqnarray}\label{e:cond_alg}
    &J'\cap L'=\emptyset, \mbox{ }
    \Min{J'}{L'}{R'(x)}\neq 0 \mbox{, and } \nonumber\\ &\Min{J'\cup
      (j)}{L'\cup (l)}{R'(x)}=0\mbox{ for any } j\neq l\in I'\setminus
    (J'\cup L').
  \end{eqnarray} 
  The existence of a couple of tuples satisfying these conditions is
  ensured by the definition of $\mathrm{ndrank}(R'(x))$.  

  Let $K\subset I'\setminus(J'\cup L')$ with $|K|=\lfloor \frac{\tilde{k}+1}2
  \rfloor-(\tilde{k}-r_1)-|J'|$. $K$ is non empty since by \eqref{e:2r_1} 
  \[|K|\geq \lfloor \frac{\tilde{k}+1}2 \rfloor - (\tilde{k}-r_1) -
  \frac{2r_1-\tilde{k}-1}2=\lfloor
  \frac{\tilde{k}+1}2\rfloor-\frac{\tilde{k}-1}2 >0.\] Define
  $J:=J'\cup K \mbox{ and } L:=L'\cup K.$

  Combining conditions \eqref{e:cond_alg} and Proposition~\ref{p:rel_ideal} we
  obtain that
  \[\Min{J}{L}{}\Min{I'}{I'}{}-\Min{J\cup (i)}{L\cup
    (i)}{}\Min{I'\setminus (i)}{I'\setminus (i)}{}=0\]
  for $i\in K$. It follows by Lemma \ref{l:decomp} that the
  polynomial $\Min{J}{L}{}$ factors as follows
  \[\Min{J}{L}{R(x)}=\Min{J\setminus K}{L\setminus K}{R(0)}\prod_{i\in K}
  \left( x^{[i]}-\mu_i\right).\]
    with $\mu_i=\frac{\Min{J\setminus (i)}{L\setminus (i)}{R(0)}}{
      \Min{J\setminus K}{L\setminus K}{R(0)}}$ and $\mu\in
  \left\{\mu_i^{[k-i]}\mid i\in K\right\}$. 
\end{proof}

Summarizing, the decoding algorithm that we obtain exploiting
Theorem~\ref{t:suitable_poly} is as follows:
\begin{enumerate}
\item Find tuples $J,L$ satisfying the assumptions \eqref{e:cond_alg}
  of the theorem. Algorithm \ref{a:alg_min} in Section
  \ref{s:algorithm} gives an efficient way to find such tuples. 
\item Write down the roots of the minor $[J,L]_{R(x)}$, where $R(x)$ is
  the matrix in the statement of the
  theorem. Theorem~\ref{t:suitable_poly} gives explicit formulas for
  the roots, so this step requires a negligible amount of computation.
\item For each root $\mu$ found in the previous step, check whether
  the rank of $R(\mu)$ is smaller than or equal to
  $\lfloor\frac{\tilde{k}-1}2\rfloor$. 
\item If the unique decodability assumption is satisfied, exactly one
  root $\mu$ will satisfy the rank condition in the previous
  step. In this case, we decode to $\rs\mat{I & S\Delta(\mu)S^{-1}}$.
\item Else, none of the roots will. In this case, we have a decoding
  failure. 
\end{enumerate}

\medskip

We now give the detailed minimum-distance decoding algorithm in pseudocode. 

\begin{algorithm}\label{a:dec_alg}
  \caption{Minimum-distance decoding algorithm: $n=2k$}
  \Input{$\Rvs=\rs\mat{R_1 & R_2}\in \Gr{\F_q}{\tilde{k}}{2k}$ with
    $\mat{R_1 & R_2}\in \F_q^{k\times n}$,\\
    $P\in GL_k(\F_q)$ the companion matrix of $p\in \F_q[x]$ and \\
    $S\in GL_k(\F_{q^k})$ its diagonalizing matrix.} 
  \Output{$\Cvs \in\Svs \subset \Gr{\F_q}{k}{n}$ such that
    $d(\Rvs,\Cvs)<k$, if such a $\Cvs$ exists.}
  \BlankLine
  Let $r_i:=\rank(R_i)$ for $i=1,2$.\\
  \Begin(1.){
\If{either $r_1=k$ and $S^{-1}R_1^{-1}R_2S$ is diagonal
      or $r_1=0$ and $r_2=k$}{
      \Return{$\Rvs\in \Svs$\;}
    }
  }
  \Begin(2.){
\lIf{either $r_1\leq \frac{\tilde{k}-1}2$ or $r_2\leq \frac{\tilde{k}-1}2$}{go to
     3.}\\
    \lElse{go to 4.}
  }
  \Begin(3. \fbox{Case $r_1\leq \frac{\tilde{k}-1}2$}
  \tcp*[f]{the case $r_2\leq \frac{\tilde{k}-1}2$ is analogous.})
  {
    \Return{$\rs\mat{0 & I}$\;}
  }  

  \Begin(4. \fbox{Case $\frac{\tilde{k}-1}2<r_2\leq r_1\leq
    \tilde{k}$} \tcp*[f]{the case $r_1\leq r_2$ is analogous.}){ 
    Find $M\in GL_k(\F_{q^k})$ such that $MS^{-1}(R_1 \ R_2)S$ is in
    row reduced echelon form\;
    $R(x):=MS^{-1}R_1S\Delta(x)-MS^{-1}R_2S$\; 
    Let $l_1,\dots,l_{\tilde{k}-r_1}\in \{1,\dots,k\}$ the columns of the
    pivots of $MS^{-1}R_2S$\; 
    Let $I':=(1,\dots,k)\setminus (l_1,\dots,l_{\tilde{k}-r_1},r_1+1,\dots,
    k)$\;
    Find $J',L'\subset I'$ satisfying Condition \eqref{e:cond_alg} and set $s:=|J|$\;  
    Let $K\subset I'\setminus (J'\cup L')$ with $|K|=\lfloor
    \frac{\tilde{k}+1}2\rfloor-\tilde{k}+r_1-s$\; 
    $\mu_i:=\left(\frac{\Min{J'\cup (i)}{L'\cup
          (i)}{R(0)}}{\Min{J'}{L'}{R(0)}}\right)^{[k-i]}$ for $i\in
    K$\; 
    \If{there exists an $i\in K$ such that $\rank(R(\mu_i))\leq
      \frac{\tilde{k}-1}2$}{ \Return $\rs\mat{I & S\Delta(\mu_i)S^{-1}}$\;
      \lElse{\Return there exists no $\Cvs\in \Svs$ such that
        $d(\Rvs,\Cvs)<k$;} } }
\end{algorithm}


\newpage

\subsection{A very efficient decoding algorithm 
for the case $R_1$ non singular}\label{ss:non_sing}

In this subsection, we focus on the case where the received word
$\Rvs=\rs \mat{R_1 & R_2}\in\Gr{\F_q}{k}{n}$ satisfies $R_1\in
GL_k(\F_q)$. In this case, we simplify the decoding algorithm and make
its complexity essentially negligible.

\medskip

We start by establishing the mathematical background.
Under the assumption that the matrix $R_1$ is invertible, an
alternative form of Theorem~\ref{t:unique_dec_sol} holds. 

\begin{proposition}\label{p:unique_non_sing}
  Let $\Rvs\in \Gr{\F_q}{k}{n}$ be a subspace with \[\frac{k-1}2 <
  \rank(R_2)\leq \rank(R_1)=k.\] The following are equivalent:
  \begin{itemize}
    \item $\Rvs$ is uniquely decodable.
  \item There exists a unique $\mu\in \F_{q^k}$ such that
    \begin{equation*}
      \rank(\Delta(\mu)-S^{-1}R_1^{-1}R_2S)=\mathrm{ndrank}
      (S^{-1}R_1^{-1}R_2S). 
    \end{equation*}
  \end{itemize}
\end{proposition}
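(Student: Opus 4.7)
The plan is to deduce Proposition~\ref{p:unique_non_sing} by specialising Theorem~\ref{t:unique_dec_sol} to $R_1\in GL_k(\F_q)$ and then swapping the rank bound $\rank(\Delta(\mu)-N)\le(k-1)/2$ for the sharper equality $\rank(\Delta(\mu)-N)=\mathrm{ndrank}(N)$ via Corollary~\ref{t:ndrank}.

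First, the invertibility of $R_1$ forces $\tilde k=\dim\Rvs=k$, and $S^{-1}R_1S\in GL_k(\F_{q^k})$. I would therefore factor
\[S^{-1}R_1S\,\Delta(\mu)-S^{-1}R_2S=(S^{-1}R_1S)\bigl(\Delta(\mu)-N\bigr),\qquad N:=S^{-1}R_1^{-1}R_2S,\]
so $\rank(S^{-1}R_1S\,\Delta(\mu)-S^{-1}R_2S)=\rank(\Delta(\mu)-N)$. Theorem~\ref{t:unique_dec_sol} then rephrases unique decodability of $\Rvs$ as the existence of a unique $\mu\in\F_{q^k}$ with $\rank(\Delta(\mu)-N)\le(k-1)/2$.

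Next, two structural observations. (i) Any minor $[J;L]$ with $J\cap L=\emptyset$ depends only on off-diagonal entries of the matrix, and the off-diagonal entries of $\Delta(\mu)-N$ coincide with those of $-N$; hence $[J;L]_{\Delta(\mu)-N}=(-1)^{|J|}[J;L]_N$ and $\mathrm{ndrank}(\Delta(\mu)-N)=\mathrm{ndrank}(N)$ for every $\mu$. (ii) By Proposition~\ref{diag}, $\mu\mapsto X:=S\Delta(\mu)S^{-1}$ is a bijection $\F_{q^k}\to\F_q[P]$; the matrix $Y:=X-R_1^{-1}R_2\in\F_q^{k\times k}$ satisfies $S^{-1}YS=\Delta(\mu)-N$, so $\rank(Y)=\rank(\Delta(\mu)-N)$. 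Whenever this common value is at most $(k-1)/2$, Corollary~\ref{t:ndrank} applied to $Y$ yields $\mathrm{ndrank}(S^{-1}YS)=\rank(Y)$, which combined with (i) gives
\[\rank(\Delta(\mu)-N)=\mathrm{ndrank}(\Delta(\mu)-N)=\mathrm{ndrank}(N).\]

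Putting (i) and (ii) together produces the pointwise equivalence, for any $\mu\in\F_{q^k}$,
\[\rank(\Delta(\mu)-N)\le\tfrac{k-1}{2}\ \Longleftrightarrow\ \rank(\Delta(\mu)-N)=\mathrm{ndrank}(N)\ \text{and}\ \mathrm{ndrank}(N)\le\tfrac{k-1}{2}.\]
The equivalence of the two formulations of unique decodability then follows by tracking uniqueness: the unique $\mu_0$ supplied by either description satisfies the other, and a second solution to one of the two conditions would produce a second solution to the other via the displayed equivalence. The main bookkeeping concerns the boundary case $\mathrm{ndrank}(N)=k/2>(k-1)/2$ (possible only for $k$ even): then (a) is automatically false since $\rank(\Delta(\mu)-N)\ge\mathrm{ndrank}(N)>(k-1)/2$ for every $\mu$, and I would rule out that (b) could hold with a unique $\mu_0$ by arguing, using the specific shape of $\Delta(\mu)-N$, that $\{\mu:\rank(\Delta(\mu)-N)=k/2\}$ cannot be a singleton under the hypothesis $\rank(R_2)>(k-1)/2$. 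I expect this borderline verification to be the only delicate step; the rest is a direct combination of Theorem~\ref{t:unique_dec_sol} and Corollary~\ref{t:ndrank}.
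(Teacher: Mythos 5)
Your proposal follows the paper's own route: specialize Theorem~\ref{t:unique_dec_sol} using the invertibility of $R_1$ to reduce the rank condition to $\rank(\Delta(\mu)-N)\leq\frac{k-1}{2}$ with $N=S^{-1}R_1^{-1}R_2S$, then apply Corollary~\ref{t:ndrank} to the $\F_q$-matrix $S\Delta(\mu)S^{-1}-R_1^{-1}R_2$ and use the fact that $\mathrm{ndrank}$ only sees off-diagonal entries, so that $\mathrm{ndrank}(\Delta(\mu)-N)=\mathrm{ndrank}(N)$. In fact you spell out the logic (the pointwise equivalence and the bookkeeping of uniqueness in both directions) considerably more explicitly than the paper's two-line proof does, and all of the steps you do carry out are correct.

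The one step you announce but do not carry out --- ruling out, for $k$ even and $\mathrm{ndrank}(N)=k/2$, that $\{\mu:\rank(\Delta(\mu)-N)=k/2\}$ is a singleton --- is genuinely the delicate point, and you should be aware that the paper's proof simply ignores it: the printed argument only verifies that the unique decodable $\mu$ satisfies $\rank(\Delta(\mu)-N)=\mathrm{ndrank}(N)$, and says nothing about the converse implication when $\mathrm{ndrank}(N)>\frac{k-1}{2}$. Two remarks on how much this matters. First, the borderline case affects only the implication from the second bullet to the first, and the only place the proposition is used (Corollary~\ref{c:non_sing}) assumes unique decodability, i.e., uses only the forward direction, where $\mathrm{ndrank}(N)\leq\frac{k-1}{2}$ is automatic. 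Second, the case is not as easy to dispatch as you suggest: for $k=2$ one can close it by counting the $q^k-1$ nonzero vectors of $\Rvs$ over the spread (each codeword $\rs(I\ X)$ absorbs $q^{\dim(\Rvs\cap\Cvs)}-1$ of them and $\rs(0\ I)$ absorbs none, which forces at least three values of $\mu$ with $\rank(\Delta(\mu)-N)=1$ when $\Rvs\notin\Svs$), but this counting argument already fails to be conclusive for $k=4$, and I do not see a short argument from ``the specific shape of $\Delta(\mu)-N$'' for general even $k$. So either restrict the converse direction to the case $\mathrm{ndrank}(N)\leq\frac{k-1}{2}$ (which is all that is needed downstream), or be prepared for this verification to require a substantive separate argument.
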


\begin{proof} 
  By Theorem \ref{t:unique_dec_sol} $\Rvs$ is uniquely decodable if
  and only if there exists a unique $\mu\in \F_{q^k}$ such that
  \[\rank(\Delta(\mu)-S^{-1}R_1^{-1}R_2S)\leq\frac{k-1}2.\]
  Let $A=S\Delta(\mu)S^{-1}$, then by Corollary \ref{t:ndrank} 
  \[\rank(A-R_1^{-1}R_2)=\mathrm{ndrank}(\Delta(\mu)-S^{-1}R_1^{-1}R_2S)=
  \mathrm{ndrank}(S^{-1}R_1^{-1}R_2S).\]
\end{proof}

\medskip

Our improved decoding algorithm relies on the following corollary.

\begin{corollary}\label{c:non_sing}
  Let $\Rvs=\rs\mat{R_1& R_2}\in \Gr{\F_q}{k}{n}$ be uniquely
  decodable with $k=\rank(R_1)\geq \rank(R_2) >
  \frac{k-1}2$ and $S\in GL_k(\F_{q^k})$ a matrix
  dia\-go\-na\-li\-zing $P$. Let
  $R(x):=\Delta(x)-S^{-1}R_1^{-1}R_2S$. Then, for any choice of tuples
  of consecutive indices $J,L\subset (1,\dots,k)$ such that $J\cap
  L=\emptyset$ and $|J|=|L|=\mathrm{ndrank}(S^{-1}R_1^{-1}R_2S)$ it
  holds that for any $i\in (1,\dots,k)\setminus (J\cup L)$
  \[\rank\left(R\left(\left(
        \frac{\Min{J\cup (i)}{L\cup
            (i)}{S^{-1}R_1^{-1}R_2S}}{\Min{J}{L}{S^{-1}R_1^{-1}R_2S}}
      \right)^{[k-i]}\right)\right)\leq \frac{k-1}2.\] 

  Hence the unique $\mu\in\F_{q^k}$ from Proposition
  \ref{p:unique_non_sing} is
  \[\mu=\left( \frac{\Min{J\cup (i)}{L\cup
        (i)}{S^{-1}R_1^{-1}R_2S}}{\Min{J}{L}{S^{-1}R_1^{-1}R_2S}}
  \right)^{[k-i]}\]
  for any choice of $i\in (1,\dots,k)\setminus (J\cup L)$.
\end{corollary}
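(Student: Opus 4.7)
The plan is to identify $\mu$ as the unique root in $\F_{q^k}$ of one explicitly computable linearized polynomial, obtained from a single carefully chosen minor of $R(x)$. Let $N := S^{-1}R_1^{-1}R_2 S$ and $t := \mathrm{ndrank}(N)$. By Proposition~\ref{p:unique_non_sing}, the unique $\mu \in \F_{q^k}$ decoding $\Rvs$ satisfies $\rank(R(\mu)) = t$, and under the unique decodability hypothesis one has $t \leq (k-1)/2$. Since $\rank(R(\mu)) = t$, every $(t+1)\times(t+1)$ minor of $R(\mu)$ vanishes, and I will recover $\mu$ from exactly one such minor.

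Fix disjoint tuples $J, L \subset (1,\dots,k)$ with $|J| = |L| = t$ and $[J;L]_N \neq 0$; their existence is built into the definition of $\mathrm{ndrank}$. Fix any $i \in (1,\dots,k) \setminus (J \cup L)$. The key structural observation is that $R(x) = \Delta(x) - N$ agrees with $-N$ off the diagonal and carries $x$ only on the diagonal. Therefore, in the $(t+1)\times(t+1)$ submatrix indexed by rows $J \cup (i)$ and columns $L \cup (i)$, the only entry that depends on $x$ is the one at the original position $(i,i)$, namely $x^{[i-1]} - N_{ii}$: since $J$, $L$, and $\{i\}$ are pairwise disjoint, no other diagonal entry of $R(x)$ is selected. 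Writing this submatrix as $-N[J\cup(i);L\cup(i)] + x^{[i-1]} E$ for a single-entry matrix $E$, and expanding by multilinearity along the row indexed by $i$, one obtains, up to a global sign,
\[
[J \cup (i);\, L \cup (i)]_{R(x)} \;=\; \pm\bigl( x^{[i-1]}\,[J;L]_N \;-\; [J \cup (i);\, L \cup (i)]_N \bigr).
\]

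Setting $x = \mu$ forces this minor to vanish by the first paragraph, yielding the single Frobenius-type equation $\mu^{[i-1]} = [J\cup(i);L\cup(i)]_N / [J;L]_N$. Since $\mu \in \F_{q^k}$ and $\mu \mapsto \mu^{[i-1]}$ is a bijection of $\F_{q^k}$ (Frobenius has order $k$), this equation has exactly one solution, obtained by applying the appropriate inverse Frobenius power to the right-hand side, which matches the closed-form expression stated in the corollary. The bound $\rank(R(\mu)) \leq (k-1)/2$ then follows from $t \leq (k-1)/2$ together with Proposition~\ref{p:unique_non_sing}. The main point that requires care is ruling out spurious roots, but bijectivity of the Frobenius map combined with the uniqueness clause of Proposition~\ref{p:unique_non_sing} guarantees that the single root of that single minor is already the correct $\mu$; no cross-check against further minors is required. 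This is precisely what makes the resulting decoding algorithm so efficient when $R_1$ is non-singular.
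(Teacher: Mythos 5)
Your computation of the minor $\Min{J\cup(i)}{L\cup(i)}{R(x)}$, and the way you solve the resulting Frobenius-affine equation and then invoke the uniqueness clause of Proposition~\ref{p:unique_non_sing} to identify its single root with the decoded $\mu$, coincide with the paper's argument. The genuine gap is in how you guarantee that the denominator $\Min{J}{L}{S^{-1}R_1^{-1}R_2S}$ is nonzero. You fix disjoint tuples with nonvanishing minor, whose existence is indeed built into the definition of $\mathrm{ndrank}$; but the corollary asserts the formula for \emph{every} pair of disjoint tuples of \emph{consecutive} indices of size $t=\mathrm{ndrank}(S^{-1}R_1^{-1}R_2S)$, and a priori such a consecutive pair could have vanishing minor, in which case your chosen $(t+1)\times(t+1)$ minor of $R(x)$ would be identically zero and would determine nothing. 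The consecutiveness hypothesis is precisely what removes any search from the algorithm, and it has to be earned: the paper derives the nonvanishing from Corollary~\ref{t:ndrank} applied to $R(\mu)=S^{-1}\left(S\Delta(\mu)S^{-1}-R_1^{-1}R_2\right)S$, which is $S$-conjugate to a matrix over $\F_q$ of rank $t\leq\frac{k-1}{2}$; hence all its minors of size $t$ on consecutive index sets are nonzero, and for disjoint $J,L$ one has $\Min{J}{L}{R(\mu)}=\pm\Min{J}{L}{S^{-1}R_1^{-1}R_2S}$. Without this step (which ultimately rests on Theorem~\ref{t:reg_mat}) your argument only shows that \emph{some} disjoint pair works, not that any consecutive pair does.

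A smaller point: you correctly read off the diagonal entry of $\Delta(x)$ in position $(i,i)$ as $x^{[i-1]}$, so your equation is $\mu^{[i-1]}=\Min{J\cup(i)}{L\cup(i)}{N}/\Min{J}{L}{N}$, whose unique solution is the right-hand side raised to the power $[k-i+1]$, not $[k-i]$. The stated exponent $[k-i]$ matches the paper's own proof, which writes the variable entry as $x^{[i]}$; one of the two conventions is off by one, so your claim that your computation ``matches the closed-form expression stated in the corollary'' is not literally true under your own indexing. This discrepancy is inherited from the paper rather than being a flaw in your reasoning, but it should be reconciled explicitly rather than waved through.
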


\begin{proof}
  By Proposition \ref{p:unique_non_sing}, there exists a unique $\mu$
  for which
  \[\rank(R(\mu))=\mathrm{ndrank}(S^{-1}R_1^{-1}R_2S) \leq
  \frac{k-1}2.\] 
  Hence it suffices to consider minors of $R(x)$ of size
  $\mathrm{ndrank}(S^{-1}R_1^{-1}R_2S)+1$.

  By Corollary \ref{t:ndrank}, the minor 
  \[\Min{J\cup (i)}{L\cup (i)}{R(x)}=\Min{J}{L}{S^{-1}R_1^{-1}R_2S}
  x^{[i]}-\Min{J\cup (i)}{L\cup (i)}{S^{-1}R_1^{-1}R_2S}\]
  is not identically zero. Hence the root 
  \[\mu=\left( \frac{\Min{J\cup (i)}{L\cup
        (i)}{S^{-1}R_1^{-1}R_2S}}{\Min{J}{L}{S^{-1}R_1^{-1}R_2S}}
  \right)^{[k-i]}\]
  makes $\rank(R(\mu))=\mathrm{ndrank}(S^{-1}R_1^{-1}R_2S)$. By
  Proposition~\ref{p:unique_non_sing}, $\mu$ yields the unique solution
  to the decoding problem.
\end{proof}

\begin{remark}
  The previous corollary allows us to design a more efficient decoding
  algorithm than the one presented in~\cite{ma08p}, since it does not
  require the use of the Euclidean Algorithm. More precisely, it allows us to
  find a minor (in fact, many of them) whose roots can be directly
  computed via an explicit formula. Practically, this makes the
  decoding complexity negligible. 
\end{remark}

\begin{algorithm}\label{a:non_sing}
  \caption{Minimum-distance decoding algorithm: $n=2k$, $R_1$ non-singular}
  \Input{$\Rvs=\rs\mat{R_1 & R_2}\in \Gr{\F_q}{k}{2k}$ with either
    $\rank(R_1)=k$ or $\rank(R_2)=k$ \\
    $P\in GL_k(\F_q)$ the companion matrix of $p\in \F_q[x]$ and \\
    $S\in GL_k(\F_{q^k})$ its diagonalizing matrix.} 
  \Output{$\Cvs \in\Svs \subset \Gr{\F_q}{k}{n}$ such that
    $d(\Rvs,\Cvs)<k$, if such a $\Cvs$ exists.}
  \BlankLine
  Let $r_i:=\rank(R_i)$ for $i=1,2$.\\
  \Begin(1.){
    \If{either $r_1=k$ and $S^{-1}R_1^{-1}R_2S$ is diagonal
      or $r_1=0$ and $r_2=k$}{
      \Return{$\Rvs\in \Svs$\;}
    }
  }
  \Begin(2.)
  {
    \lIf{either $r_1\leq \frac{k-1}2$ or $r_2\leq \frac{k-1}2$}{go to
     3.}\\
    \lElse{go to 4.}
  }
  \Begin(3. \fbox{Case $r_1\leq \frac{k-1}2$}
  \tcp*[f]{the case $r_2\leq \frac{k-1}2$ is analogous.})
  {
    \Return{$\rs\mat{0 & I}$\;}
  }  
\Begin(4. \fbox{Case $r_1=k$} \tcp*[f]{the case 
    $r_2=k$ is analogous.}){
    $R(x):=\Delta(x)-S^{-1}R_1^{-1}R_2S$\;
    $s:=\rank\subm{(1,\dots,\lfloor\frac{k-1}2\rfloor)}{(k-\lfloor
      \frac{k-1}2\rfloor+1,\dots,k)}{R(0)}$\;
    $\mu:=\frac{\Min{(1,2,\dots,s+1)}{(1,k-s,\dots,k)}{R(0)}}{\Min{(2,\dots,
        s+1)
      }{(k-s,\dots,k)}{R(0)}}$\;
    \If{$\rank\left(R\left(\mu\right)\right)\leq \frac{k-1}2$}{
      \Return{$\rs\mat{I & S\Delta(\mu)S^{-1}}\in
        \Svs$\;} \lElse{\Return there exists no $\Cvs\in \Svs$ such that
        $d(\Rvs,\Cvs)<k$;}}
  }

\end{algorithm}

\section{Algorithms Complexities} \label{s:algorithm}

In this section, we compute the complexity of some algorithms that we
gave in the previous section. 

We start by specifying an algorithm for finding tuples $J',L'\subset I'$
needed in Step 4 of Algorithm~\ref{a:dec_alg}. The algorithm performs
only row operations. The pseudocode is given in
Algorithm~\ref{a:alg_min}, while correctness is proved in the next
lemma.

\begin{lemma}
Let $M\in \F_q^{k\times k}$ be a non diagonal
matrix. Algorithm~\ref{a:alg_min} finds two tuples $J,L\subset
(1,\dots,k)$ such that $J,L\neq \emptyset$, $J\cap L=\emptyset$,
$\Min{J}{L}{}\neq 0$ and $\Min{J\cup (j)}{L\cup (l)}{}=0$ for any
$j,l\in (1,\dots,k)\setminus (J\cup L)$, $j\neq l$.
\end{lemma}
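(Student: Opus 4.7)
The plan is to prove correctness by maintaining invariants through the iterations of Algorithm~\ref{a:alg_min}, which I interpret as an iterative pivoting procedure. At step $t$ the algorithm holds tuples $J_t, L_t \subseteq (1,\dots,k)$ and a matrix $\tilde M_t = E_t M$, where $E_t$ is a product of shears $I - c\, e_i e_m^T$ with $i \notin J_t$ and $m \in J_t$. It searches for an entry $\tilde M_t[j,l] \neq 0$ with $j, l \in (1,\dots,k) \setminus (J_t \cup L_t)$ and $j \neq l$; if found, it appends $j$ to $J_t$, $l$ to $L_t$, and clears column $l$ outside row $j$ with a new shear; otherwise it halts and returns $(J_t, L_t)$.

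I would maintain four invariants: (i) $|J_t| = |L_t| = t$ and $J_t \cap L_t = \emptyset$; (ii) $\tilde M_t[i, l] = 0$ for all $i \notin J_t$, $l \in L_t$; (iii) the rows of $\tilde M_t$ indexed by $J_t$ coincide with those of $M$, so in particular $\tilde M_t[J_t; L_t] = M[J_t; L_t]$; and (iv) $[J_t;L_t]_M \neq 0$. Invariants (i)--(iii) are straightforward to verify inductively: the search condition ensures (i); the new clearing, together with the fact that $\tilde M_{t-1}[j_t, l] = 0$ for $l \in L_{t-1}$ (since $j_t \notin J_{t-1}$, by (ii) at step $t-1$), preserves (ii); and the shear modifying only a row outside $J_t$ using a row inside $J_t$ preserves (iii).

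The crux of the argument is the minor identity
\[ [J_t \cup (j);\, L_t \cup (l)]_M \;=\; [J_t;\, L_t]_M \cdot \tilde M_t[j, l], \]
valid for $j, l \notin J_t \cup L_t$ with $j \neq l$. I would establish it in two steps: first, Cauchy--Binet applied to $\tilde M_t[J_t \cup (j);\, L_t \cup (l)] = E_t[J_t \cup (j); *] \cdot M[*; L_t \cup (l)]$ collapses to a single term, since the special shape of $E_t$ forces $\det E_t[J_t \cup (j); S] = 0$ for every $S \neq J_t \cup (j)$ and equals $1$ for $S = J_t \cup (j)$; thus minors of $\tilde M_t$ on index sets containing $J_t$ agree with the corresponding minors of $M$. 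Second, invariants (ii) and (iii) together make $\tilde M_t[J_t \cup (j);\, L_t \cup (l)]$ block upper-triangular with diagonal blocks $M[J_t; L_t]$ and $\tilde M_t[j, l]$, so its determinant factors as claimed. The identity shows simultaneously that a successful extension preserves (iv) and that at termination every admissible minor $[J \cup (j);\, L \cup (l)]_M$ vanishes, which is the maximality condition.

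Non-emptiness follows because $M$ being non-diagonal supplies some $M_{j_0, l_0} \neq 0$ with $j_0 \neq l_0$, so the first iteration succeeds. Termination is automatic from $|J_t| + |L_t| \leq k$. The main obstacle will be the Cauchy--Binet bookkeeping: verifying that $\det E_t[J_t \cup (j); S]$ collapses as claimed requires examining the block structure of $E_t$ restricted to these rows and noting that any choice $S \neq J_t \cup (j)$ produces a vanishing column or row in $E_t[J_t \cup (j); S]$. With that identity in hand, all four conclusions of the lemma---non-emptiness, disjointness, non-vanishing of $[J;L]_M$, and maximality---follow directly from the invariant and the stopping criterion.
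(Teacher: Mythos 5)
Your proof is internally correct, and its engine is the same as the paper's: the identity $[J\cup(j);L\cup(l)]_M=[J;L]_M\cdot \tilde M_t[j,l]$ is precisely what the paper obtains by expanding the corresponding minor of the reduced matrix $N$ along row $j$ (whose entries in the columns of $L$ have been cleared) and then observing that the row operations restrict to unimodular operations on the rows indexed by $J\cup(j)$, so that minors of $N$ and of $M$ on those rows agree. You make both halves explicit --- the transfer of minors via Cauchy--Binet, and the block-triangular factorization via your invariants (ii) and (iii) --- and, unlike the paper, you actually deduce the maximality property from this identity at termination rather than calling it ``a direct consequence of the structure of the algorithm.'' In that respect your write-up is tighter than the original.

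The one substantive issue is that the loop you analyze is not the one printed as Algorithm~\ref{a:alg_min}. You search over \emph{all} pairs $(j,l)$ with $j,l\notin J\cup L$ and $j\neq l$, stopping only when every such entry of $\tilde M_t$ vanishes; the printed algorithm instead scans candidate pivot rows in increasing order and keeps a \emph{single} shrinking index set $K$ that serves simultaneously as the pool of candidate rows and of candidate columns. When row $j$ produces no pivot, the index $j$ is deleted from $K$ also as a column index, so an entry $n_{j,l}$ with $l<j$ and both $j,l$ discarded is never examined, and your termination argument gives no information about $[J\cup(j);L\cup(l)]_M$ for such a pair. This is not a defect of your reasoning but a genuine mismatch: for $k=4$ and $M$ with $m_{1,1}=m_{2,3}=m_{4,1}=1$ and all other entries zero, the printed algorithm returns $J=(2)$, $L=(3)$, yet $[(2,4);(3,1)]_M=\det\left(\begin{smallmatrix}1&0\\0&1\end{smallmatrix}\right)=1\neq 0$. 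Your full-search variant is the version of the procedure for which the stated output specification demonstrably holds; to prove the lemma for the pseudocode as written, the maximality step would still need to be repaired (for instance by keeping separate pools of unused row and column indices), and this is exactly the point that the paper's own proof leaves unargued.
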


\begin{proof} We start by setting $K=(1,\ldots,k)$.
  The algorithm eventually terminates since $|K|$ strictly decreases
  after every cycle of the while loop. Moreover, its complexity is
  bounded by the complexity of the Gaussian elimination algorithm
  which computes the row reduced echelon form of a matrix of
  $\F_q^{n\times n}$ in $\mathcal{O}(\F_q;n^3)$ operations.

  We have to prove that the returned tuples $J,L\subset (1,\dots,k)$
  satisfy the output conditions. Since $M$ is not diagonal,
  $J,L\neq\emptyset$. The emptiness of $J\cap L$ follows from the fact
  that $J,L$ are initialized to $\emptyset$ and each time we modify
  them, we get $J\cup (j)$ and $L\cup (l)$ where $j\neq l$ and $j,l$ are not
  elements of $J\cup L$.

  In order to continue we have to characterize the matrix $N$. The
  matrix changes as soon as we find coordinates $j,l\in I$ with $i\neq
  j$ for which $n_{jl}\neq 0$. The multiplication $PN$ consists of the
  following row operations
  \begin{itemize}
  \item the $i$-th row of $PN$ is the $i$-th row of $N$
    for $i\leq j$, and
  \item the $i$-th row of $PN$ is the $i$-th row of $N$ minus
    $\frac{n_{i,l}}{n_{j,l}}$ times the $j$-th row of $N$, where
    $N=(n_{j,l})_{\substack{1\leq j,l\leq k}}$ for $i>j$.
  \end{itemize}
  It follows that the entries of the $l$-th column of $PN$ are zero as
  soon as the row index is bigger than $j$.

  We claim that after each cycle of the while loop it holds that
  $\Min{J}{L}{N}\neq 0$. We prove it by induction on the cardinality
  of $J$ and $L$. Since the matrix $M$ is not diagonal, the while loop
  will eventually produce tuples $J=(j)$ and $L=(l)$ with $j\neq l$
  such that $\Min{J}{L}{M}\neq 0$. Now suppose that we have $J,L$ such
  that $J,L\neq\emptyset$, $J\cap L=\emptyset$ and $\Min{J}{L}{N}\neq
  0$ and there exist, following the algorithm, entries $j,l\in I$ with
  $j\neq l$ such that $n_{j,l}\neq 0$. From the previous paragraph,
  the only nonzero entry of the row with index $j$ of $\subm{J\cup
    (j)}{L\cup (l)}{N}$, which by construction is the last one, is
  $n_{j,l}$, hence
  \[\Min{J\cup (j)}{L\cup (l)}{N}=n_{j,l}\Min{J}{L}{N}\neq 0.\]
  In order to conclude that $\Min{J}{L}{M}\neq 0$, it is enough to
  notice that the row operations bringing $\subm{J}{M}{M}$ to
  $\subm{J}{M}{N}$ are rank preserving.

  The property of maximality of the minor $\Min{J}{L}{M}$ with respect
  to containment is a direct consequence of the structure of the algorithm.
\end{proof}

\begin{algorithm}\label{a:alg_min}
  \caption{Modified Gaussian elimination}
  \Input{$M\in \F_q^{k\times k}$ non diagonal matrix.} 
  \Output{$J,L\subset (1,\dots,k)$ such that $J,L\neq \emptyset$,
    $J\cap L=\emptyset$, $\Min{J}{L}{}\neq 0$ and $\Min{J\cup
      (j)}{L\cup (l)}{}=0$ for any $j\neq l\in (1,\dots,k)\setminus (J\cup
    L)$.}
  \BlankLine
  $J=L=\emptyset$, $K=(1,\dots,k)$, $j=1$ and $N=(n_{j,l})_{\substack{1\leq
      j,l\leq k}}=M$\;
  \While{$K\neq \emptyset$}
  {
    $t:=0$\;
    \For{$l\in K$ and $l\neq j$}{
      \If{$n_{j,l}\neq 0$ and $t=0$}
      {
        $J=J\cup (j), L=L\cup (l)$ and $K=K\setminus(j,l)$\;
        $P=(p_{j',l'})_{\substack{1\leq j',l'\leq k}}$ such that $p_{i,i}=1$
         for any $i\in \{1,\dots,k\}$, $p_{i,l}=-\frac{n_{i,l}}{n_{j,l}}$
        for any $i\in I$ with $i>j$ and $p_{j',l'}=0$ otherwise\;
        $N=PN$\;
        $t=1$\;
      }
    }
    \lIf{$t=0$}{$K=K\setminus (j)$\;}
     $j=\min K$\;
  }
  \Return{$J,L$\;}
\end{algorithm}

For simplicity, in the following comparisons we give the
minimum-distance decoding complexity only for the case when the
received space $\Rvs\in \Gr{\F_q}{k}{n}$. This is an upper bound for
the complexity in the general case. The precise complexity for the
case when $\Rvs\in \Gr{\F_q}{\tilde{k}}{n}$, $\tilde{k}<k$ may be obtained
via an easy adaptation of our arguments.

\subsubsection*{Complexity of the decoding algorithm}\label{sss:complexity}

Algorithm \ref{a:dec_alg} consists of matrix operations over the
extension field $\F_{q^k}\supseteq\F_q$. The most expensive of such
operations is the computation of the rank of matrices of size $k\times
k$, which can be performed via the Gaussian elimination algorithm. The
complexities then are as follows:
\begin{itemize}
\item The complexity of step 4. is $\mathcal{O}(\F_{q^k};k^3)$, which corresponds to the
  computation of $\rank(R(\mu))$.
\item The complexity of step 5. is $\mathcal{O}(\F_{q^k};k^4)$, which corresponds to the
  computation of $\rank(R(\mu_i))$ for all $i\in K$, where $|K|\leq
  \lfloor\frac{k-1}2\rfloor$.
\end{itemize}

The overall complexity of Algorithm \ref{a:dec_alg} is then
$\mathcal{O}(\F_{q^k};k^4)$. This makes the complexity of Algorithm
\ref{a:dec_alg_n=rk} $\mathcal{O}(\F_{q^k};(n-k)k^3)$. Notice that
computing the rank of the matrices $R_i$ has complexity
$\mathcal{O}(\F_q;(n-k)k^2)$, which is dominated by $\mathcal{O}(\F_{q^k};(n-k)k^3)$. 

\subsubsection*{Comparison with other algorithms and conclusions}

We compare the complexity of Algorithm \ref{a:dec_alg_n=rk} with other
algorithms present in the literature, specifically with the
algorithms discussed in Proposition~\ref{p:compl}. The complexity of
the decoding algorithm contained in \cite{ko08} is
$\mathcal{O}(\F_{q^{n-k}};n^2)$. In order to compare the two complexity
estimates, we use the fact that the complexity of the operations on an
extension field $\F_{q^s}\supseteq\F_q$ is $\mathcal{O}(\F_q;s^2)$. This is a
crude upper bound, and the complexity may be improved in some cases
(see, e.g., \cite{go07u2}). Nevertheless, under this assumption the
decoding algorithm from~\cite{ko08} has complexity $\mathcal{O}(\F_q;n^2(n-k)^2)$.

Following similar reasoning, the complexity of the decoding algorithm
contained in~\cite{si08a} is $\mathcal{O}(\F_q^{n-k};k(n-k))$, i.e.,
$\mathcal{O}(\F_q;k(n-k)^3)$.

We conclude that the minimum-distance decoding algorithm presented in
this paper has lower complexity than the algorithms in~\cite{ko08}
and \cite{si08a}, whenever $k\ll n$. Since this is the relevant
case for the applications, the decoding algorithm
that we propose constitutes usually a faster option for decoding
spread codes. 

\section{Conclusions}

In this paper we exhibit a minimum distance decoding algorithm for
spread codes which performs better than other known decoding
algorithms for RSL codes when the dimension of the codewords is small
with respect to the dimension of the ambient space.

The problem of extending our decoding algorithm to the case when the
dimension of the received space is bigger than the dimension of the
codewords remains open. Another natural question arising from this
work is finding a generalization of the decoding algorithm to a
list decoding algorithm. Theorem~\ref{t:unique_dec_sol} can be easily
extended for this purpose. Yet finding a way to solve the list
decoding problem which requires neither the computation of a
gcd, nor the factorization of a minor is a non trivial task.

\section*{Acknowledgement}
The authors are grateful to Heide Gluesing-Luerssen for useful
discussions and help in the proof of Theorem~\ref{t:reg_mat}.

\def\cprime{$'$} \def\polhk#1{\setbox0=\hbox{#1}{\ooalign{\hidewidth
  \lower1.5ex\hbox{`}\hidewidth\crcr\unhbox0}}}
  \def\polhk#1{\setbox0=\hbox{#1}{\ooalign{\hidewidth
  \lower1.5ex\hbox{`}\hidewidth\crcr\unhbox0}}} \def\cprime{$'$}
  \def\cprime{$'$} \def\cprime{$'$} \def\cprime{$'$}

\end{document}